\documentclass[11pt]{article}
\usepackage[utf8]{inputenc}
\usepackage[margin=0.8in]{geometry}

\usepackage{svg}
\usepackage{xcolor}
\usepackage[]{algorithm}
\usepackage[noend]{algpseudocode}
\usepackage{graphicx,nicefrac}
\usepackage{amsmath,amsthm,amssymb,caption,nicefrac}
\usepackage{subcaption}
\usepackage{dsfont}
\usepackage{bbm}
\usepackage[font={small}]{caption}
\usepackage{thm-restate}
\usepackage[normalem]{ulem}
\usepackage{float}
\usepackage{balance}
\usepackage{url}
\usepackage{pifont}
\usepackage{wrapfig}

\usepackage{array}

\usepackage{enumitem}

\usepackage{enumitem}
\usepackage{xcolor}

\usepackage[hidelinks]{hyperref}
\usepackage{cleveref}

\newcommand{\nc}{\newcommand}
\newcommand{\DMO}{\DeclareMathOperator}
\DeclareMathAlphabet\mathbfcal{OMS}{cmsy}{b}{n}

\newcommand{\Description}{\caption}

\algdef{SE}[REPEATN]{RepeatN}{End}[1]{\algorithmicrepeat\ #1 \textbf{times}}{\algorithmicend}

\usepackage{xcolor}
\definecolor{Gred}{RGB}{219, 50, 54}
\definecolor{Ggreen}{RGB}{60, 186, 84}
\definecolor{Gblue}{RGB}{72, 133, 237}
\definecolor{Gyellow}{RGB}{247, 178, 16}
\definecolor{ToCgreen}{RGB}{0, 128, 0}
\definecolor{myGold}{RGB}{231,141,20}
\definecolor{myBlue}{rgb}{0.19,0.41,.65}
\definecolor{myPurple}{RGB}{175,0,124}

\providecommand{\Comments}{0}
\ifnum\Comments>0
\usepackage[colorinlistoftodos,prependcaption,textsize=scriptsize]{todonotes}\setlength{\marginparwidth}{4cm}
\paperwidth=\dimexpr \paperwidth + 10cm\relax
\oddsidemargin=\dimexpr\oddsidemargin + 5cm\relax
\evensidemargin=\dimexpr\evensidemargin + 5cm\relax
\newcommand{\modified}[1]{\textcolor{blue}{#1}}

\else
\usepackage[disable]{todonotes}
\newcommand{\modified}[1]{{#1}}

\fi

\newcommand{\mytodo}[1]{\ifnum\Comments=1{#1}\fi}

\newcommand{\tableoftodos}{\ifnum\Comments=1 \listoftodos[Comments/To Do's] \fi}

\allowdisplaybreaks

\nc{\MS}{\mathcal{S}}
\nc{\MP}{\mathcal{P}}
\nc{\cM}{\mathcal{M}}
\nc{\cS}{\mathcal{S}}
\nc{\cI}{\mathcal{I}}
\nc{\cA}{\mathcal{A}}
\nc{\tcA}{\tilde{\cA}}

\nc{\MZ}{\mathcal{Z}}
\newcommand{\cX}{\mathcal{X}}
\DMO{\Binom}{Binom}
\newcommand{\E}{\mathbb{E}}
\DMO{\Var}{Var}

\newcommand{\bX}{\mathbf{X}}
\nc{\tbx}{\tilde{\bx}}
\nc{\tbX}{\tilde{\bX}}
\nc{\tZ}{\tilde{Z}}
\nc{\tz}{\tilde{z}}
\newcommand{\bU}{\mathbf{U}}
\nc{\tbU}{\tilde{\bU}}
\newcommand{\bT}{\mathbf{T}}
\nc{\tbT}{\tilde{\bT}}
\newcommand{\bD}{\mathbf{D}}
\nc{\tbD}{\tilde{\bD}}

\newcommand{\bx}{\mathbf{x}}

\newcommand{\R}{\mathbb{R}}
\newcommand{\N}{\mathbb{N}}

\nc{\BN}{\mathbb{N}}
\newcommand{\Z}{\mathbb{Z}}
\nc{\BZ}{\mathbb{Z}}
\newcommand{\cR}{\mathcal{R}}

\newcommand{\disc}{\mathrm{d}}
\newcommand{\bA}{\mathbf{A}}
\nc{\tbA}{\tilde{\bA}}

\newcommand{\cD}{\mathcal{D}}

\newcommand{\tu}{\tilde{u}}

\newcommand{\cP}{\mathcal{P}}

\DeclareMathOperator{\supp}{supp}

\DeclareMathOperator{\DLap}{DLap}
\DeclareMathOperator{\Lap}{Lap}
\DeclareMathOperator{\DirM}{DirM}
\DeclareMathOperator{\TV}{D_{\mathrm{TV}}}
\DeclareMathOperator{\Ber}{Ber}

\DeclareMathOperator{\DStair}{DStair}

\DeclareMathOperator{\NB}{NB}

\DeclareMathOperator{\GDL}{GDL}

\DeclareMathOperator{\Geo}{Geo}
\DeclareMathOperator{\MSDLap}{MSDLap}
\DeclareMathOperator{\Arete}{Arete}

\newcommand{\eps}{\varepsilon}

\newcommand{\pr}[1]{\mathbb{P}\left[#1\right]}

\newcommand{\cmark}{\text{\ding{51}}}%
\newcommand{\xmark}{\text{\ding{55}}}%

\newcommand{\HG}[4]{{}_2F_1\left[#1, #2; #3; #4\right]}
\newcommand{\dr}[2]{\mathrm{D}_{\infty}\left(#1\ \middle\|\ #2\right)}

\newtheorem{thm}{Theorem}
\newtheorem{theorem}[thm]{Theorem}
\newtheorem{observation}[thm]{Observation}
\newtheorem{lemma}[thm]{Lemma}

\newtheorem{definition}[thm]{Definition}

\newtheorem{corollary}[thm]{Corollary}
\newtheorem{remark}[thm]{Remark}
\newtheorem{conjecture}[thm]{Conjecture}
\newtheorem{proposition}[thm]{Proposition}

\crefname{observation}{Observation}{Observations}

\DeclareRobustCommand
  \Compactcdots{\mathinner{\cdotp\mkern-2mu\cdotp\mkern-2mu\cdotp}}

\title{Infinitely Divisible Noise for Differential Privacy:\\ Nearly Optimal Error in the High $\eps$ Regime}

\author{Charlie Harrison \\Google\\ \texttt{\small csharrison@google.com} \and Pasin Manurangsi \\Google Research\\ \texttt{\small pasin@google.com}}

\begin{document}

\maketitle

\begin{abstract}
Differential privacy can be achieved in a distributed manner, where multiple parties add independent noise such that their sum protects the overall dataset with differential privacy. A common technique here is for each party to sample their noise from the decomposition of an infinitely divisible distribution.
We analyze two mechanisms in this setting:
1) the generalized discrete Laplace (GDL) mechanism, whose distribution (which is closed under summation) follows from differences of i.i.d. negative binomial shares, and 2) the multi-scale discrete Laplace (MSDLap) mechanism, a novel mechanism following the sum of multiple i.i.d. discrete Laplace shares at different scales.
For $\eps \geq 1$, our mechanisms can be parameterized to have $O\left(\Delta^3 e^{-\eps}\right)$ and $O\left(\min\left(\Delta^3 e^{-\eps}, \Delta^2 e^{-2\eps/3}\right)\right)$ MSE, respectively, where $\Delta$ denote the sensitivity; the latter bound matches known optimality results. Furthermore, the MSDLap mechanism has the optimal MSE \emph{including constants} as $\eps \to \infty$.
We also show a transformation from the discrete setting to the continuous setting, which allows us to transform both mechanisms to the continuous setting and thereby achieve the optimal $O\left(\Delta^2 e^{-2\eps / 3}\right)$ MSE.
To our knowledge, these are the first infinitely divisible additive noise mechanisms that achieve order-optimal MSE under pure differential privacy for either the discrete or continuous setting, so our work shows formally there is no separation in utility when query-independent noise adding mechanisms are restricted to infinitely divisible noise. 
For the continuous setting, our result improves upon Pagh and Stausholm's Arete distribution which gives an MSE of $O\left(\Delta^2 e^{-\eps/4}\right)$ ~\cite{pagh2022infinitely}.
Furthermore, we give an exact sampler tuned to efficiently implement  the MSDLap mechanism, and we apply our results to improve a state of the art multi-message shuffle DP protocol from \cite{balle2020private} in the high $\eps$ regime.
\end{abstract}


\maketitle

\section{Introduction}

Differential Privacy (DP) \cite{DworkMNS06} is a 
formal notion of privacy
which bounds the sensitive information revealed by an algorithm. While there are many
"flavors" of differential privacy, most relevant to this work is so-called pure-DP or $\eps$-DP which bounds the privacy loss via the parameter $\eps > 0$.

\begin{definition}
[\cite{DworkMNS06}]
A randomized mechanism $M : \mathcal{X}^d \to \mathcal{Y}$ satisfies $\eps$-differential privacy if, for all $x,x' \in \mathcal{X}^d$ differing\footnote{For the purpose of this work, we may consider any neighboring notion. We only use the substitution notion for simplicity.} in a single entry,
$
\Pr[M(x) \in S] \leq e^\eps \cdot \Pr[M(x') \in S]
$
for all measurable $S \subseteq \mathcal{Y}$.
\end{definition}
We focus on the so-called \emph{low-privacy regime} where $\eps \geq 1$. Despite its name, this regime still provides meaningful privacy protection and is the setting most often employed in practical applications of DP (e.g.~\cite{abowd2018us,dp2017learning,gboard}). The utility bounds we state throughout will focus on this regime.

A challenge in deploying differential privacy is doing so while also producing \emph{useful} results. In this work, we focus on minimizing the mean squared error (MSE) of a query $q$ subject to a query-independent additive noise mechanism, i.e. $M(x) = q(x) + Z$ where $Z$ is a random variable. There is a rich body of work on optimizing MSE (and in particular the MSE's dependence on $\eps$) in this setting.
Notably, the staircase mechanism (\cite{geng2014optimal}, \cite{stair2015}, and \cite{stair2016adding}) was shown to have the optimal MSE of all differentially private, query-independent additive noise mechanisms. In the continuous setting, it achieves a MSE of $O\left(\Delta^2 e^{-2\eps / 3}\right)$, and in the discrete setting it can achieve a MSE interpolating between $O\left(\Delta^3 e^{-\eps}\right)$ and $O\left(\Delta^2 e^{-2\eps/3}\right)$.
For $\Delta = 1$, the optimal discrete staircase mechanism is the discrete Laplace mechanism (also known as the Geometric mechanism)~\cite{GhoshRS12}.

A probability distribution $\cD$ is \emph{infinitely divisible} iff for every positive integer $n$, there exists a distribution $\cD_{/n}$ such that, when we sample $n$ i.i.d. random variables $Z_1, \cdots, Z_n \sim \cD_{/n}$,  their sum $Z = \sum_{i=1}^n Z_i$ is distributed as $\cD$. In distributed differential privacy, a common technique (see \cite{goryczka2015comprehensive} for an overview) is for $n$ parties to each sample $Z_i$ such that the sum is distributed according to $\cD$, which can be shown to protect the dataset with differential privacy. The infinite divisibility property of $\cD$ allows for distributed protocols where an arbitrary $n \ge 1$ number of parties can participate. Under the more restrictive setting where the additive noise mechanism $M$ must sample the noise $Z$ from an \emph{infinitely divisible} distribution, there was previously no known mechanism in either the discrete or continuous settings which matched the MSE of the staircase mechanism. We resolve this gap in this paper for both settings.

\subsection{Related work}

Distributed noise generation for differential privacy is well studied even for distributions that are not infinitely divisible. In fact, the idea dates back to 
the very early days of DP~\cite{DworkKMMN06}. Moreover, several works have studied the setting where $Z_1, \dots, Z_n$ samples from some distribution $\tilde{\cD}$ and directly argue about the distribution of their summation $Z = Z_1 + \cdots + Z_n$. Examples include the case where $\tilde{\cD}$ is a Bernoulli distribution~\cite{CheuSUZZ19,GhaziGKPV21}, for which $Z$ is a Binomial random variable, and the case where $\tilde{\cD}$ is a discrete Gaussian distribution~\cite{kairouz2021distributed}, for which $Z$ is ``close'' to discrete Gaussian random variable. The drawback here is that the distribution of $Z$ are different for different values of $n$, meaning that the privacy analysis often requires $n$ to be sufficiently large (e.g.~\cite{GhaziGKPV21}) or sufficiently small (e.g.~\cite{kairouz2021distributed}). Using infinitely divisible distribution overcomes this issue since the distribution of the total noise $Z$ is always $\cD$ regardless of the value of $n$, leading to a privacy analysis that works for all regimes of $n$. Due to this, infinitely divisible noise distributions have gained popularity in distributed settings of differential privacy (e.g.~\cite{goryczka2015comprehensive,balle2020private,GhaziKMP20,almost-central-shuffle-ghazi21a,agarwal2021skellam,CanonneL22,bagdasaryan2022heatmaps,GhaziKM24}).

As discrete distributions are typically easier to embed in multi-party cryptographic
protocols and avoid implementation issues \cite{mironov2012significance} with floating point representations, they tend to be more well-studied in distributed differential privacy.
The infinite divisibility of the discrete Laplace into negative binomial\footnote{Throughout this work, we use the term negative binomial to refer to the distribution generalized to a real valued stopping parameter $r$. In other works, this is sometimes called the P\'{o}lya distribution.} shares has been studied extensively in \cite{goryczka2015comprehensive, balle2020private, bagdasaryan2022heatmaps}. In \cite{bagdasaryan2022heatmaps} the authors explicitly analyzed privacy in the face of \emph{dropouts}, or parties that fail to properly add their noise share. To account for this, the noise shares from each individual party were scaled up such that the final noise distribution
is \emph{at least} discrete Laplace as long as no more than a constant fraction of parties drop out. This allowed the proposed system to continue to use the base analysis of the discrete Laplace mechanism as-is, by paying the price of increased noise.

In the continuous setting, the infinitely divisible Arete distribution was introduced in \cite{pagh2022infinitely} specifically to target the low-privacy, high $\eps$ pure-DP regime. It was designed to match the performance of the (continuous) staircase mechanism which is not infinitely divisible and therefore unusable in
the distributed setting. While the continuous staircase mechanism
achieves $O(\Delta^2 e^{-2\eps / 3})$ MSE, the authors only proved the Arete mechanism has an MSE of $O(\Delta^2 e^{-\eps / 4})$, though we believe this is not tight (\Cref{conj:arete}).

Distributed noise is relevant in other notions of differential privacy as well. The aforementioned discrete Gaussian mechanism \cite{canonne2020discrete, kairouz2021distributed} satisfies zero-concentrated DP \cite{bun2016concentrated}, and the Skellam mechanism \cite{agarwal2021skellam, valovich2017computational, schein2019locally} satisfies R\'{e}nyi DP \cite{mironov2017renyi}.\footnote{Mechanisms satisfying R\'{e}nyi DP or zero-concentrated DP naturally also satisfy approximate DP.}
Both mechanisms were proposed in the context of federated learning via a secure aggregation multi-party protocol \cite{kairouz2021practicalDPFTRL, secagg, secaggpolylog}.

In the context of pure differential privacy, the GDL mechanism was explored informally in a prior blog post by one of the authors~\cite{harrisongdl}. Concurrent to this work, the GDL distribution was studied in the context of $(\eps, \delta)$ shuffle differential privacy in \cite{athanasiou2025metricprivacy}. There the authors analyzed a \emph{shifted and truncated} version of the GDL distribution that achieves an approximate DP bound guarantee, vs. the pure DP one in this work which does not perform any truncation. Their primary result in the shuffle setting involving nearly matching the utility of the central discrete Laplace mechanism. On the other hand, as explained below, we \emph{improve on} the central discrete Laplace mechanism for sufficiently large $\eps$.

\subsection{Our contributions}

\bgroup
\def\arraystretch{1.3}
\begin{table*}[h]
\centering
Discrete Distributions \\
\vspace{0.1cm}
\begin{tabular}{| m{0.34\linewidth} | m{0.23\linewidth} | m{0.09\linewidth} | m{0.24\linewidth}|}
\hline
Distribution & MSE & Inf. Div. & Reference \\
\hline
Discrete Laplace & $e^{-\eps / \Delta}$ & $\cmark$ & \cite{GhoshRS12} \\ 
\hline
 Discrete Staircase & $\min\{\Delta^3 e^{-\eps}, \Delta^2 e^{-2\eps / 3}\}$ & $\xmark$ & \cite{geng2014optimal, stair2015, stair2016adding}\\
 \hline
  \textbf{Generalized Discrete Laplace (GDL)} & $\Delta^3 e^{-\eps}$ for $\eps > 2 + \log(\Delta)$ & $\cmark$  & \Cref{thm:opt-var}\\
\hline
 \textbf{Multi-Scale Discrete Laplace (MSDLap)} & $\min\{\Delta^3 e^{-\eps}, \Delta^2 e^{-2\eps / 3}\}$ & $\cmark$ & \Cref{cor:msdlap-rextremes}\\
 \hline
 \end{tabular}

\vspace{0.2cm}
Continuous Distributions \\
\vspace{0.1cm}
 
\begin{tabular}{| m{0.34\linewidth} | m{0.23\linewidth} | m{0.09\linewidth} | m{0.24\linewidth}|}
\hline
Distribution & MSE & Inf. Div. & Reference \\
\hline
 Laplace & $\Delta^2 / \eps^2$ & $\cmark$ & \cite{DworkMNS06}\\  
 \hline
Staircase & $\Delta^2 e^{-2\eps / 3}$ & $\xmark$ & \cite{geng2014optimal, stair2015, stair2016adding}\\
 \hline
Arete & $\Delta^2 e^{-\eps / 4}$ & $\cmark$ & \cite{pagh2022infinitely}\\
\hline
 \textbf{Continuous MSDLap} & $\Delta^2 e^{-2\eps / 3}$ & $\cmark$ & \Cref{thm:discrete-to-cont}\\
\hline
\end{tabular}
\caption{Summary of our results (bold) compared to known noise distributions satisfying $\eps$-DP. Distributions are grouped by whether they are discrete (support on $\mathbb{Z}$) or continuous (support on $\mathbb{R}$). MSEs exclude constant factors.} \label{tab:summary}
\end{table*}
\egroup

In \Cref{sec:discrete}, we introduce the GDL and MSDLap mechanisms, two discrete noise-adding mechanisms having optimal $O(\Delta^3 e^{-\eps})$ MSE for fixed $\Delta$ and any sufficiently large $\eps$. Inspired by the discrete staircase mechanism, we also extend the MSDLap mechanism with a parameter optionally allowing it to satisfy $O(\Delta^2 e^{-2\eps/3})$ MSE, allowing it to achieve asymptotically-optimal error for any fixed $\Delta$ and $\eps \geq 1$.
Notably, the MSDLap mechanism matches the MSE of the discrete staircase \emph{including constants} as $\eps \to \infty$.

The GDL mechanism, as the difference of two i.i.d. negative binomial
noise shares, generalizes the distributed discrete Laplace mechanisms in \cite{goryczka2015comprehensive, balle2020private, bagdasaryan2022heatmaps} in the face of \emph{unexpected} dropouts when a larger fraction of parties than expected fail to add their noise shares,
providing a "smooth" closed-form privacy guarantee.

In \Cref{sec:integer-to-real}, we introduce a method to transform a discrete infinitely divisible additive mechanism into a continuous infinitely divisible mechanism up to a small loss in parameters. This approach allows us to achieve the asymptotically-optimal
$O\left(\Delta^2 e^{-2\eps/3}\right)$ MSE by transforming either the GDL or the MSDlap mechanisms, improving on the Arete mechanism's bound of $O\left(\Delta^2 e^{-\eps / 4}\right)$ in \cite{pagh2022infinitely}.

Our noise distributions and previously known distributions are summarized in \Cref{tab:summary}.

While its utility exceeds the GDL's, the MSDLap mechanism naively requires sampling from $O(\Delta)$ independent negative binomial random variables. In \Cref{sec:sampling} we outline an improved exact sampling algorithm which runs in only $O(1)$ steps in expectation for relevant regimes. This algorithm may be of independent interest for
general purpose multivariate negative binomial sampling in the sparse regime where most samples are 0.

Finally, in \Cref{sec:shuffle}, we improve the multi-message "split and mix" real summation shuffle protocol of \cite{balle2020private} with our results, attaining the $O\left(e^{-2\eps/3}\right)$ bound on MSE where previous results could only achieve
$O\left(1/\eps^2\right)$ from the discrete Laplace.

\section{Preliminaries}

For any $n \in \N$, we write $[n]$ as a shorthand for $\{1, \dots, n\}$.

\paragraph{Function identities.}
We introduce a few functions and identities used in our proofs. Let $\Gamma(z) = \int_{0}^\infty t^{z-1}e^{-t} \mathrm{d} t$ be the gamma function. We denote the rising factorial (aka the Pochhammer symbol) by 
\begin{equation}\label{eq:pochhammer}
(x)_n := (x)(x + 1)\cdots(x + n -1) = \frac{\Gamma(x + n)}{\Gamma(x)}.
\end{equation}

Denote the hypergeometric function by $\HG{a}{b}{c}{z}$ \cite{DLMF-hyper}, where 
\begin{align}
    \HG{a}{b}{c}{z}
        &= \sum_{s=0}^\infty \frac{(a)_s(b)_s}{(c)_s s!}z^s \label{eq:hyperid1}\\
        &= \frac{\Gamma(c)}{\Gamma(a)\Gamma(b)}
            \sum_{s=0}^\infty
                \frac{\Gamma(a+s)\Gamma(b+s)}
                     {\Gamma(c + s) s!} z^s \label{eq:hyperid2}.
\end{align}

Let $\psi(z)$ denote the digamma function, where 
\begin{equation}\label{eq:digamma}
    \psi(z) = \frac{\mathrm{d}}{\mathrm{d} z} \log(\Gamma(z)) = \frac{\Gamma'(z)}{\Gamma(z)}
\end{equation}
The following observation is well-known (see e.g.~\cite{alzer2017harmonic}):
\begin{observation}\label{obs:digamma}
    $\psi$ is increasing and $\psi'$ is decreasing on $(0, \infty)$.
\end{observation}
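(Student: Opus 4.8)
The plan is to reduce both monotonicity claims to sign computations via the classical series representation of the digamma function and then differentiate term by term. I would start from the Weierstrass product for $\Gamma$, which yields
\[
\psi(z) = -\gamma + \sum_{n=0}^{\infty}\left(\frac{1}{n+1} - \frac{1}{n+z}\right),
\]
where $\gamma$ is the Euler--Mascheroni constant. This representation is exactly the kind of object that differentiates cleanly, since the constant shifts $\tfrac{1}{n+1}$ drop out upon differentiation.

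Differentiating this series term by term (justification discussed below) produces the trigamma function
\[
\psi'(z) = \sum_{n=0}^{\infty} \frac{1}{(n+z)^2},
\]
which is a sum of strictly positive terms for every $z > 0$. Hence $\psi'(z) > 0$ on $(0,\infty)$, so $\psi$ is strictly increasing. Differentiating once more gives
\[
\psi''(z) = -2\sum_{n=0}^{\infty} \frac{1}{(n+z)^3} < 0
\]
for all $z > 0$, so $\psi'$ is strictly decreasing. This establishes both assertions of the observation.

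The only real work is the analytic justification of term-by-term differentiation, and this is where I expect the (mild) obstacle to lie. I would handle it by noting that on any compact subinterval $[a,b] \subset (0,\infty)$ the differentiated series $\sum_n (n+z)^{-2}$ and $\sum_n (n+z)^{-3}$ are dominated termwise by $\sum_n (n+a)^{-2}$ and $\sum_n (n+a)^{-3}$ respectively; these majorants are finite and independent of $z$, so by the Weierstrass $M$-test the differentiated series converge uniformly on $[a,b]$. Uniform convergence of the derivative series together with pointwise convergence of the original series licenses differentiation under the summation, and since $[a,b]$ was an arbitrary compact subinterval the conclusions hold on all of $(0,\infty)$.

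As an alternative that sidesteps the series entirely, I could instead invoke the integral representation $\psi'(z) = \int_0^\infty \frac{t e^{-zt}}{1-e^{-t}}\,\mathrm{d}t$: the integrand is positive, giving $\psi' > 0$, and is strictly decreasing in $z$ for each fixed $t > 0$, giving $\psi'$ decreasing, with differentiation under the integral sign justified by an integrable domination on compact $z$-intervals. Either route is elementary; given that the excerpt already flags the statement as well known with a citation, I would present the series argument as the self-contained option and regard the uniform-convergence step as the sole point requiring care.
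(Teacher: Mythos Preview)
Your proof is correct and self-contained. The paper, however, does not actually prove this observation at all: it simply records it as well-known and defers to a citation (\cite{alzer2017harmonic}). So there is no proof in the paper to compare against; your series-based argument (with the Weierstrass $M$-test justification for termwise differentiation) supplies strictly more than the paper does, and either of your two routes would be a perfectly standard way to establish the claim.
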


Finally, we state the so-called ``hockey-stick identity'' (e.g. \cite{jones1996generalized}):
\begin{lemma} \label{lem:pascal}
For any non-negative integers $\ell, m$, $\sum_{j=\ell}^{m} \binom{j}{\ell} = \binom{m+1}{\ell+1}$
\end{lemma}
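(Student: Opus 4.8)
The plan is to prove this by a telescoping argument driven by Pascal's rule, which I find cleaner than a bijective count and which makes the boundary behavior transparent. The key observation is that each summand can be rewritten as a difference of two consecutive binomial coefficients: Pascal's rule $\binom{j+1}{\ell+1} = \binom{j}{\ell} + \binom{j}{\ell+1}$ rearranges to
\[
\binom{j}{\ell} = \binom{j+1}{\ell+1} - \binom{j}{\ell+1}.
\]
First I would substitute this identity into the left-hand side. The sum then collapses, since consecutive terms cancel:
\[
\sum_{j=\ell}^{m} \binom{j}{\ell}
  = \sum_{j=\ell}^{m} \left( \binom{j+1}{\ell+1} - \binom{j}{\ell+1} \right)
  = \binom{m+1}{\ell+1} - \binom{\ell}{\ell+1}.
\]
Finally I would note that $\binom{\ell}{\ell+1} = 0$ (adopting the standard convention $\binom{n}{k} = 0$ whenever $0 \le n < k$), which leaves exactly $\binom{m+1}{\ell+1}$, as claimed. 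One should also handle the degenerate case $m < \ell$, where the sum is empty and both sides equal $0$.

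As an alternative I would keep in reserve a straightforward induction on $m$ (for fixed $\ell$), where the base case $m = \ell$ reads $\binom{\ell}{\ell} = 1 = \binom{\ell+1}{\ell+1}$, and the inductive step adds $\binom{m+1}{\ell}$ to both sides and applies Pascal's rule $\binom{m+1}{\ell+1} + \binom{m+1}{\ell} = \binom{m+2}{\ell+1}$. Both routes rely only on Pascal's rule and the vanishing convention, so I expect no real obstacle: the identity is entirely elementary. The only point requiring a sentence of care is the boundary term $\binom{\ell}{\ell+1}$ and, if we wish to be fully rigorous about the stated range of $\ell, m$, the empty-sum edge case.
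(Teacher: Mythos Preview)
Your proof is correct; the telescoping via Pascal's rule is a standard and complete argument, and you handle the boundary term and empty-sum case cleanly. Note that the paper does not actually prove this lemma at all---it simply states it as the well-known ``hockey-stick identity'' with a citation---so there is no paper proof to compare against.
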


\paragraph{Distributions.}
For 
convenience, we write a random variable and its distribution interchangeably.

For a discrete distribution $\cD$ with support on $\cX$, denote its probability mass function (PMF) as $f_{\cD}(k)$ for $k \in \cX$. When we say that a discrete distribution is infinitely divisible, we assume implicitly that $\cD_{/n}$ are also discrete. Relevant to this work are the following well-known discrete distributions (where $a, r \in \R_{> 0}, p \in (0, 1)$):
\begin{itemize}

\item The negative binomial distribution, denoted $\NB(r, p)$ and with support on $\mathbb{Z}_{\ge 0}$, has PMF 
$
f_{\NB(r, p)}(k) = (1 - p)^k p^r \frac{\Gamma(k+r)}{\Gamma(r)\Gamma(k + 1)}.
$
 It is infinitely divisible, as $\sum_{i=1}^n \NB(r/n, p) \sim \NB(r, p)$. Its variance is $\Var[\NB(r, p)] = \frac{(1-p)r}{p^2}$.
For $r \in \mathbb{N}$, the negative binomial distribution models the number of failures before the first $r$ successes in a sequence of i.i.d. Bernoulli trials with success probability $p$.

\item The geometric distribution, denoted $\Geo(p)$ is a special case of the negative binomial with $r = 1$.

\item The discrete Laplace distribution, denoted $\DLap(a)$, has PMF 
$
f_{\DLap(a)}(k) = \tanh(a/2)e^{-a|k|}.
$ 
It is infinitely divisible, as 
$
\sum_{i=1}^n \left(\NB(1/n, 1 - e^{-a}) - \NB(1/n, 1 - e^{-a})\right) \sim \DLap(a)
$.
Its
variance is $\Var[\DLap(a)] = \frac{1}{\cosh(a) - 1}$.

\item The Bernoulli distribution, denoted $\Ber(p)$, has PMF 
$
f_{\Ber(p)}(k) = \begin{cases}
    p & k = 1\\
    1-p & k = 0
\end{cases}.
$
\end{itemize}

For a continuous distribution $\cD$ on $\cX$, we use $f_{\cD}(x)$ for $x \in \cX$ to denote its probability density function (PDF) at $x$. We recall the following continuous distributions (where $k, \theta, b \in \R_{> 0}$):
\begin{itemize}
\item The gamma distribution, denoted by $\Gamma(k, \theta)$, with support on $\R_{+}$ has PDF 
$f_{\Gamma(k, \theta)}(x) = \frac{1}{\Gamma(k)\theta^k} \cdot x^{k-1}e^{-x/\theta}.$
It is infinitely divisible as $\sum_{i=1}^n \Gamma(k/n, \theta) \sim \Gamma(k, \theta)$.
\item The Laplace distribution, denoted by $\Lap(b)$, with support  on $\R$ has PDF 
$f_{\Lap(b)}(x) = \frac{1}{2b} e^{-|x|/b}.$
Its variance is $\Var[Lap(b)] = 2b^2$. It is infinitely divisible as  $
\sum_{i=1}^n \left(\Gamma(1/n, b) - \Gamma(1/n, b)\right) \sim \Lap(b).$
\end{itemize}

We will also use the following simple observations:
\begin{observation} \label{obs:scale-infdiv}
If a random variable $Z$ is infinitely divisible, then $c \cdot Z$ is infinitely divisible for any constant $c$.
\end{observation}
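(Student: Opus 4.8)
The plan is to prove this directly from the definition of infinite divisibility in terms of i.i.d.\ shares, by simply scaling each share. Fix a constant $c$ and a positive integer $n$. Since $Z$ is infinitely divisible, there exist i.i.d.\ random variables $Z_1, \dots, Z_n \sim \cD_{/n}$ whose sum $Z_1 + \cdots + Z_n$ is distributed as $Z$. The natural candidate for the $n$-way decomposition of $cZ$ is to set $W_i := c Z_i$ for each $i \in [n]$, and then verify that these $W_i$ form a valid i.i.d.\ decomposition of $cZ$.

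First I would check that the $W_i$ are again i.i.d.: independence is preserved because each $W_i$ is a deterministic function of $Z_i$ alone, and they are identically distributed because they are the image of identically distributed variables under the common map $z \mapsto cz$. Next, their sum satisfies $\sum_{i=1}^n W_i = c \sum_{i=1}^n Z_i$, and since scaling by the fixed constant $c$ is a deterministic map, $c \sum_{i} Z_i$ is distributed as $cZ$. Hence $cZ$ admits an $n$-way i.i.d.\ decomposition for every $n$, with $(cZ)_{/n}$ equal to the law of $c \cdot \cD_{/n}$, which is exactly the definition of infinite divisibility.

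There is essentially no hard step here; the only points worth a remark are degenerate or setting-dependent cases. If $c = 0$ then $cZ \equiv 0$, which is trivially infinitely divisible by taking each share to be the point mass at $0$. In the discrete setting, where we additionally require each $\cD_{/n}$ to be discrete, one should note that $c \cdot \cD_{/n}$ remains supported on a countable set (a scaled copy of the support of $\cD_{/n}$), so the discreteness caveat is preserved; when $c$ is a nonzero integer this support is again a subset of $\Z$. An equivalent one-line argument runs through characteristic functions: writing $\phi$ for the characteristic function of $Z$, infinite divisibility gives $\phi(t) = (\phi_n(t))^n$ for each $n$, whence the characteristic function $t \mapsto \phi(ct)$ of $cZ$ factors as $(\phi_n(ct))^n$, and $\phi_n(c\,\cdot)$ is the characteristic function of $c \cdot \cD_{/n}$.
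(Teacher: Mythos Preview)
Your proof is correct and is exactly the natural argument; the paper itself states this as a ``simple observation'' without proof, so your direct verification from the definition (scale each i.i.d.\ share by $c$) is precisely what is implicitly intended.
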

\begin{observation} \label{obs:sum-infdiv}
If random variables $Z_1, Z_2$ are infinitely divisible and independent, then $Z_1 + Z_2$ is infinitely divisible.
\end{observation}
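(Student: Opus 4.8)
The plan is to work directly from the sampling definition of infinite divisibility given above. Fix an arbitrary positive integer $n$; the goal is to exhibit a distribution $\cD_{/n}$ whose $n$-fold i.i.d.\ sum is distributed as $Z_1 + Z_2$.

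First I would invoke the infinite divisibility of each summand separately. Since $Z_1$ is infinitely divisible, there are i.i.d.\ random variables $A_1, \dots, A_n \sim (\cD_1)_{/n}$ with $\sum_{i=1}^n A_i \sim Z_1$; similarly there are i.i.d.\ $B_1, \dots, B_n \sim (\cD_2)_{/n}$ with $\sum_{i=1}^n B_i \sim Z_2$. I would take all $2n$ of these variables to be mutually independent, which is always possible since we are free to realize the two decompositions jointly on a common product space.

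Next I would set $C_i := A_i + B_i$ for each $i \in [n]$. Each $C_i$ has the same law, namely the convolution of $(\cD_1)_{/n}$ and $(\cD_2)_{/n}$, and the $C_i$ are i.i.d.\ because they depend on disjoint blocks of the mutually independent collection $\{A_i, B_i\}_{i=1}^n$. Call this common law $\cD_{/n}$. Then
\[
\sum_{i=1}^n C_i \;=\; \sum_{i=1}^n A_i \;+\; \sum_{i=1}^n B_i,
\]
and since the $A$-block and the $B$-block are independent, the right-hand side is distributed as the independent sum $Z_1 + Z_2$. As $n$ was arbitrary, this establishes infinite divisibility of $Z_1 + Z_2$.

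I do not expect a real obstacle here: the only point requiring care is the independence bookkeeping, namely ensuring that the shares of $Z_1$ can be chosen jointly independent of the shares of $Z_2$, so that the $C_i$ are genuinely i.i.d.\ and the two partial sums combine into a true independent sum. Equivalently, one could phrase the whole argument through characteristic functions: writing $\phi_j = \psi_j^{\,n}$ for the characteristic function of $Z_j$, the product $\phi_1 \phi_2 = (\psi_1 \psi_2)^n$ is the characteristic function of $Z_1 + Z_2$ and $\psi_1 \psi_2$ is itself a characteristic function (being the product of two characteristic functions), but the explicit construction above stays closer to the paper's sampling-based definition.
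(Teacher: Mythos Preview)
Your argument is correct and is the standard one: pair the $n$th-root shares of $Z_1$ and $Z_2$ coordinatewise, use mutual independence to get i.i.d.\ sums $C_i$, and conclude. The paper, however, does not prove \Cref{obs:sum-infdiv} at all; it is stated as a bare observation and used later (in \Cref{thm:all-regime-dlap} and \Cref{thm:discrete-to-cont}) without justification, so there is nothing to compare against beyond noting that your construction stays faithful to the paper's sampling-based definition of infinite divisibility.
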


We say that a distribution $\cD$ is \emph{closed under summation} if $\cD$ is infinitely divisible and additionally, $\cD_{/n}$ follows the same distribution family as $\cD$ for all $n \in \N$. This additional property provides benefits in the distributed setting as it ensures the mechanism's privacy is well-understood even as parties drop out or join the protocol.

\paragraph{Max Divergence and Differential Privacy.} We let $\dr{P}{Q}$ denote the max divergence between two distributions $P, Q$, i.e., $\sup_{x \in \supp(P)} \frac{f_{P}(x)}{f_Q(x)}$. We state the following well-known properties of $\dr{\cdot}{\cdot}$.
\begin{lemma}[Post-Processing] \label{lem:post-processing}
For any (possibly randomized) function $f$ and any random variables $U, V$, we have $\dr{f(U)}{f(V)} \leq \dr{U}{V}$.
\end{lemma}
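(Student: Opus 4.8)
The plan is to reduce the claim to the well-known set-based characterization of max divergence and then transport it through the randomized map. Set $\alpha := \dr{U}{V}$, and assume $\alpha < \infty$ (otherwise there is nothing to prove). Writing $f_U, f_V$ with respect to a common dominating measure $\mu$, the definition $\alpha = \sup_{x \in \supp(U)} f_U(x)/f_V(x)$ is equivalent to the pointwise inequality $f_U(x) \le \alpha f_V(x)$, valid for all $x$ (off $\supp(U)$ it holds trivially since $f_U(x) = 0$). Integrating this over any measurable set $S$ yields the equivalent formulation $\pr{U \in S} \le \alpha \cdot \pr{V \in S}$; conversely, applying this to $S = \{x : f_U(x) > \alpha f_V(x)\}$ recovers the density bound, so the two formulations coincide.

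Next I would model the (possibly randomized) function $f$ as a Markov kernel and push the set bound forward. Fix an arbitrary measurable set $T$ in the range of $f$, and define $g(u) := \pr{f(u) \in T}$, the probability that $f$ sends $u$ into $T$; this is a measurable function valued in $[0,1]$ (for deterministic $f$ it is simply the indicator $\ind[f(u) \in T]$). By the law of total probability,
\[
\pr{f(U) \in T} = \E[g(U)] = \int g(u) f_U(u)\, \mathrm{d}\mu(u).
\]
Since $g \ge 0$ and $f_U \le \alpha f_V$ pointwise, the integrand satisfies $g(u) f_U(u) \le \alpha\, g(u) f_V(u)$, and hence
\[
\pr{f(U) \in T} \le \alpha \int g(u) f_V(u)\, \mathrm{d}\mu(u) = \alpha \cdot \pr{f(V) \in T}.
\]

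Because $T$ was arbitrary, the reverse direction of the equivalence from the first step gives $\dr{f(U)}{f(V)} \le \alpha = \dr{U}{V}$, as desired. The argument is short, and the only genuine care needed is measure-theoretic: one must fix the common dominating measure $\mu$ for $U$ and $V$ so that the pointwise density comparison is meaningful, and justify writing $\pr{f(U) \in T}$ as $\int g\, f_U \,\mathrm{d}\mu$, an application of Tonelli's theorem that is legitimate because the kernel and densities are all nonnegative. I expect this bookkeeping, rather than any conceptual difficulty, to be the main (and minor) obstacle; the inequality itself is driven entirely by the pointwise density bound $f_U \le \alpha f_V$ together with the nonnegativity of $g$.
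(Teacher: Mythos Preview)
The paper does not prove this lemma; it merely states it as a ``well-known property'' of max divergence, so there is no proof to compare against. Your argument is the standard one and is correct: reducing the density-ratio definition to the set inequality $\Pr[U\in S]\le \alpha\,\Pr[V\in S]$, pushing that forward through the Markov kernel via $g(u)=\Pr[f(u)\in T]$ and Tonelli, and then converting back are all valid, with the measure-theoretic caveats you already flagged being the only (minor) points of care.
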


\begin{lemma}[Triangle Inequality] \label{lem:tri-ineq}
For any distributions $P, Q, R$, $\dr{P}{Q} \leq \dr{P}{R} + \dr{R}{Q}$.
\end{lemma}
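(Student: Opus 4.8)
The plan is to prove the bound pointwise on density ratios and then pass to the supremum, exploiting that the max divergence is the logarithm of a supremum of density ratios, so that a \emph{product} of two ratios becomes a \emph{sum} of two divergences. The workhorse is the trivial factorization
\[
\frac{f_P(x)}{f_Q(x)} = \frac{f_P(x)}{f_R(x)} \cdot \frac{f_R(x)}{f_Q(x)},
\]
valid at every $x$ with $f_R(x) \neq 0$, which reduces the whole statement to bounding each of the two factors on the right by the corresponding divergence and multiplying.

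First I would dispose of the degenerate case. If $\supp(P) \not\subseteq \supp(R)$, then $P$ places mass at some point where $R$ does not, so $\dr{P}{R} = +\infty$ and the right-hand side is infinite, making the inequality trivial. Hence I may assume $\supp(P) \subseteq \supp(R)$. This reduction is exactly what makes the factorization above legitimate for every $x \in \supp(P)$ and, more importantly, it guarantees that each such $x$ also lies in $\supp(R)$.

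With that in hand, fix any $x \in \supp(P)$. The first factor is at most $\sup_{y \in \supp(P)} f_P(y)/f_R(y) = e^{\dr{P}{R}}$, while the second factor --- because $x \in \supp(R)$ --- is at most $\sup_{y \in \supp(R)} f_R(y)/f_Q(y) = e^{\dr{R}{Q}}$. Multiplying gives $f_P(x)/f_Q(x) \leq e^{\dr{P}{R} + \dr{R}{Q}}$, and taking the supremum over $x \in \supp(P)$ followed by the (monotone) logarithm yields the claim; equivalently, one invokes submultiplicativity of the supremum (the sup of a product of nonnegative functions is at most the product of the sups). The only point that requires genuine care --- and the sole obstacle in an otherwise elementary argument --- is the support bookkeeping: the middle step silently enlarges the index set of the second supremum from $\supp(P)$ to $\supp(R)$, which is valid precisely because the reduction to $\supp(P) \subseteq \supp(R)$ ensures every $x \in \supp(P)$ is also a point of $\supp(R)$.
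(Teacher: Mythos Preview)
Your argument is correct and is the standard proof of this fact. The paper itself does not prove \Cref{lem:tri-ineq}; it simply records it among ``well-known properties'' of the max divergence, so there is no paper proof to compare against. One small remark: the paper's displayed definition of $\dr{P}{Q}$ omits the logarithm, but the way the quantity is used throughout (e.g.\ in \Cref{lem:cond-dp} and in the very statement of the triangle inequality, which would otherwise have to be multiplicative) makes clear that the intended definition is $\log \sup_{x \in \supp(P)} f_P(x)/f_Q(x)$, which is exactly what you work with.
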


For a given query function $q: X^d \to \mathcal{Y}$, we let $\Delta(q) = \max_{x,x'} |q(x)-q(x')|$ where the maximum is over all pairs $x$ and $x'$ differing on one entry. The $\cD$-noise addition mechanism for a query function $q$ is the mechanism $M(x)$ that outputs $q(x) + Z$ where $Z$ is drawn from $\cD$. For a discrete (resp. continuous) distribution $\cD$ and $\Delta \in \N$ (resp. $\Delta \in \R_{> 0}$), we say the \emph{$\cD$-noise addition mechanism is $\eps$-DP for sensitivity $\Delta$} iff the $\cD$ noise addition mechanism is $\eps$-DP for all queries $q: \cX^d \to \Z$ (resp. $q: \cX^d \to \R$) such that $\Delta(q) \leq \Delta$. It follows from the definition of DP and max divergence that this condition translates to the following (see e.g. \cite{geng2014optimal}):
\begin{lemma} \label{lem:cond-dp}
For a discrete (resp. continuous) distribution $\cD$,  the $\cD$-noise addition mechanism is $\eps$-DP for sensitivity $\Delta$ iff $\dr{\cD + \xi}{\cD} \leq \eps$ for all $\xi \in \{-\Delta, -(\Delta - 1), \dots, \Delta\}$ (resp. if\footnote{The other direction of the implication for the continuous case does not hold since e.g. the set of $x$ where $f_{\cD + \xi}(x) > e^\eps \cdot f_{\cD}(x)$ might have measure zero.} $\dr{\cD + \xi}{\cD} \leq \eps$ for all $\xi \in [-\Delta, \Delta]$).
\end{lemma}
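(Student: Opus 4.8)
The plan is to unfold both sides of the claimed equivalence for a single query and reduce everything to a translation-invariant comparison of $\cD$ against its shifts. First I would fix an admissible query $q$ (so $\Delta(q) \le \Delta$) and a neighboring pair $x, x'$. Since $M(x) = q(x) + Z$ and $M(x') = q(x') + Z$ with $Z \sim \cD$, these outputs are just $\cD$ shifted by $q(x)$ and $q(x')$. The key observation is that for any shift $c$ the density/PMF satisfies $f_{\cD + c}(y) = f_{\cD}(y - c)$, so translating the argument of both distributions by $-q(x')$ leaves every pointwise ratio unchanged; the comparison between $M(x)$ and $M(x')$ is therefore identical to the comparison between $\cD + \xi$ and $\cD$, where $\xi := q(x) - q(x')$. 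Because $\Delta(q) \le \Delta$, the realizable $\xi$ are exactly those with $|\xi| \le \Delta$: integer-valued in the discrete setting (as $q$ maps into $\Z$) and ranging over all of $[-\Delta, \Delta]$ in the continuous setting.

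For the ``if'' direction, which I would handle uniformly for both settings, I would start from $\dr{\cD + \xi}{\cD} \le \eps$ for every $\xi$ in the stated range. This bound says the pointwise ratio $f_{\cD + \xi}(y) \le e^{\eps} f_{\cD}(y)$ holds on the support; integrating this inequality over an arbitrary measurable $S$ yields $\pr{\cD + \xi \in S} \le e^{\eps}\, \pr{\cD \in S}$, which by the translation step above is precisely the $\eps$-DP requirement $\pr{M(x) \in S} \le e^{\eps}\, \pr{M(x') \in S}$ for the realized $\xi = q(x) - q(x')$. Since every admissible query and neighboring pair produces a $\xi$ in the stated range, the mechanism is $\eps$-DP for sensitivity $\Delta$.

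For the ``only if'' direction, claimed only in the discrete case, I would show each required shift is realized by a genuine admissible query. Fixing an integer $\xi$ with $|\xi| \le \Delta$ and any neighboring pair $x_0, x_0'$ (which exists whenever $|\cX| \ge 2$ and $d \ge 1$), I define $q$ by $q(x_0) = \xi$ and $q(y) = 0$ for all other $y$; then $\Delta(q) = |\xi| \le \Delta$ and $q(x_0) - q(x_0') = \xi$. Applying $\eps$-DP to this query and specializing the measurable set to a singleton $S = \{y\}$ — which carries positive mass in the discrete setting — gives $f_{\cD + \xi}(y) \le e^{\eps} f_{\cD}(y)$ at every $y \in \supp(\cD + \xi)$, and taking the supremum recovers $\dr{\cD + \xi}{\cD} \le \eps$. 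Ranging over all integer $\xi$ with $|\xi| \le \Delta$ completes the discrete equivalence.

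The step I expect to be the genuine subtlety, rather than a routine calculation, is explaining why the converse fails in the continuous case; this is the heart of the footnote and the only place where the two settings diverge. In the continuous setting singletons have measure zero, so $\eps$-DP only forces $f_{\cD + \xi}(y) \le e^{\eps} f_{\cD}(y)$ for \emph{almost every} $y$, whereas $\dr{\cD + \xi}{\cD} \le \eps$ demands the bound at \emph{every} support point. A violation of the ratio confined to a measure-zero set contributes nothing to any $\pr{\cdot \in S}$ and hence cannot break $\eps$-DP, yet it inflates the supremum defining the max divergence — exactly the counterexample sketched in the footnote. In the discrete case every point has positive mass, so ``almost every'' and ``every'' coincide and the converse goes through, which is why the biconditional is stated only there.
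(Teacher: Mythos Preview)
The paper does not provide its own proof of this lemma; it simply states that the condition ``follows from the definition of DP and max divergence'' and cites \cite{geng2014optimal}. Your proof is correct and supplies precisely the standard argument the paper leaves implicit: the translation-invariance reduction to shifts of $\cD$, the integration of the pointwise density bound for the forward direction, the explicit query realizing each integer shift for the discrete converse, and the measure-zero obstruction for the continuous converse matching the footnote.
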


\section{Discrete mechanisms}\label{sec:discrete}
\begin{wrapfigure}{R}{0.39\textwidth}
  \includesvg[inkscapelatex=false, width =\linewidth]{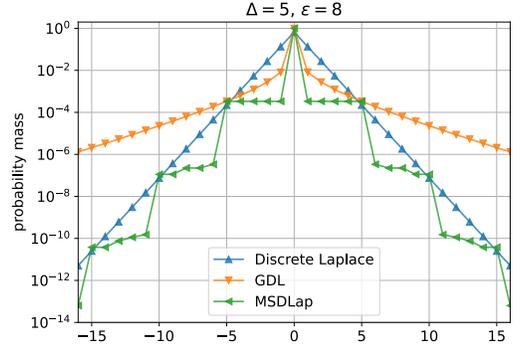}
  \caption{The PMF of the GDL distribution parameterized by \Cref{thm:opt-var}, the MSDLap distribution (\Cref{thm:multi-scale-dlap}), and the discrete Laplace distribution with $a = \eps/\Delta$. The GDL distribution has a much sharper peak around 0, before flattening out and decreasing slower than the discrete Laplace. The MSDLap has a "staircase" shaped distribution with sharp drops at $\Delta$-width intervals. Its PMF appears fully dominated by the discrete Laplace's,
  except at multiples of $\Delta$.}
  \label{fig:gdl}
  \vspace{-50pt}
\end{wrapfigure}

In this section, we present two infinitely divisible discrete additive-noise mechanisms that achieve order-optimal MSE.

\subsection{The generalized discrete Laplace mechanism}

This section introduces the generalized discrete Laplace distribution and associated mechanism. We start by the description of the distribution and its PMF:

\begin{definition}[\cite{LS14}]\label{def:gdl}
    For $\beta, a > 0$, let $\GDL(\beta, a)$ denote the distribution of $Z_1 - Z_2$ where  $Z_1, Z_2 \overset{\text{i.i.d.}}{\sim} \NB\left(\beta, 1 - e^{-a}\right)$.
    The PMF of $\GDL(\beta, a)$, i.e. $f_{\GDL(\beta, a)}(x)$, is equal to
    \begin{align}\label{eq:gdl-pmf}
    e^{-a|x|}\left(1-e^{-a}\right)^{2\beta} \HG{\beta}{\beta + |x|}{1 + |x|}{e^{-2a}}\frac{\Gamma(\beta + |x|)}{\Gamma(1 + |x|)\Gamma(\beta)}
    \end{align}
    for all $x \in \Z$.
\end{definition}

The discrete Laplace distribution is a special case of the GDL with $\beta = 1$, as
the PMF $f_{\GDL(1, a)}(k) = f_{\DLap(a)}(k) = \tanh(a/2)e^{-a|k|}$.

The infinite divisibility of the negative binomial distribution immediately implies that GDL is also infinitely divisible:

\begin{observation} \label{lem:inf-divis}
The GDL distribution is infinitely divisible and closed under summation. In particular, for independent 
$
X_1 \sim \GDL(\beta_1, a), \cdots, X_n \sim \GDL(\beta_n, a),
$
we have $X = \sum_{i=1}^n X_i \sim \GDL(\sum_{i=1}^n \beta_i, a)$.
\end{observation}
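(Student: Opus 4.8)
The plan is to reduce the statement entirely to the additivity of the negative binomial distribution under its shape parameter, which then propagates mechanically through the difference structure that defines the GDL. The infinite divisibility and closure-under-summation claims will fall out as special cases of the general additivity identity.

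First I would unfold the definition. By \Cref{def:gdl}, each $X_i$ can be realized as $X_i = Z_{i,1} - Z_{i,2}$ where $Z_{i,1}, Z_{i,2} \overset{\text{i.i.d.}}{\sim} \NB(\beta_i, 1 - e^{-a})$; since the $X_i$ are independent we may put all these variables on a common probability space with the $2n$ variables $\{Z_{i,j}\}$ mutually independent. Grouping the sum by sign gives $X = \sum_{i=1}^n X_i = \left(\sum_{i=1}^n Z_{i,1}\right) - \left(\sum_{i=1}^n Z_{i,2}\right)$. Writing $W_j := \sum_{i=1}^n Z_{i,j}$ for $j \in \{1,2\}$, the two variables $W_1$ and $W_2$ are independent, being functions of disjoint independent collections.

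The crux is to show $W_j \sim \NB\!\left(\sum_{i=1}^n \beta_i,\, 1 - e^{-a}\right)$, i.e. that negative binomials with a common success probability add their shape parameters. The cleanest justification is via the probability generating function: $\NB(r,p)$ has PGF $G_r(z) = \left(\frac{p}{1 - (1-p)z}\right)^{r}$, so by independence $\E[z^{W_j}] = \prod_{i=1}^n G_{\beta_i}(z) = G_{\sum_i \beta_i}(z)$, and since a PGF determines a distribution on $\Z_{\ge 0}$ uniquely, $W_j \sim \NB(\sum_i \beta_i,\, 1 - e^{-a})$. Consequently $X = W_1 - W_2$ is a difference of two i.i.d. $\NB(\sum_i \beta_i,\, 1 - e^{-a})$ variables, which by \Cref{def:gdl} is exactly $\GDL(\sum_i \beta_i, a)$, establishing the stated identity.

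Infinite divisibility and closure under summation then follow immediately: taking $n$ copies with $\beta_i = \beta/n$ exhibits $\GDL(\beta,a)$ as a sum of $n$ i.i.d. $\GDL(\beta/n, a)$ variables whose summands lie in the same family, which is precisely closure under summation (and a fortiori infinite divisibility). Alternatively one can invoke \Cref{obs:scale-infdiv,obs:sum-infdiv} together with the infinite divisibility of $\NB$ recorded in the preliminaries. The only genuine content is the negative binomial additivity; because the excerpt states the infinite divisibility of $\NB$ only in the equal-share form $\sum_{i=1}^n \NB(r/n, p) \sim \NB(r, p)$, I would spell out the short PGF computation above to cover the general unequal-$\beta_i$ case, but this is a standard calculation rather than a real obstacle.
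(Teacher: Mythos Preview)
Your proposal is correct and follows the same idea the paper has in mind: the paper states the observation without proof, noting only that ``the infinite divisibility of the negative binomial distribution immediately implies that GDL is also infinitely divisible.'' Your argument simply unpacks this implication, and your PGF computation for unequal $\beta_i$ is a reasonable addition given that the preliminaries only record the equal-share decomposition explicitly.
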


To analyze the privacy guarantee of the GDL-noise addition mechanism, we will prove a few useful facts regarding its PMF. To do this, it is convenient to consider the extension of the PMF to all real numbers; let $f^\mathbb{R}_{\GDL(\beta, a)}: \R \to \R$ be the function defined by \eqref{eq:gdl-pmf}. We start with the following lemma.

\begin{lemma}\label{lem:gdl-logconvex}
For $\beta \in (0, 1)$, $f^\mathbb{R}_{\GDL(\beta, a)}$ is decreasing and log convex on $[0, \infty)$. 
\end{lemma}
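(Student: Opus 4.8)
The plan is to write $f^{\R}_{\GDL(\beta,a)}$ on $[0,\infty)$ as a positive combination of simple building blocks, each manifestly decreasing and log-convex, and then invoke the fact that both properties are preserved under positive sums and products. On $[0,\infty)$ we have $|x| = x$, and expanding the hypergeometric factor via identity~\eqref{eq:hyperid2} (equivalently, recognizing the PMF as the convolution of two $\NB(\beta, 1-e^{-a})$ mass functions) gives
\[
f^{\R}_{\GDL(\beta,a)}(x) = \left(1-e^{-a}\right)^{2\beta} e^{-ax}\sum_{j=0}^\infty c_j\, h_j(x),\qquad c_j := \frac{e^{-2aj}\,\Gamma(j+\beta)}{\Gamma(\beta)^2\,\Gamma(j+1)},\quad h_j(x) := \frac{\Gamma(x+j+\beta)}{\Gamma(x+j+1)},
\]
where every $c_j > 0$ and the leading factor $(1-e^{-a})^{2\beta}$ is a positive constant.

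Next I would analyze a single block $h_j$. Taking logarithms, $\frac{d}{dx}\log h_j(x) = \psi(x+j+\beta) - \psi(x+j+1)$ and $\frac{d^2}{dx^2}\log h_j(x) = \psi'(x+j+\beta) - \psi'(x+j+1)$. Since $\beta \in (0,1)$ we have $0 < x+j+\beta < x+j+1$ throughout $[0,\infty)$, so \Cref{obs:digamma} ($\psi$ increasing, $\psi'$ decreasing on $(0,\infty)$) yields $\frac{d}{dx}\log h_j < 0$ and $\frac{d^2}{dx^2}\log h_j > 0$. Hence each $h_j$ is positive, strictly decreasing, and log-convex; the factor $e^{-ax}$ is positive, decreasing, and log-affine (hence log-convex). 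This is exactly where $\beta < 1$ is used: for $\beta \ge 1$ the sign of $\frac{d}{dx}\log h_j$ flips and monotonicity fails, which is consistent with the non-monotone peak seen for larger $\beta$.

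It then remains to combine the blocks. A finite positive sum of decreasing functions is decreasing, and a finite positive sum of log-convex functions is log-convex (this is the standard closure of the log-convex cone under addition, which follows from Hölder's inequality); passing to the infinite sum is justified by taking pointwise limits of partial sums, under which positivity, monotonicity, and log-convexity are all preserved. Thus $\sum_j c_j h_j$ is positive, decreasing, and log-convex, and multiplying by the positive decreasing log-convex factor $e^{-ax}$ preserves both properties (products of positive decreasing functions are decreasing; products of log-convex functions are log-convex since their logarithms add). This gives the claim.

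I expect the only genuinely delicate point to be the log-convexity of the (infinite) sum: unlike monotonicity, log-convexity is not obviously additive, so I would either cite the closure of log-convex functions under addition or spell out the one-line Hölder argument, and then handle the infinite series via the pointwise-limit stability of log-convexity. Everything else is a direct application of \Cref{obs:digamma} together with the hypothesis $\beta < 1$.
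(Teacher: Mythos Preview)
Your proposal is correct and follows essentially the same approach as the paper: expand the hypergeometric factor via~\eqref{eq:hyperid2} into a positive series of terms $h_j(x)=\Gamma(x+j+\beta)/\Gamma(x+j+1)$, show each $h_j$ is decreasing and log-convex using the monotonicity of $\psi$ and $\psi'$ from \Cref{obs:digamma} (this is precisely where $\beta<1$ enters), and then close under sums and the product with $e^{-ax}$. If anything you are more explicit than the paper about why log-convexity survives the infinite sum, which the paper simply asserts.
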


\begin{proof}
Because the product of log convex and decreasing functions
is log convex and decreasing, and $e^{-a x}$ is both log convex and decreasing, we focus on remaining relevant term.

\begin{align*}
\HG{\beta}{\beta + |x|}{1 + |x|}{e^{-2a}}&\frac{\Gamma(\beta + |x|)}{\Gamma(1 + |x|)\Gamma(\beta)} 
    &\overset{\eqref{eq:hyperid2}}{=} 
    \sum_{s=0}^\infty \frac{\Gamma(\beta + s)\Gamma(\beta + x + s)}{\Gamma(\beta)^2\Gamma(1 + x + s)s!}e^{-2 a s} 
    = \sum_{s=0}^\infty g(x) \frac{\Gamma(\beta +s) e^{-2 a s}}{\Gamma(\beta)^2 s!},
\end{align*}
where $g(x) = \frac{\Gamma(\beta + x + s)}{\Gamma(1 + x + s)}$. 
First we show that
$$g'(x) = g(x) (\psi(x + s + \beta) - \psi(x + s + 1)) < 0$$
and
\begin{align*}
\frac{\mathrm{d}^2}{\mathrm{d} x^2} \log g(x) &= \frac{\mathrm{d}^2}{\mathrm{d} x^2} \log(\Gamma(\beta + x + s)) - \log(\Gamma(1 + x + s))\\
&= \psi'(\beta + x + s) - \psi'(1 + x + s) \ge 0.
\end{align*}

The derivatives follow from \Cref{eq:digamma},
and the inequalities follow from \Cref{obs:digamma}. Finally, the result follows as the sum of decreasing and log convex functions is decreasing and log convex.
\end{proof}

We also need the following technical lemma which, together with \Cref{lem:gdl-logconvex}, allows us to consider only $f(0) / f(\Delta)$. Its proof is deferred to \Cref{app:logconvex-ratio-proof}.

\begin{lemma}\label{lem:logconvex-ratio}
Let $f: \R \to \R$ be symmetric about 0, and decreasing and log convex on $[0, \infty)$. Then for any $x, x'$ such that $|x-x'| \le \Delta$, we have
$
\frac{f(x)}{f(x')} \le \frac{f(0)}{f(\Delta)}.
$
\end{lemma}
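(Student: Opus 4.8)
The plan is to pass to logarithms and reduce the two-variable bound to a one-dimensional monotonicity statement. Write $h = \log f$, which by hypothesis is symmetric about $0$ and is convex and non-increasing on $[0,\infty)$; in particular $f > 0$, so $f(0) \ge f(\Delta) > 0$ and the target ratio is at least $1$. If $f(x) \le f(x')$ the inequality is immediate, since its left side is at most $1$, so I would assume $f(x) \ge f(x')$; because $f(t) = f(|t|)$ is non-increasing in $|t|$, this forces $|x| \le |x'|$. Setting $u = |x|$ and $v = |x'|$, symmetry gives $f(x)/f(x') = f(u)/f(v)$, and the reverse triangle inequality gives $v - u \le |x' - x| \le \Delta$. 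Thus it suffices to prove $h(u) - h(v) \le h(0) - h(\Delta)$ whenever $0 \le u \le v$ and $v - u \le \Delta$.

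The key step is the following consequence of convexity: for a fixed increment $w \ge 0$, the map $t \mapsto h(t) - h(t+w)$ is non-increasing on $[0,\infty)$. Granting this with $w = v - u$ and comparing $t = u \ge 0$ against $t = 0$ yields $h(u) - h(v) \le h(0) - h(w)$. Then, since $h$ is non-increasing and $w = v - u \le \Delta$, we have $h(w) \ge h(\Delta)$, hence $h(0) - h(w) \le h(0) - h(\Delta)$. Chaining these two bounds gives exactly $h(u) - h(v) \le h(0) - h(\Delta)$, which exponentiates to the claimed inequality $f(x)/f(x') \le f(0)/f(\Delta)$.

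To establish the monotonicity lemma I would argue directly from convexity rather than assume differentiability. For $0 \le s \le t$ the four points satisfy $s \le \min(t, s+w) \le \max(t, s+w) \le t+w$ together with $s + (t+w) = t + (s+w)$, so the pair $\{t,\, s+w\}$ sits inside the interval $[s,\, t+w]$ while sharing the same sum as its endpoints. The standard four-point characterization of convexity then gives $h(t) + h(s+w) \le h(s) + h(t+w)$, i.e.\ $h(s) - h(s+w) \ge h(t) - h(t+w)$, which is the desired monotonicity. (If smoothness is available, as it is for the GDL PMF by \Cref{lem:gdl-logconvex}, the same statement follows by integrating $h'(t) \le h'(t+w)$.)

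The only genuine subtlety is the sign bookkeeping: the constraint lives on $|x - x'|$ while the function only sees $|x|$ and $|x'|$, and it is precisely the reverse triangle inequality $\bigl||x'| - |x|\bigr| \le |x' - x|$ that lets the extremal configuration collapse to $x = 0$, $x' = \Delta$. Once that reduction is in place the remainder is a routine application of convexity, so I do not anticipate any deeper obstacle.
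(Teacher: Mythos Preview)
Your proof is correct and follows essentially the same route as the paper's: reduce to non-negative arguments via symmetry, then use log-convexity to show the increment $h(t)-h(t+w)$ is maximized at $t=0$ (the paper isolates this as a separate lemma proved by writing $x$ and $\Delta$ as convex combinations of $0$ and $x+\Delta$), and finish with monotonicity for $w\le\Delta$. Your reduction via the reverse triangle inequality is a bit cleaner than the paper's three-case analysis, but the substance is identical.
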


We are now ready to state the privacy guarantee of the GDL-noise addition mechanism.

\begin{theorem} \label{thm:main-epsilon}
For any $\Delta \in \N, \beta, a > 0$, the $\GDL(\beta, a)$-noise addition mechanism is $\eps$-DP for sensitivity $\Delta$ iff
\begin{align*}
\eps \leq 
\begin{cases}
a \Delta + \log \frac {\HG{\beta}{\beta}{1}{e^{-2a}}}
    {\HG{\beta}{\beta + \Delta}{1 +\Delta}{e^{-2a}}}
    \frac{\Gamma(\Delta+1)\Gamma(\beta)}{\Gamma(\beta + \Delta)}
    & 0 < \beta < 1\\
a \Delta & \beta \ge 1
\end{cases}
\end{align*}
\end{theorem}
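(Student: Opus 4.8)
The plan is to reduce the DP requirement to a pointwise PMF‑ratio bound via \Cref{lem:cond-dp}, and then evaluate the extremal ratio separately in the two regimes. Write $f := f_{\GDL(\beta,a)}$, which is symmetric about $0$. Since $\GDL(\beta,a)+\xi$ has PMF $k \mapsto f(k-\xi)$, we have $\dr{\GDL(\beta,a)+\xi}{\GDL(\beta,a)} = \log \sup_{k}\frac{f(k-\xi)}{f(k)}$, and substituting $m = k-\xi$ shows that
\[
\max_{\xi \in \{-\Delta,\dots,\Delta\}} \dr{\GDL(\beta,a)+\xi}{\GDL(\beta,a)} \;=\; \log R, \qquad R := \sup\left\{ \tfrac{f(m)}{f(k)} : m,k \in \Z,\ |m-k|\le \Delta\right\}.
\]
By \Cref{lem:cond-dp}, $\log R$ is exactly the worst‑case privacy loss that determines the $\eps$-DP guarantee, so it suffices to evaluate $\log R$ and match it to the displayed right‑hand side in each regime.

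\textbf{Case $0 < \beta < 1$.} Here \Cref{lem:gdl-logconvex} gives that $f$ is decreasing and log convex on $[0,\infty)$, and $f$ is symmetric, so \Cref{lem:logconvex-ratio} applies and yields $R = f(0)/f(\Delta)$, attained at $(m,k)=(0,\Delta)$. Evaluating \eqref{eq:gdl-pmf} at $x=0$ gives $f(0) = (1-e^{-a})^{2\beta}\HG{\beta}{\beta}{1}{e^{-2a}}$ (the Gamma factor equals $1$), and at $x=\Delta$ gives $f(\Delta) = e^{-a\Delta}(1-e^{-a})^{2\beta}\HG{\beta}{\beta+\Delta}{1+\Delta}{e^{-2a}}\frac{\Gamma(\beta+\Delta)}{\Gamma(1+\Delta)\Gamma(\beta)}$. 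Taking the log of the ratio and cancelling the $(1-e^{-a})^{2\beta}$ factors produces exactly $a\Delta + \log\!\big[\frac{\HG{\beta}{\beta}{1}{e^{-2a}}}{\HG{\beta}{\beta+\Delta}{1+\Delta}{e^{-2a}}}\cdot\frac{\Gamma(\Delta+1)\Gamma(\beta)}{\Gamma(\beta+\Delta)}\big]$, as claimed.

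\textbf{Case $\beta \ge 1$.} I will show $R = e^{a\Delta}$, hence $\log R = a\Delta$. Write $f(x) = e^{-a|x|}h(|x|)$ where $h$ is the non‑exponential factor of \eqref{eq:gdl-pmf}. Rerunning the computation in the proof of \Cref{lem:gdl-logconvex} with $\beta \ge 1$ flips a sign: each summand $g(x) = \Gamma(\beta+x+s)/\Gamma(1+x+s)$ now satisfies $g'(x) = g(x)\big(\psi(\beta+x+s)-\psi(1+x+s)\big) \ge 0$, since $\psi$ is increasing (\Cref{obs:digamma}) and $\beta\ge 1$; thus $h$, a positive combination of the $g$'s, is non‑decreasing on $[0,\infty)$. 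Consequently $\frac{f(j)}{f(j+1)} = e^{a}\frac{h(j)}{h(j+1)} \le e^{a}$ for every integer $j \ge 0$, while for $j<0$ symmetry plus $f$ being non‑increasing on the nonnegative integers (unimodality of the GDL, which for $\beta\ge 1$ follows from log‑concavity of $\NB(\beta,1-e^{-a})$ being preserved under convolution with its reflection) gives $\frac{f(j)}{f(j+1)} \le 1 \le e^{a}$. Telescoping these adjacent ratios over any pair with $|m-k|\le\Delta$ gives $\frac{f(m)}{f(k)} \le e^{a\Delta}$, so $R \le e^{a\Delta}$. For the matching value, $\HG{\beta}{\beta+x}{1+x}{e^{-2a}} \to (1-e^{-2a})^{-\beta}$ and $\Gamma(\beta+x)/\Gamma(1+x)\sim x^{\beta-1}$ as $x\to\infty$, so $h(k-\Delta)/h(k)\to 1$ and $\frac{f(k-\Delta)}{f(k)} = e^{a\Delta}\frac{h(k-\Delta)}{h(k)} \to e^{a\Delta}$; hence $R = e^{a\Delta}$.

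\textbf{Main obstacle.} The $0<\beta<1$ case is mechanical once \Cref{lem:gdl-logconvex,lem:logconvex-ratio} are available. The crux is $\beta \ge 1$, where the extremal ratio migrates from the origin into the tail. Making this precise requires (i) the monotonicity of $h$ via the sign flip in the digamma identity, (ii) the adjacent‑ratio bound $\le e^{a}$, whose $j<0$ factors depend on the unimodality/log‑concavity of the GDL, and (iii) the asymptotics of the hypergeometric and Gamma factors certifying that the supremum $e^{a\Delta}$ is genuinely approached, so the privacy loss is \emph{exactly} $a\Delta$. Extra care is warranted because this supremum is a limiting value that is not attained at any finite pair.
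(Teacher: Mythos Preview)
Your proof is correct. The $0<\beta<1$ case is identical to the paper's, so the only substantive comparison is for $\beta\ge 1$.

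For the upper bound when $\beta\ge 1$, the paper takes a much shorter route: it uses \Cref{lem:inf-divis} to write $\GDL(\beta,a)=\DLap(a)+\GDL(\beta-1,a)$ and then invokes post-processing (\Cref{lem:post-processing}) to conclude that the $\GDL(\beta,a)$-noise mechanism is at least as private as the $\DLap(a)$ mechanism, which is $a\Delta$-DP. This avoids any direct PMF analysis. Your approach---flipping the sign in the digamma computation to get $h$ non-decreasing, then invoking log-concavity of $\NB(\beta,\cdot)$ for $\beta\ge 1$ (preserved under convolution) to control the negative-index adjacent ratios, and telescoping---is valid but considerably heavier. The paper's decomposition is what you might call the ``right'' proof here: it explains structurally \emph{why} the privacy level plateaus at $a\Delta$ once $\beta$ crosses $1$, namely that a full $\DLap(a)$ component is already embedded in the noise.

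For tightness, both arguments look at $f(k)/f(k+\Delta)$ as $k\to\infty$, but again the paper is more elementary: it bounds the ratio of the two hypergeometric sums term-by-term (each factor $\ge 1$ when $\beta\ge 1$) to obtain $\log\bigl(f(k)/f(k+\Delta)\bigr)\ge a\Delta+\log\bigl((1+k)_\Delta/(\beta+k)_\Delta\bigr)$, whose second term visibly tends to $0$. Your route via the asymptotic $\HG{\beta}{\beta+x}{1+x}{z}\to(1-z)^{-\beta}$ and $\Gamma(\beta+x)/\Gamma(1+x)\sim x^{\beta-1}$ is correct (dominated convergence justifies the hypergeometric limit), but relies on identities the paper never needs to state.
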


\begin{proof}
For the $0 < \beta < 1$ case, by \Cref{lem:cond-dp,lem:logconvex-ratio} along with the fact that $f_{\GDL(\beta, a)}^{\R}$ is log-convex on $[0, \infty)$ and symmetric about 0, the $\GDL(\beta, a)$-noise addition mechanism is $\eps$-DP for sensitivity $\Delta$ iff
\begin{align*}
\eps \leq \max_{\xi \in \{-\Delta, \dots, \Delta\}} \max_{x \in \Z} \frac{f_{\GDL(\beta, a)}(x - \xi)}{f_{\GDL(\beta, a)}(x)} \overset{\text{\Cref{lem:logconvex-ratio}}}{=} \log \frac{f_{\GDL(\beta, a)}(0)}{f_{\GDL(\beta, a)}(\Delta)},
\end{align*}
which is exactly the claimed bound.

For the $\beta \ge 1$ case, we decompose the mechanism by observing (see \Cref{lem:inf-divis}) that $Z \sim \GDL(\beta, a)$ can be sampled as $Z = Z_1 + Z_2$ where $Z_1 \sim \DLap(a), Z_2 \sim \GDL(\beta - 1, a)$ are independent. By post-processing (\Cref{lem:post-processing}), this means that the $\GDL(\beta, a)$-noise addition mechanism is at least as private as the the
$\DLap(a)$-noise addition mechanism, which satisfies $a \Delta$-DP for sensitivity $\Delta$.

For the tightness claim, first note that \Cref{eq:hyperid1} yields
\begin{align*}
&\log\left(\frac{f_{\GDL(\beta, a)}(k)}{f_{\GDL(\beta, a)}(k + \Delta)}\right) 
= a \Delta + 
    \log \left(
        \frac
            {\sum_{s=0}^\infty \frac{(\beta)_s (\beta + k)_s e^{-2 a s}}{(1 + k)_s s!}}
            {\sum_{s=0}^\infty \frac{(\beta)_s (\beta + k + \Delta)_s e^{-2 a s}}{(1 + k + \Delta)_s s!}} 
        \cdot
        \frac{\Gamma(\beta + k) \Gamma(1 + k + \Delta)}
             {\Gamma(1 + k)\Gamma(\beta + k + \Delta)}
    \right).
\end{align*}
Observe that
\begin{align*}
\frac{\frac{(\beta)_s (\beta + k)_s e^{-2 a s}}{(1 + k)_s s!} }{\frac{(\beta)_s (\beta + k + \Delta)_s e^{-2 a s}}{(1 + k + \Delta)_s s!}} = \prod_{i=0}^{s-1}\frac{\frac{\beta + k + i}{1 + k + i}}{\frac{\beta + k + \Delta + i}{1 + k + \Delta + i}} \geq 1,
\end{align*}
because every term in the product is at least one. Plugging this into the above, we get
\begin{align*}
\log\left(\frac{f_{\GDL(\beta, a)}(k)}{f_{\GDL(\beta, a)}(k + \Delta)}\right) 
\geq a \Delta + 
    \log \left(
        \frac{\Gamma(\beta + k) \Gamma(1 + k + \Delta)}
             {\Gamma(1 + k)\Gamma(\beta + k + \Delta)}
    \right) 
    = a\Delta + \log\left(\frac{(1 + k)_{\Delta}}{(\beta + k)_{\Delta}}\right).
\end{align*}
Since $\lim_{k \to \infty} \frac{(1 + k)_{\Delta}}{(\beta + k)_{\Delta}} = 1$, $\lim_{k \to \infty} \log\left(\frac{f_{\GDL(\beta, a)}(k)}{f_{\GDL(\beta, a)}(k + \Delta)}\right) \geq a\Delta,$ meaning that our bound is tight.
%
\end{proof}

The exact privacy bound for the GDL above (in the case $0 < \beta < 1$) is fairly unwieldy, and implementations of hypergeometric and gamma functions can quickly lead to numerical issues in
practice, especially for large\footnote{
$\Gamma(x+1)$ will overflow a 64 bit unsigned integer for $x \ge 21$.} values of $\Delta$. Therefore, we will show a simplified privacy bound below,  as well as a tighter, slightly more complex bound in \Cref{sec:alt-gdl-privacy}.
\begin{corollary}\label{cor:hyper-ratio}
For any $\Delta \in N, a > 0, \beta \in (0, 1)$, the $\GDL(\beta, a)$-noise addition mechanism is $\eps$-DP for sensitivity $\Delta$ where $\eps \le a \Delta + \log \frac{\Delta}{\beta}$.
\end{corollary}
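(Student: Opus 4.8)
The plan is to invoke the exact characterization in \Cref{thm:main-epsilon}: for $\beta \in (0,1)$ the privacy level of the $\GDL(\beta,a)$-noise addition mechanism for sensitivity $\Delta$ is governed by
\[
a\Delta + \log\left(\frac{\HG{\beta}{\beta}{1}{e^{-2a}}}{\HG{\beta}{\beta + \Delta}{1 + \Delta}{e^{-2a}}}\cdot\frac{\Gamma(\Delta+1)\Gamma(\beta)}{\Gamma(\beta+\Delta)}\right).
\]
Thus it suffices to bound this quantity from above by $a\Delta + \log(\Delta/\beta)$, which reduces to two independent claims: the hypergeometric ratio is at most $1$, and the gamma ratio is at most $\Delta/\beta$. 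Multiplying the two bounds and taking logarithms then gives the corollary.

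For the hypergeometric factor I would compare the two power series of \eqref{eq:hyperid1} coefficient by coefficient (both are sums of nonnegative terms against $e^{-2as}$). The ratio of the $s$-th summand of $\HG{\beta}{\beta}{1}{e^{-2a}}$ to that of $\HG{\beta}{\beta + \Delta}{1 + \Delta}{e^{-2a}}$ telescopes to $\prod_{i=0}^{s-1}\frac{(\beta+i)(1+\Delta+i)}{(\beta+\Delta+i)(1+i)}$, and a one-line expansion shows each factor is $\le 1$: indeed $(\beta+i)(1+\Delta+i)\le(\beta+\Delta+i)(1+i)$ collapses, after cancelling the common $(\beta+i)(1+i)$ term, to the inequality $\beta\le 1$. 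Hence every summand of the numerator is dominated by the corresponding summand of the denominator, so the whole ratio is $\le 1$.

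For the gamma factor I would rewrite it via the Pochhammer symbol \eqref{eq:pochhammer} as $\frac{\Gamma(\Delta+1)\Gamma(\beta)}{\Gamma(\beta+\Delta)}=\frac{\Delta!}{(\beta)_\Delta}$, then lower-bound the denominator: since $\beta>0$ we have $(\beta)_\Delta=\beta\prod_{i=1}^{\Delta-1}(\beta+i)\ge\beta\prod_{i=1}^{\Delta-1}i=\beta\,(\Delta-1)!$. This yields $\frac{\Delta!}{(\beta)_\Delta}\le\frac{\Delta!}{\beta\,(\Delta-1)!}=\frac{\Delta}{\beta}$, with equality at $\Delta=1$, exactly as needed.

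As for the main obstacle, neither step is genuinely deep; the only real content is the term-by-term domination of the hypergeometric series, whose crux is recognizing that the per-factor inequality is equivalent to $\beta\le 1$. The gamma bound is just the monotonicity $\beta+i>i$. I therefore expect a short write-up, with the hypergeometric comparison being the step that requires the most care to state cleanly.
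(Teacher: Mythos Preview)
Your proposal is correct and follows essentially the same approach as the paper: both invoke \Cref{thm:main-epsilon}, bound the hypergeometric ratio by $1$ via term-wise comparison of the series \eqref{eq:hyperid1} (the per-factor inequality you identify is exactly the one the paper uses), and then bound the gamma ratio $\frac{\Delta!}{(\beta)_\Delta}\le\frac{\Delta}{\beta}$ using $\beta+i\ge i$. The paper phrases the last step as $\frac{\Delta!}{(\beta)_\Delta}=\frac{\Delta}{\beta}\cdot\frac{(\Delta-1)!}{(\beta+1)_{\Delta-1}}\le\frac{\Delta}{\beta}$, which is the same inequality you wrote.
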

\begin{proof}

We begin by bounding the ratio of the hypergeometric functions above by 1, i.e.,
$$
\frac  {\HG{\beta}{\beta}{1}{e^{-2a}}}
       {\HG{\beta}{\beta + \Delta}{1 +\Delta}{e^{-2a}}}
= \frac
    {\sum_{s=0}^\infty \frac{(\beta)_s (\beta)_s}{s! \cdot s!} e^{-2 a s}}
    {\sum_{s=0}^\infty \frac{(\beta)_s (\beta + \Delta)_s}{(1+\Delta)_s s!} e^{-2 a s}}
\le 1.
$$

This follows from showing the inequality term wise: 
\begin{align*}
\frac{(\beta)_s}{s!} &= 
\left(\frac{\beta}{1}\right)
\left(\frac{\beta + 1}{2}\right)\cdots
\left(\frac{\beta + s - 1}{s}\right)  
\le
\left(\frac{\beta + \Delta}{1 + \Delta}\right)
\left(\frac{\beta + 1 + \Delta}{2 + \Delta}\right)\cdots
\left(\frac{\beta + s - 1 + \Delta}{s + \Delta}\right) 
&= \frac{(\beta + \Delta)_s}{(1 + \Delta)_s}.
\end{align*}

Therefore, from \Cref{thm:main-epsilon}, the mechanism is $\eps$-DP for all $\eps \le a \Delta + \log \frac{\Gamma(\Delta + 1)\Gamma(\beta)}{\Gamma(\beta + \Delta)}$. Finally, observe that
$
\frac{\Gamma(\Delta + 1)\Gamma(\beta)}{\Gamma(\beta + \Delta)} 
= \frac{\Delta !}{(\beta)_\Delta} 
= \frac{\Delta}{\beta} \cdot \frac{(\Delta - 1)!}{(\beta + 1)_{\Delta-1}}
\leq \frac{\Delta}{\beta}.
$
Combining the two inequalities yields the desired claim.
\end{proof}

Finally, we prove our main accuracy theorem about the GDL mechanism in the low-privacy regime.

\begin{theorem}\label{thm:opt-var}
For any $\Delta \in \N$ and $\eps > 2 + \log \Delta$, the $\GDL\left(\Delta e^{2-\eps}, \frac{2}{\Delta}\right)$-noise addition mechanism is $\eps$-DP for sensitivity $\Delta$ 
and has MSE $O(\Delta^3 e^{-\eps})$. 
\end{theorem}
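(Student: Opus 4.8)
The plan is to split the statement into its privacy and accuracy halves, both of which reduce to substituting $\beta = \Delta e^{2-\eps}$ and $a = \tfrac{2}{\Delta}$ into facts already established. The first thing I would check is that these parameters are admissible for \Cref{cor:hyper-ratio}: since $\beta = \Delta e^{2-\eps}$, the hypothesis $\eps > 2 + \log\Delta$ is exactly equivalent to $\Delta e^{2-\eps} < 1$, i.e. $\beta \in (0,1)$, so the corollary applies.

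For privacy, I would simply invoke \Cref{cor:hyper-ratio}, which guarantees that the mechanism is $\eps'$-DP for sensitivity $\Delta$ with $\eps' = a\Delta + \log\tfrac{\Delta}{\beta}$. Substituting the chosen parameters gives $a\Delta = \tfrac{2}{\Delta}\cdot\Delta = 2$ and $\log\tfrac{\Delta}{\beta} = \log\tfrac{\Delta}{\Delta e^{2-\eps}} = \eps - 2$, so $\eps' = \eps$ exactly. Hence the mechanism is $\eps$-DP, and the parameters are in fact tuned so that the bound from \Cref{cor:hyper-ratio} is saturated.

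For accuracy, I would first note that $\GDL(\beta,a)$ is symmetric about $0$ (being a difference of two i.i.d. variables), so the noise is unbiased and the MSE equals the variance. Writing $Z = Z_1 - Z_2$ with $Z_1, Z_2 \sim \NB(\beta, 1-e^{-a})$ i.i.d. and using the stated negative binomial variance, I get $\Var[\GDL(\beta,a)] = 2\Var[\NB(\beta, 1-e^{-a})] = \frac{2\beta e^{-a}}{(1-e^{-a})^2}$. The only remaining work is to bound the $a$-dependent factor: using the identity $(1-e^{-a})^2 = e^{-a}\cdot 4\sinh^2(a/2)$ together with $\sinh(t) \ge t$, one finds $\frac{e^{-a}}{(1-e^{-a})^2} = \frac{1}{4\sinh^2(a/2)} \le \frac{1}{a^2}$. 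With $a = \tfrac{2}{\Delta}$ this is at most $\tfrac{\Delta^2}{4}$, so $\Var[\GDL(\beta,a)] \le 2\beta\cdot\tfrac{\Delta^2}{4} = \tfrac{\Delta^3 e^{2-\eps}}{2} = O(\Delta^3 e^{-\eps})$.

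The derivation is essentially mechanical once the parameters are handed to us; there is no genuine obstacle in the calculation. The real content sits in the \emph{choice} of $\beta$ and $a$: one wants $a\Delta + \log(\Delta/\beta)$ to equal $\eps$ while simultaneously making $\beta/\sinh^2(a/2)$ (hence the variance) as small as possible, and the balance $a = \Theta(1/\Delta)$, $\beta = \Theta(\Delta e^{-\eps})$ is exactly what yields both the $\eps$-DP guarantee and the $\Theta(\Delta^3 e^{-\eps})$ MSE. The one subtlety worth flagging is that the bound $\sinh(t)\ge t$ (equivalently $e^a \ge 1+a$) holds for all $a>0$, so the $O(\cdot)$ estimate is uniform over all $\Delta \ge 1$ rather than merely asymptotic in $\Delta$.
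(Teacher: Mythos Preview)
Your proposal is correct and follows essentially the same approach as the paper: verify $\beta\in(0,1)$ from the hypothesis, invoke \Cref{cor:hyper-ratio} to get $\eps$-DP with equality, and compute $\Var(\GDL(\beta,a))=2\Var(\NB(\beta,1-e^{-a}))$ before substituting the parameters. The only cosmetic difference is that the paper writes the variance as $\beta/(\cosh(a)-1)$ and asserts the $O(\Delta^3 e^{-\eps})$ bound directly, whereas you make the estimate explicit via $\sinh(t)\ge t$; both are equivalent.
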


\begin{proof}
\emph{(Privacy)} Since $\eps > 2 + \log \Delta$, we have $\beta = \Delta e^{2-\eps} < 1$. Thus, \Cref{cor:hyper-ratio} immediately implies the privacy guarantee.

\emph{(Accuracy)} The MSE of the mechanism is
\begin{align*}
\Var(\GDL(\beta, a)) = 2 \Var(\NB\left(\beta, 1 - e^{-a}\right))
= \frac{\beta}{\cosh(a) - 1}
= \frac{\Delta e^{2-\eps}}{\cosh(2/\Delta) - 1}
= O(\Delta^3 e^{-\eps}). &\qedhere
\end{align*}
\end{proof}

\subsection{The multi-scale discrete Laplace mechanism}
The \emph{$(\eps, \Delta)$-multi-scale discrete Laplace ($(\eps, \Delta)$-MSDLap) distribution} with parameter $\eps > 0, \Delta \in \N$ is defined as the distribution of $\sum_{i=1}^{\Delta} i \cdot X_i$ where $X_1, \dots, X_{\Delta} \overset{\text{i.i.d.}}{\sim} \DLap(\eps)$. From \Cref{obs:scale-infdiv} and \Cref{obs:sum-infdiv}, the $(\eps, \Delta)$-MSDLap distribution is infinitely divisible.

We give a proof below that this mechanism is $\eps$-DP, and that its accuracy guarantee matches that of \Cref{thm:opt-var}.

\begin{theorem} \label{thm:multi-scale-dlap}
For any $\eps > 0, \Delta \in \N$, the $(\eps, \Delta)$-MSDLap-noise addition mechanism is $\eps$-DP for sensitivity $\Delta$. Furthermore, for $\eps \geq 1$, the MSE is $O(\Delta^3 \cdot e^{-\eps})$.
\end{theorem}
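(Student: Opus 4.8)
The plan is to treat privacy and accuracy separately: privacy rests on a single structural observation about the construction, while accuracy is a direct variance computation. Throughout, let $\cD$ denote the $(\eps,\Delta)$-MSDLap distribution and write $Z = \sum_{i=1}^{\Delta} i\,X_i$ with $X_1,\dots,X_\Delta \overset{\text{i.i.d.}}{\sim}\DLap(\eps)$.

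For privacy, by \Cref{lem:cond-dp} it suffices to show $\dr{\cD + \xi}{\cD} \le \eps$ for every $\xi \in \{-\Delta,\dots,\Delta\}$. Since each $X_i$ is symmetric about $0$, so is $\cD$, whence $\dr{\cD+\xi}{\cD} = \dr{\cD - \xi}{\cD}$; thus I would reduce to $\xi \in \{1,\dots,\Delta\}$ (with $\xi = 0$ trivial). The key observation is that the coefficient of $X_\xi$ in $Z$ is exactly $\xi$, so incrementing that single share by $1$ shifts the entire output by exactly $\xi$. Concretely, I would write $Z = \xi X_\xi + Y$ with $Y := \sum_{i\ne \xi} i\,X_i$ independent of $X_\xi$, and define the randomized map $\Phi(t) := \xi t + Y$ (drawing a fresh copy of $Y$), so that $\Phi(X_\xi) \sim \cD$ while $\Phi(X_\xi + 1) \sim \cD + \xi$.

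Invoking post-processing (\Cref{lem:post-processing}) with $\Phi$ then gives
\[
\dr{\cD + \xi}{\cD} = \dr{\Phi(X_\xi + 1)}{\Phi(X_\xi)} \le \dr{X_\xi + 1}{X_\xi} = \dr{\DLap(\eps) + 1}{\DLap(\eps)}.
\]
Finally, from $f_{\DLap(\eps)}(k) = \tanh(\eps/2)\,e^{-\eps|k|}$ I would note $\frac{f_{\DLap(\eps)}(k-1)}{f_{\DLap(\eps)}(k)} = e^{\eps(|k| - |k-1|)} \le e^{\eps}$ for all $k$, so $\dr{\DLap(\eps)+1}{\DLap(\eps)} \le \eps$, closing the privacy claim uniformly in $\Delta$.

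For accuracy, since $Z$ is zero-mean the MSE is just $\Var(Z) = \sum_{i=1}^{\Delta} i^2\,\Var(\DLap(\eps))$ by independence. Plugging in the stated variance $\Var(\DLap(\eps)) = \tfrac{1}{\cosh(\eps)-1}$ and $\sum_{i=1}^\Delta i^2 = \tfrac{\Delta(\Delta+1)(2\Delta+1)}{6} = \Theta(\Delta^3)$ yields $\Var(Z) = \Theta(\Delta^3)/(\cosh(\eps)-1)$, and using $\cosh(\eps)-1 = 2\sinh^2(\eps/2) = \Theta(e^{\eps})$ for $\eps \ge 1$ gives the claimed $O(\Delta^3 e^{-\eps})$. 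The conceptual heart—and the only place requiring care—is the privacy reduction: one must recognize that the multi-scale design is engineered precisely so that the share $X_\xi$ carries coefficient $\xi$, letting a shift of magnitude $\xi$ be simulated by a sensitivity-$1$ perturbation of a single $\DLap(\eps)$ component, after which post-processing, the standard discrete-Laplace ratio bound, and the variance sum are all routine.
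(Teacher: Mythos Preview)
Your proposal is correct and follows essentially the same approach as the paper: reduce via symmetry and \Cref{lem:cond-dp} to $\xi \in [\Delta]$, then use post-processing to bound $\dr{\cD+\xi}{\cD}$ by $\dr{X_\xi+1}{X_\xi} \le \eps$, and compute the variance as $\tfrac{\Delta(\Delta+1)(2\Delta+1)}{6(\cosh(\eps)-1)}$. The only difference is cosmetic---you spell out the post-processing map $\Phi$ and the $\DLap$ ratio bound explicitly, whereas the paper invokes them in one line.
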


\begin{proof}
\emph{(Privacy)} From \Cref{lem:cond-dp} and the symmetry of the noise around 0, it suffices to show \\
$\dr{\xi + \sum_{i=1}^{\Delta} i \cdot X_i}{\sum_{i=1}^{\Delta} i \cdot X_i} \leq \eps$ for all $\xi \in [\Delta]$. From \Cref{lem:post-processing}, we have
\begin{align*}
\dr{\xi + \sum_{i=1}^{\Delta} i \cdot X_i}{\sum_{i=1}^{\Delta} i \cdot X_i} \leq \dr{\xi + \xi \cdot X_{\xi}}{\xi \cdot X_{\xi}} = \dr{1 + X_{\xi}}{X_{\xi}} \leq \eps,
\end{align*}
where the last inequality follows from $X_i \sim \DLap(\eps)$. Thus, the $(\eps, \Delta)$-MSDLap-noise addition mechanism is $\eps$-DP for sensitivity $\Delta$.

\emph{(Accuracy)} The MSE is
$
\Var\left(\sum_{i=1}^{\Delta} i \cdot X_i\right) 
=  \sum_{i=1}^{\Delta} i^2 \cdot \Var(X_i) 
= \frac{\Delta(\Delta + 1)(2\Delta + 1)}{6 (\cosh(\eps) - 1)} 
= O(\Delta^3 \cdot e^{-\eps}). 
$
\end{proof}

The above theorem implies that, for fixed $\Delta$ and sufficiently large $\eps$, the $(\eps, \Delta)$-MSDLap-noise achieves asymptotically optimal MSE. In fact, below we prove a stronger statement that, if we take $\eps \to \infty$, the ratio between MSEs of MSDLap and the discrete staircase mechanism~\cite{stair2015,stair2016adding} approaches 1. 

\begin{corollary}\label{cor:msdlap-matches-staircase}
For a fixed $\Delta$, the ratio of the MSE of the $(\eps, \Delta)$-MSDLap-noise addition mechanism and that of the $\eps$-DP discrete staircase mechanism for sensitivity $\Delta$~\cite{stair2015,stair2016adding} approaches 1 as $\eps \to \infty$.
\end{corollary}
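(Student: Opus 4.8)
The plan is to sandwich the staircase MSE between two expressions that share the same leading term, so that I never need the (unwieldy) closed form of the discrete staircase distribution. Write $\mathrm{MSE}_{\mathrm{MS}}$ and $\mathrm{MSE}_{\mathrm{St}}$ for the MSEs of the two mechanisms and set $S := \sum_{j=1}^{\Delta} j^2 = \tfrac{\Delta(\Delta+1)(2\Delta+1)}{6}$. Because the discrete staircase is the \emph{optimal} $\eps$-DP additive-noise mechanism for sensitivity $\Delta$ \cite{geng2014optimal,stair2015,stair2016adding}, one side of the sandwich comes for free: $\mathrm{MSE}_{\mathrm{St}} \le \mathrm{MSE}_{\mathrm{MS}}$, so the ratio is always at least $1$. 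The real content is the matching lower bound $\mathrm{MSE}_{\mathrm{St}} \ge (1-o(1))\,\mathrm{MSE}_{\mathrm{MS}}$.

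First I would extract the leading asymptotics of the numerator. \Cref{thm:multi-scale-dlap} gives $\mathrm{MSE}_{\mathrm{MS}} = \frac{S}{\cosh(\eps)-1}$, and since $2(\cosh(\eps)-1)e^{-\eps} = (1-e^{-\eps})^2 \to 1$, this equals $2S\,e^{-\eps}(1+o(1))$ as $\eps \to \infty$. Thus the target leading constant is $2S = \tfrac{\Delta(\Delta+1)(2\Delta+1)}{3}$.

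Next I would prove a direct lower bound on the MSE of an \emph{arbitrary} $\eps$-DP noise PMF $f$ matching this constant. Let $k_0$ be a mode of $f$ and $M := f(k_0) = \max_k f(k)$. Applying the PMF form of the privacy condition (\Cref{lem:cond-dp}) with the pair $(k_0, k_0 \pm j)$ for $1 \le j \le \Delta$ yields $f(k_0 \pm j) \ge e^{-\eps} M$. Using $\Var(Z) = \tfrac12 \sum_{k,k'} (k-k')^2 f(k) f(k')$ and retaining only the pairs $\{k_0, k_0 \pm j\}$ gives
\[
\Var(Z) \ \ge\ \sum_{j=1}^{\Delta} j^2\, M\,[f(k_0+j)+f(k_0-j)] \ \ge\ 2 e^{-\eps} M^2 \sum_{j=1}^{\Delta} j^2 \ =\ 2 S\, e^{-\eps} M^2 .
\]
Since the MSE is $\E[Z^2] \ge \Var(Z)$, applying this to the staircase distribution gives $\mathrm{MSE}_{\mathrm{St}} \ge 2S\,e^{-\eps}M^2$. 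Moreover $\mathrm{MSE}_{\mathrm{St}} \le \mathrm{MSE}_{\mathrm{MS}} \to 0$ forces $\Var \to 0$, and then a one-line Chebyshev bound (the probability of landing off the integer nearest the mean is at most $4\Var$) gives $M \ge 1 - 4\Var \to 1$. Hence $\mathrm{MSE}_{\mathrm{St}} \ge 2S\,e^{-\eps}(1-o(1))$.

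Combining the two sides yields $2S\,e^{-\eps}(1-o(1)) \le \mathrm{MSE}_{\mathrm{St}} \le \mathrm{MSE}_{\mathrm{MS}} = 2S\,e^{-\eps}(1+o(1))$, whence $\mathrm{MSE}_{\mathrm{MS}}/\mathrm{MSE}_{\mathrm{St}} \to 1$. The main obstacle is this lower-bound step: I must recover the \emph{exact} constant $2S$, which requires both that the variance-from-pairs estimate loses nothing to leading order (only the $\pm\Delta$-neighborhood of the mode contributes at order $e^{-\eps}$, farther points contributing $O(e^{-2\eps})$) and that the mode mass $M$ genuinely tends to $1$ — the latter being exactly where I lean on the optimality of the staircase to know $\Var \to 0$ before invoking Chebyshev.
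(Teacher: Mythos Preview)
Your proof is correct and takes a genuinely different route from the paper. The paper proceeds by direct computation on the staircase PMF: it establishes (in an appendix observation) that for $\eps \ge \log\!\bigl(\tfrac{\Delta(\Delta+1)(2\Delta+1)}{2}\bigr)$ the staircase variance, for every choice of the width parameter $r$, is at least $\tfrac{1}{e^{\eps}+(2\Delta-1)}\cdot\tfrac{\Delta(\Delta+1)(2\Delta+1)}{3}$, by casework on $r=1$ versus $r\ne 1$; the corollary then follows by dividing the explicit MSDLap variance by this lower bound. Your argument instead never touches the staircase PMF: one direction comes from the black-box optimality of the staircase, and the other is a fully generic lower bound $\Var(Z)\ge 2S\,e^{-\eps}M^2$ valid for \emph{any} $\eps$-DP integer noise, combined with a Chebyshev argument forcing $M\to 1$. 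This is more conceptual and in fact proves something slightly stronger---that MSDLap is asymptotically optimal among all $\eps$-DP additive integer noises, not merely that it matches the staircase---at the cost of importing the staircase optimality theorem from \cite{geng2014optimal,stair2015,stair2016adding} rather than being self-contained.
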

\begin{proof}
As we show in \Cref{obs:dstair-var-large} (in \Cref{sec:dstair-var}), the MSE of the $\eps$-DP discrete staircase mechanism for sensitivity $\Delta$ is at least $\frac{1}{e^{\eps} + (2\Delta - 1)} \cdot \frac{\Delta (\Delta + 1)(2\Delta+1)}{3}$ for any sufficiently large $\eps$. Thus, in this regime, the ratio between the two MSEs is at most
$
\frac{\frac{\Delta(\Delta + 1)(2\Delta + 1)}{6 (\cosh(\eps) - 1)}}{\frac{1}{e^{\eps} + (2\Delta - 1)} \cdot \frac{\Delta (\Delta + 1)(2\Delta+1)}{3}} = \frac{1 + (2\Delta - 1)e^{-\eps}}{1 - 2e^{-\eps} + e^{-2\eps}}.
$
The RHS approaches 1 as $\eps \to \infty$ as claimed.
\end{proof}

While the naive approach to sample from the MSDLap mechanism
requires sampling from $O(\Delta)$ random variables, we show in \Cref{sec:sampling} an efficient algorithm tuned for the high $\eps$ regime. 
\subsubsection{Generalizing the MSDLap mechanism}

We can also generalize the multi-scale discrete Laplace mechanism to match the error in~\cite{geng2014optimal} for every setting of parameters $\Delta, \eps$. We state the theorem below where $r \in \{0, \dots, \Delta\}$ is the free parameter so that it matches the ``$r$'' parameter in the discrete staircase mechanism as presented in \cite{geng2014optimal}. Note that in our results henceforth, we sometimes assume that $\eps \geq 2$ for simplicity; the constant 2 can be changed to any constant\footnote{Namely, by changing the distribution of $X^*$ in the proof of \Cref{thm:all-regime-dlap} to $\DLap(c/r)$ for some larger constant $c > 1$.} but we keep it for simplicity of the distribution description.

\begin{theorem} \label{thm:all-regime-dlap}
For any $\eps \geq 2, \Delta \in \N$ and every $r \in \{0, \dots, \Delta\}$, there exists an infinitely divisible discrete noise-addition mechanism that is $\eps$-DP for sensitivity $\Delta$ and has MSE $O\left(r^2 + \frac{e^{-\eps}\Delta^3}{r + 1}\right)$. 
\end{theorem}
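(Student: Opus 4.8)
The plan is to obtain the claimed family by generalizing the MSDLap construction: instead of placing one discrete Laplace share at every integer scale $1,2,\dots,\Delta$, I would place shares only at the \emph{coarse} scales $r,2r,\dots,rm$ with $m = \lceil \Delta/r\rceil$, and add a single ``smoothing'' share $X^*$ whose job is to absorb the part of a shift that is not a multiple of $r$. Concretely, for $r \ge 1$ I would define the noise as
\[
Z \;=\; X^* + \sum_{i=1}^{m} (r\,i)\cdot X_i, \qquad m = \lceil \Delta/r \rceil,
\]
where $X^* \sim \DLap(1/r)$ and $X_1,\dots,X_m \overset{\text{i.i.d.}}{\sim} \DLap(\eps-1)$ are all independent; the boundary case $r=0$ is just the $(\eps,\Delta)$-MSDLap mechanism of \Cref{thm:multi-scale-dlap}, whose MSE $O(\Delta^3 e^{-\eps})$ already matches $r^2 + \frac{e^{-\eps}\Delta^3}{r+1}$ at $r=0$. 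Infinite divisibility is then immediate from \Cref{obs:scale-infdiv,obs:sum-infdiv}, since $Z$ is a sum of independent scaled discrete Laplace variables, each of which is infinitely divisible.

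For privacy, by \Cref{lem:cond-dp} and the symmetry of $Z$ it suffices to bound $\dr{\xi + Z}{Z}$ for each $\xi \in [\Delta]$. I would write $\xi = rq + s$ with $q = \lfloor \xi/r\rfloor \in \{0,\dots,m\}$ and $s \in \{0,\dots,r-1\}$, and observe that modifying the underlying independent vector $(X^*,X_1,\dots,X_m)$ by replacing $X^*$ with $X^* + s$ and $X_q$ with $X_q + 1$ changes $Z$ by exactly $s + rq = \xi$. Hence the law of $\xi + Z$ is the pushforward of this shifted vector under the (deterministic) linear summation map, so \Cref{lem:post-processing} together with the additivity of max divergence over independent coordinates yields
\[
\dr{\xi + Z}{Z} \;\le\; \dr{s + X^*}{X^*} + \dr{1 + X_q}{X_q}.
\]
The second term equals $\eps - 1$ since $X_q \sim \DLap(\eps-1)$, and the first equals $s/r \le (r-1)/r < 1$ since $X^* \sim \DLap(1/r)$; their sum is strictly below $\eps$. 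When $q = 0$ (i.e.\ $\xi < r$) only the $X^*$ term appears, and the bound is even smaller.

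For accuracy I would simply sum variances. Using $\cosh(x)-1 \ge x^2/2$ and $r\ge 1$, the smoothing term satisfies $\Var(X^*) = \frac{1}{\cosh(1/r)-1} = O(r^2)$, while the coarse shares contribute $\Var\!\left(\sum_{i=1}^m (ri)X_i\right) = \frac{r^2\,m(m+1)(2m+1)}{6(\cosh(\eps-1)-1)} = O\!\left(\Delta^3 e^{-\eps}/r\right)$, where I use $m = O(\Delta/r)$ (valid because $r \le \Delta$) and $\cosh(\eps-1)-1 = \Theta(e^{\eps})$. Adding these and invoking $r \ge 1$ (so $r+1 = \Theta(r)$) gives the target MSE $O\!\left(r^2 + \frac{e^{-\eps}\Delta^3}{r+1}\right)$.

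The main obstacle is the privacy budget accounting: the residual shift absorbed by $X^*$ and the single coarse shift absorbed by $X_q$ must \emph{jointly} fit within $\eps$, which is what forces $X^*$ to have scale parameter $\Theta(1/r)$ (so its cost is $O(1)$) and the coarse shares to have parameter $\eps - \Theta(1)$ rather than $\eps$. This is precisely where the hypothesis $\eps \ge 2$ enters (to keep $\eps - 1 > 0$ and bounded away from $0$), and it is exactly the freedom the footnote describes: replacing $X^*$ by $\DLap(c/r)$ and the shares by $\DLap(\eps - c)$ trades the constant in the assumption against the residual budget, without affecting the asymptotic MSE.
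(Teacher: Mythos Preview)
Your proposal is correct and follows essentially the same approach as the paper's proof: both add a scaled $(\eps-1)$-MSDLap noise at stride $r$ together with a $\DLap(1/r)$ smoothing term, split any shift $\xi$ into a multiple-of-$r$ part handled by one coarse share and a residual handled by the smoothing share, and sum the two privacy costs to get $\le (\eps-1)+1=\eps$. The only cosmetic differences are that the paper packages the coarse shares as a single $(\eps-1,\Delta_0)$-MSDLap variable with $\Delta_0=\lfloor \Delta/r\rfloor$ (appealing to \Cref{thm:multi-scale-dlap} as a black box) rather than writing them out explicitly, and uses the triangle inequality plus post-processing instead of your tensorization-then-post-processing phrasing; your choice of $m=\lceil \Delta/r\rceil$ is one larger than necessary but harmless for both privacy and the MSE bound.
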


By plugging in $r = 0, r = \lceil e^{-\eps/3} \Delta \rceil$, we get the following, which will be useful later on in \Cref{sec:shuffle}.
\begin{corollary}\label{cor:msdlap-rextremes}
For any $\eps \geq 2$, there exists an infinitely divisible discrete noise-addition mechanism that is $\eps$-DP for sensitivity $\Delta$ with MSE $O\left(\Delta^2 \min\{e^{-\eps} \Delta, e^{-2\eps/3}\}\right)$. 
\end{corollary}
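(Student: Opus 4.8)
The plan is to derive this corollary directly from \Cref{thm:all-regime-dlap} by instantiating its free parameter $r$ at two well-chosen values and then keeping whichever of the two resulting mechanisms has smaller MSE. Since the target bound $O(\Delta^2 \min\{\Delta e^{-\eps}, e^{-2\eps/3}\})$ is itself a minimum of two terms, each term will be matched by one of the two instantiations. Concretely, I would take $r = 0$ to cover the $\Delta^3 e^{-\eps}$ regime and $r = \lceil e^{-\eps/3}\Delta\rceil$ to cover the $\Delta^2 e^{-2\eps/3}$ regime.

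For $r = 0$, \Cref{thm:all-regime-dlap} immediately yields MSE $O\!\left(0 + \frac{e^{-\eps}\Delta^3}{1}\right) = O(\Delta^3 e^{-\eps}) = O(\Delta^2 \cdot \Delta e^{-\eps})$. For $r = \lceil e^{-\eps/3}\Delta\rceil$, I would first check the theorem's feasibility constraint $r \in \{0,\dots,\Delta\}$: because $\eps \ge 2$ gives $e^{-\eps/3} < 1$, we have $e^{-\eps/3}\Delta < \Delta$, so $r = \lceil e^{-\eps/3}\Delta\rceil \le \Delta$. Then I would bound the two summands in the theorem's MSE: using $r + 1 > r \ge e^{-\eps/3}\Delta$ gives $\frac{e^{-\eps}\Delta^3}{r+1} \le \frac{e^{-\eps}\Delta^3}{e^{-\eps/3}\Delta} = e^{-2\eps/3}\Delta^2$, and $r \le e^{-\eps/3}\Delta + 1$ gives $r^2 = O(e^{-2\eps/3}\Delta^2 + 1)$. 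Hence this mechanism has MSE $O(e^{-2\eps/3}\Delta^2 + 1)$.

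Taking the better of the two, the achievable MSE is $O\!\left(\min\{\Delta^3 e^{-\eps},\, e^{-2\eps/3}\Delta^2 + 1\}\right)$, and the only remaining point is to remove the stray additive constant coming from the ceiling. I would handle this with a short case split on the size of $e^{-2\eps/3}\Delta^2$. If $e^{-2\eps/3}\Delta^2 \ge 1$, then $e^{-2\eps/3}\Delta^2 + 1 \le 2 e^{-2\eps/3}\Delta^2$, so the second instantiation already gives $O(\Delta^2 e^{-2\eps/3})$ and the bound follows. If instead $e^{-2\eps/3}\Delta^2 < 1$, this is equivalent to $\Delta < e^{\eps/3}$, i.e.\ $\eps > 3\log\Delta$, which is exactly the regime in which $\Delta^3 e^{-\eps} \le \Delta^2 e^{-2\eps/3}$; here the first instantiation's bound $\Delta^3 e^{-\eps}$ is already the minimum, so it dominates the stray $+1$ term and the claimed bound holds. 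Combining the two cases gives MSE $O(\Delta^2 \min\{\Delta e^{-\eps}, e^{-2\eps/3}\})$, as desired.

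Finally, infinite divisibility is inherited for free, since both instantiations are produced by \Cref{thm:all-regime-dlap}, which already guarantees an infinitely divisible mechanism. I do not expect a genuine obstacle here: all of the substantive work --- constructing the generalized MSDLap mechanism, its privacy proof, and the $O(r^2 + e^{-\eps}\Delta^3/(r+1))$ MSE bound --- is carried out in \Cref{thm:all-regime-dlap}. The corollary is essentially parameter optimization, and the one spot requiring a little care is the integer rounding in $r = \lceil e^{-\eps/3}\Delta\rceil$, whose additive slack is absorbed by the case split above.
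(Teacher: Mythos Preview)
Your proposal is correct and matches the paper's approach exactly: the paper states the corollary follows ``by plugging in $r = 0, r = \lceil e^{-\eps/3} \Delta \rceil$'' into \Cref{thm:all-regime-dlap} without further detail. You have simply spelled out the routine bookkeeping (feasibility of $r$, handling the ceiling's additive slack via a case split on whether $e^{-2\eps/3}\Delta^2 \ge 1$) that the paper leaves implicit.
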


We are now ready to prove \Cref{thm:all-regime-dlap}. The rough idea is that, instead of using only the $(\eps, \Delta)$-MSDLap noise, we first add a scaled-up $(\eps - 1, \Delta_0)$-MSDLap noise where $\Delta_0 < \Delta$. The scaling up leaves us with ``holes'' in the noise distribution. To fix this, we additionally add another DLap noise to ``smoothen out the holes''. This idea is formalized below.

\begin{proof}[Proof of \Cref{thm:all-regime-dlap}]
The case $r = 0$ follows from \Cref{thm:multi-scale-dlap}. Thus, we can henceforth consider $r \geq 1$.

Let $\Delta_0 = \lfloor \Delta / r \rfloor$. Let the noise distribution $\cD$ be the distribution of $Z = r \cdot X + Y$ where $X \sim (\eps-1,\Delta_0)\mathrm{-MSDLap}$ and $Y \sim \DLap(1/r)$ are independent. The infinite divisibility of $\cD$ follows from the infinite divisibility of MSDLap, DLap and \Cref{obs:scale-infdiv,obs:sum-infdiv}.

\emph{(Privacy)} 
From \Cref{lem:cond-dp}, it suffices to show $\dr{\xi + Z}{Z} \leq \eps$ for all $\xi \in \{-\Delta, \dots, \Delta\}$. Let $i^* = \lfloor \xi / r \rfloor$ and $j^* = \xi - r \cdot i^*$. Note that $i^* \in [\Delta_0]$ and $j^* \in [r]$. From \Cref{lem:tri-ineq,lem:post-processing}, we then have
\begin{align*}
\dr{\xi + Z}{Z}
&= \dr{r \cdot i^* + j^* + r \cdot X + Y}{r \cdot X + Y} \\
&\leq \dr{r \cdot i^* + j^* + r \cdot X + Y}{r \cdot i^* + r \cdot X + Y} + \dr{r \cdot i^* + r \cdot X + Y}{r \cdot X + Y} \\
&\leq \dr{j^* + Y}{Y} + \dr{i^* + X}{X} \\
&\leq 1 + (\eps - 1) = \eps,
\end{align*}
where the last inequality follows from $Y \sim \DLap(1/r)$ and $X \sim (\eps - 1, \Delta_0)\textrm{-}\MSDLap$ (together with \Cref{thm:multi-scale-dlap} and \Cref{lem:cond-dp}).

\emph{(Accuracy)} The MSE is
$
r^2 \cdot \Var(X) + \Var(Y)
\leq O(r^2 \cdot \Delta_0^3 \cdot e^{-\eps}) +  O(r^2) 
= O(r^2 + e^{-\eps} \Delta^3 / r)$
\end{proof}

\begin{remark} \label{remark:generalize}
The MSDLap mechanism can be generalized further than in the proof of \Cref{thm:all-regime-dlap}. In particular, rather than only considering $\DLap$ noise as our basic primitive, we can consider the $(\cD_1, \cD_2)$-generalized-multi-scale mechanism that adds noise from two different distributions $\cD_1$ and $\cD_2$ where
\begin{itemize}
    \item the $\cD_1$-noise addition mechanism is $\eps_1$-DP for $\Delta = 1$, and is added at multiple scales
    \item the $\cD_2$-noise addition mechanism is $\eps_2$-DP for $\Delta = r$, and is used to "smoothen out the holes" in the noise from $\cD_1$
\end{itemize}
More specifically, the final noise we add is $r \cdot \left(\sum_{i=1}^{\Delta_0} i \cdot X_i\right) + Y$ where $X_1, \dots, X_{\Delta_0} \overset{\text{i.i.d.}}{\sim} \cD_1$ and $Y \sim \cD_2$ are independent. The proof of privacy in this case proceeds identically to \Cref{thm:all-regime-dlap} yielding an $(\eps_1 + \eps_2)$-DP mechanism overall.
This allows us to consider $\cD_1, \cD_2 \sim \GDL$, which gives us the multi-scale version of the GDL mechanism; this noise distribution is additionally closed under summation.
\end{remark}

We conclude this section by plotting the MSE of our new mechanisms in \Cref{fig:discrete} vs. 
baselines.

\section{From Integer-Valued to Real-Valued Functions}\label{sec:integer-to-real}

We next show simple methods to transform a mechanism for integer-valued functions to real-valued functions. The approach is similar to that of \Cref{thm:all-regime-dlap}, except that we use the (continuous) Laplace noise to ``smoothen out the holes'' instead of its discrete analogue.

\begin{theorem}\label{thm:discrete-to-cont}
For $\eps \geq 2$ and $\Delta > 0$, there exists a continuous infinitely divisible noise-addition mechanism that is $\eps$-DP for sensitivity $\Delta$ with MSE $O(\Delta^2 \cdot e^{-2\eps/3})$. 
\end{theorem}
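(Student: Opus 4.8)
The plan is to follow the proof of \Cref{thm:all-regime-dlap} essentially verbatim, the only change being that the ``holes'' left by a scaled-up discrete mechanism are smoothened by a \emph{continuous} Laplace variable instead of a discrete one. Concretely, I would set the real scaling $r := \Delta e^{-\eps/3}$ and the integer sensitivity $\Delta_0 := \lceil e^{\eps/3}\rceil \in \N$, and take the noise to be
\[
Z = r \cdot X + Y, \qquad X \sim (\eps-1, \Delta_0)\text{-}\MSDLap, \quad Y \sim \Lap(r),
\]
with $X, Y$ independent. The key structural point is that $r$ is a genuine real number (not rounded to an integer), so the integer sensitivity $\Delta_0$ of the discrete part depends only on $\eps$; this is what makes the construction valid for \emph{every} real $\Delta > 0$, including $\Delta < 1$. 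Since $r \cdot X$ is a scaling of the infinitely divisible MSDLap (\Cref{obs:scale-infdiv}) and $Y$ is infinitely divisible, $Z$ is infinitely divisible by \Cref{obs:sum-infdiv}; moreover $Z$ is continuous because $Y$ is, so the continuous case of \Cref{lem:cond-dp} applies.

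For privacy, by \Cref{lem:cond-dp} it suffices to bound $\dr{\xi + Z}{Z}$ for every real $\xi \in [-\Delta, \Delta]$. I would write $\xi = r \cdot i^* + j^*$ with $i^* = \lfloor \xi/r \rfloor \in \Z$ and remainder $j^* = \xi - r \cdot i^* \in [0, r)$. Since $|\xi| \le \Delta = r \cdot e^{\eps/3}$, we get $|i^*| \le \lceil e^{\eps/3}\rceil = \Delta_0$. Exactly as in \Cref{thm:all-regime-dlap}, applying \Cref{lem:tri-ineq,lem:post-processing} through the intermediate point $r i^* + r X + Y$ gives
\[
\dr{\xi + Z}{Z} \;\le\; \dr{j^* + Y}{Y} + \dr{i^* + X}{X} \;\le\; 1 + (\eps - 1) = \eps,
\]
where the first term is at most $1$ since $Y \sim \Lap(r)$ is $1$-DP for sensitivity $r$ and $|j^*| < r$, and the second is at most $\eps - 1$ since $X$ is $(\eps-1)$-DP for sensitivity $\Delta_0$ by \Cref{thm:multi-scale-dlap}. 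This last step is where the hypothesis $\eps \ge 2$ enters, ensuring $\eps - 1 \ge 1$ so that the MSDLap accuracy bound is available.

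For accuracy, independence gives $\Var(Z) = r^2 \Var(X) + \Var(Y) = r^2 \cdot O\!\left(\Delta_0^3 e^{-(\eps-1)}\right) + 2 r^2$, using \Cref{thm:multi-scale-dlap} for $\Var(X)$ and $\Var(\Lap(r)) = 2r^2$. Because $\Delta_0 = \lceil e^{\eps/3}\rceil = \Theta(e^{\eps/3})$, we have $\Delta_0^3 e^{-(\eps-1)} = O(1)$, so the first term is $O(r^2) = O(\Delta^2 e^{-2\eps/3})$, and the second term equals $2\Delta^2 e^{-2\eps/3}$. Hence $\Var(Z) = O(\Delta^2 e^{-2\eps/3})$, as claimed.

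The only genuinely delicate point, relative to \Cref{thm:all-regime-dlap}, is the passage to real-valued queries: here $\xi$ ranges over a continuum, so the remainder $j^*$ is an arbitrary real in $[0, r)$ and must be absorbed entirely by the continuous Laplace term. This forces $r$ to be of order $\Delta e^{-\eps/3}$ (so that $\Var(Y) = 2r^2$ stays within the target), while the integer part $i^*$ is handled by the discrete MSDLap at integer sensitivity $\Delta_0$. I expect the rest of the argument to be routine, since the privacy decomposition and the variance bookkeeping mirror \Cref{thm:all-regime-dlap} almost line for line; the main thing to get right is simply choosing $r$ as a real and $\Delta_0$ as a fixed $\eps$-dependent integer so that the construction remains well-defined for all $\Delta > 0$.
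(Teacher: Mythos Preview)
Your proposal is correct and follows essentially the same approach as the paper: scale up a discrete $(\eps-1,\Delta_0)$-MSDLap noise and smooth the holes with a continuous Laplace, then split $\xi$ into an integer multiple of $r$ handled by the discrete part and a small real remainder handled by the Laplace. The only cosmetic differences are that the paper first normalizes to $\Delta=1$, takes $r=1/\lceil e^{\eps/3}\rceil$, and rounds $\xi/r$ to the \emph{nearest} integer so that $|j^*|\le r/2$ and $Y\sim\Lap(r/2)$ suffices; your choice of $i^*=\lfloor\xi/r\rfloor$ with $Y\sim\Lap(r)$ costs a harmless factor of $2$ in the Laplace variance but is otherwise equivalent.
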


\begin{proof}
Since we can scale the input by $1/\Delta$, add noise and rescale back, we assume w.l.o.g. that $\Delta = 1$. 

Let $\eps_{\disc} = \eps - 1, \Delta_{\disc} = \lceil e^{\eps/3} \rceil$ and $r = \frac{1}{\Delta_{\disc}}$. Let $\cD_{\disc}$ be any infinitely divisible discrete distribution such that the $\cD_{\disc}$-noise addition mechanism is $\eps_{\disc}$-DP for sensitivity $\Delta_{\disc}$ with MSE $O(\Delta_{\disc}^3 \cdot e^{-{\eps_{\disc}}})$. (Such a distribution exists due to \Cref{thm:multi-scale-dlap}.\footnote{The GDL distribution also meets the requirements, but only for a specific regime of $\eps$. Specifically, from \Cref{thm:opt-var} we need to ensure that $\eps_\disc > 2 + \log(\Delta_\disc) > 2 + \eps / 3$. For $\eps_\disc=\eps - 1$, the continuous transformation of the GDL is valid when $\eps > 4.5$.})

Let $\cD$ be the distribution of $Z = r \cdot X + Y$ where $X \sim \cD_{\disc}$ and $Y \sim \Lap\left(r/2\right)$ are independent. Since both $X, Y$ are infinitely divisible, \cref{obs:scale-infdiv,obs:sum-infdiv} imply that $\cD$ is also infinitely divisible.

\emph{(Privacy)} From \Cref{lem:cond-dp}, it suffices to show  $\dr{\xi + Z}{Z} \leq \eps$ for all $\xi \in [-1, 1]$. Let $i^*$ be the closest integer to $\xi / r$ and let $j^* = \xi - r \cdot i^*$. Note that $i^* \in \{-\Delta_{\disc}, \dots, \Delta_{\disc}\}$ and\footnote{
Note that the choice of $i^*$ in order to halve the maximum value of $|j^*|$ cannot be directly applied to \Cref{thm:all-regime-dlap}. In that theorem,
we take $\Delta_0 = \lfloor \Delta / r \rfloor$, while the approach here only works when $\Delta_0 \ge \Delta / r$.} $j^* \in [-r/2, r/2]$. 
From \Cref{lem:tri-ineq,lem:post-processing}, we have
\begin{align*}
\dr{\xi + Z}{Z}
&= \dr{r \cdot i^* + j^* + r \cdot X + Y}{r \cdot X + Y} \\
&\leq \dr{r \cdot i^* + j^* + r \cdot X + Y}{r \cdot i^* + r \cdot X + Y} + \dr{r \cdot i^* + r \cdot X + Y}{r \cdot X + Y} \\
&\leq \dr{j^* + Y}{Y} + \dr{i^* + X}{X} \\
&\leq 1 + (\eps - 1) = \eps,
\end{align*}
where the last inequality follows from $Y \sim \Lap(r/2)$ and $X$-noise addition mechanism is $\eps$-DP for sensitivity $\Delta_{\disc}$ (and \Cref{lem:cond-dp}).

\emph{(Accuracy)} The MSE of the mechanism is
\begin{align*}
\Var\left(r \cdot X + Y\right) 
= r^2 \cdot \Var(X) + \Var(Y)
\leq O\left(\Delta_{\disc} \cdot e^{-\eps}\right) + O\left(\left(\frac{1}{\Delta_{\disc}}\right)^2\right)
\leq O(e^{-2\eps/3}),
\end{align*}
where the last inequality is from our choice $\Delta_{\disc} = \Theta(e^{\eps/3})$.
\end{proof}

\modified{Similar to \Cref{remark:generalize}, we may also consider $\cD_{\disc}$ apart from the MSDLap distribution, as long as it satisfies $\eps_{\disc}$-DP for sensitivity $\Delta_{\disc}$. In particular, we may use the GDL distribution.}
We conclude this section by plotting the results of transforming the discrete mechanisms from \Cref{sec:discrete} in \Cref{fig:cont}.

\begin{figure}[t]
  \centering
  \includesvg[inkscapelatex=false, width =1\linewidth]{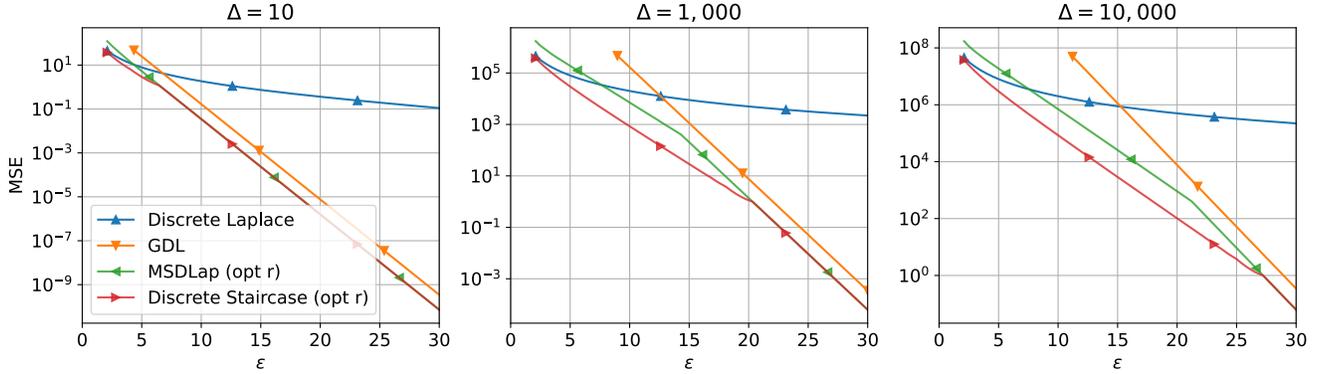}
  \caption{The MSE of the GDL mechanism (\Cref{thm:opt-var}) and  MSDLap mechanism (\Cref{thm:multi-scale-dlap}) with optimized $r$. We include the discrete Laplace and staircase (\Cref{sec:dstair-var}) baselines. In the high $\eps$ regime our mechanisms closely track the MSE of the discrete staircase. The MSDLap mechanism meets the MSE exactly at high $\eps$.}
  \label{fig:discrete}
\end{figure}

\begin{wrapfigure}{R}{0.45\textwidth}
  \includesvg[inkscapelatex=false, width = \linewidth]{images/mse-cont.svg}
  \caption{The MSE of the continuous GDL (\Cref{thm:opt-var}) and MSDLap (\Cref{thm:multi-scale-dlap}) after the continuous transformation of \Cref{thm:discrete-to-cont} is applied to them. We also plot the Arete \cite[Lemma 3]{pagh2022infinitely} and continuous staircase \cite[eq. 51]{geng2014optimal} mechanisms as baselines.}
  \label{fig:cont}
  \vspace{-28pt}
\end{wrapfigure}

\section{Efficiently and exactly sampling from the distributed MSDLap}\label{sec:sampling}
This section outlines an algorithm to efficiently sample from the MSDLap mechanism described in \Cref{thm:multi-scale-dlap}
in the distributed setting over $n$ parties. Recall that the $(\eps, \Delta)$-MSDLap noise is defined as $\sum_{i=1}^{\Delta} i \cdot X_i$ where $X_1, \dots, X_\Delta \overset{i.i.d.}{\sim} \DLap(\eps) = \NB(1, 1-e^{-\eps}) - \NB(1, 1-e^{-\eps})$. In other words, each of the $n$ parties need to sample $\sum_{i=1}^{\Delta} i \cdot (U_i - V_i)$ where $U_1, \dots, U_{\Delta}, V_1, \dots, V_{\Delta} \overset{i.i.d.}{\sim} \NB(1/n, 1-e^{-\eps})$. Thus, the naive algorithm 
requires each party to sample from $k = 2 \Delta$ negative binomial random variables. 
For large $\Delta$, or for $\Delta = O(e^\eps/3)$ as in \Cref{cor:msdlap-rextremes}, this may be computationally expensive.

Our algorithm resolves this issue, allowing us to sample from exponentially many (in $\eps$) negative binomial random variables in time polynomial in $\eps$.


\begin{theorem}\label{thm:multi-nb-sample}
For input $k \in \mathbb{N}, r, \gamma \in \mathbb{Q}_{> 0}$, and $p = e^{-\gamma}$, the procedure described in \Cref{alg:multi-nb-sample} returns non-zero samples from $k$ i.i.d. samples of $\NB(r, 1-e^{-\eps})$ and completes in $\widetilde{O}\left(1 + k \cdot r \cdot e^{-\eps}\right)$ steps in expectation, where $\eps = \log \frac{1}{1-e^{-\gamma}}$ and $\widetilde{O}$ hides polynomial factors in $\eps$.
\end{theorem}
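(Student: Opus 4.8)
The plan is to separate correctness from running time. For correctness I would not reason about the algorithm's bookkeeping directly, but instead identify the exact joint law it must reproduce and exhibit a clean distributional decomposition of it. Write $p = 1-e^{-\eps}$ and $\theta = \frac{1-p}{p} = \frac{e^{-\eps}}{1-e^{-\eps}}$, and recall the Gamma--Poisson representation $\NB(r,p) = \Poi(\Lambda)$ with $\Lambda \sim \Gamma(r,\theta)$. Hence $k$ i.i.d.\ draws $X_1,\dots,X_k \sim \NB(r,p)$ can be generated by drawing independent rates $\Lambda_1,\dots,\Lambda_k \sim \Gamma(r,\theta)$ and then $X_j \mid \Lambda_j \sim \Poi(\Lambda_j)$. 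Conditioning on the rates, the total $T := \sum_j X_j$ is $\Poi(\sum_j \Lambda_j)$ and $(X_1,\dots,X_k)\mid T$ is multinomial with weights $\Lambda_j/\sum_\ell \Lambda_\ell$. Using the standard facts that $\sum_j \Lambda_j \sim \Gamma(kr,\theta)$ and that the normalized rates $(\Lambda_j/\sum_\ell \Lambda_\ell)$ form an independent $\mathrm{Dir}(r,\dots,r)$ vector, marginalizing the rates back out yields the decomposition I would use:
\[
T \sim \NB(kr,p), \qquad (X_1,\dots,X_k)\mid T \sim \DirM(T; r,\dots,r).
\]

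Accordingly, the first step should draw the single total $T \sim \NB(kr,p)$ (justified by $\sum_{i=1}^k \NB(r,p)\sim\NB(kr,p)$, the infinite divisibility of the negative binomial), and the second step should allocate the $T$ units among the $k$ coordinates as $\DirM(T; r,\dots,r)$. I would realize the allocation with a P\'olya-urn process: each unit is assigned to coordinate $j$ with probability proportional to $r + (\text{units already in } j)$, which is exactly the sequential description of $\DirM(T; r,\dots,r)$. To output only the nonzero samples, I would keep the empty coordinates lumped together with total weight $r\cdot(\#\text{empty})$, so that each unit either joins an already-occupied coordinate or opens a uniformly random fresh one; this visits at most $T$ coordinates and returns precisely the nonzero $X_j$'s with their indices. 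Correctness then reduces to checking that the urn reproduces $\DirM$ and that the two primitives (the single $\NB(kr,p)$ draw and the per-unit categorical choices) are exact.

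For the running time I would bound the number of urn steps by $T$ and show $T$ is small in expectation. Since $\E[T] = \frac{kr(1-p)}{p} = \frac{kr\,e^{-\eps}}{1-e^{-\eps}}$ and $1-e^{-\eps}$ is bounded below by a constant for $\eps \ge 1$, we get $\E[T] = O(kr\,e^{-\eps})$, with tails decaying fast enough (in $\eps$) that higher moments cost only $\poly(\eps)$. Each urn step performs a Bernoulli choice between ``occupied'' and ``fresh'' (rational weights of bounded bit-complexity) and, if occupied, a weighted categorical draw over the current occupied set; maintaining cumulative weights in a Fenwick/alias structure makes each step cost $\tO(1)$. Combining, the expected total work is $\E[T+1]\cdot \tO(1)$ plus the cost of the single $\NB(kr,p)$ draw, itself done in $\tO(1 + \E[T])$ via the Gamma--Poisson mixture, giving $\tO(1 + kr\,e^{-\eps})$ overall.

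The main obstacle I anticipate is the exactness of the primitive samplers rather than the combinatorics: drawing $\NB(kr,p)$ and making the per-step Bernoulli/categorical decisions involve the transcendental quantity $e^{-\eps}$ (equivalently $p = e^{-\gamma}$ with rational $\gamma$), so one must argue these can be sampled \emph{exactly}---e.g., via exact geometric/Bernoulli-factory samplers for base-$e$ parameters---in $\poly(\eps)$ expected time each, and then confirm that this per-step overhead together with the tail of $T$ contributes only the $\poly(\eps)$ factors hidden by $\tO$. A secondary subtlety is ensuring that the ``fresh coordinate'' bookkeeping (tracking the number of still-empty coordinates and sampling a uniform empty index) remains $\tO(1)$ per step, so that the sparsity of the nonzero set is genuinely exploited.
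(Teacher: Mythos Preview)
Your high-level decomposition---sample the total $T\sim\NB(kr,p)$, then allocate via $\DirM(T;r,\dots,r)$ using a P\'olya urn, returning only the nonzero cells---is exactly the paper's structure. Your derivation of $(X_1,\dots,X_k)\mid T\sim\DirM$ via the Gamma--Poisson mixture is a pleasant alternative to the paper's direct Bayes' theorem calculation (Lemma~\ref{lem:nb-conditioned-sum-is-dirm}); both land on the same conditional law, and your urn bookkeeping is essentially the paper's \Cref{alg:dirichlet}.

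The substantive divergence is in how the single $\NB(kr,p)$ draw is realized. Your proposal to sample it ``via the Gamma--Poisson mixture'' cannot be made exact in the Word RAM model: exact sampling of a continuous $\Gamma(kr,\theta)$ variate is not available, and a Bernoulli-factory workaround for that step is not obvious. The paper instead uses a purely discrete, exact sampler (\Cref{alg:nb-sample-geo}, \Cref{prop:nb-sample-from-geo}): for integer $r$ it counts failures by batching runs of successes with the exact $\Geo(1-e^{-\gamma})$ sampler of Canonne--Kamath--Steinke, and for rational $r$ it proposes $\NB(\lceil r\rceil,p)$ and accepts with the rational ratio $(r)_w/(\lceil r\rceil)_w$. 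This makes the expected work scale with $\E[\NB(kr,p)]$ rather than with $kr$, which is precisely the $\tO(1+kr\,e^{-\eps})$ you want. You correctly flagged exactness as the main obstacle; the missing ingredient in your write-up is this geometric-batching-plus-rejection construction for the $\NB$ draw.
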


Our approach leverages the fact that in the high $\eps$ regime, most of these negative binomial random variables will be 0. We re-frame the problem of sampling many negative binomials into two separate problems:

\begin{itemize}
    \item Sampling from the \emph{sum} of many i.i.d. negative binomials.
    \item Fairly \emph{allocating} the result across each random variable.
\end{itemize}

While sampling from the sum of many negative binomials is simple on its face given their infinite divisibility, standard exact samplers for $\NB(r, p)$ (e.g. \cite{heaukulani2019black}) take time linear in $r$ which is not desirable. Below we will describe an algorithm (\Cref{alg:nb-sample-geo}) whose (expected) running time only scales with the mean of $\NB(r, p)$, which is only $O(r \cdot e^{-\eps})$ in our setting.

To fairly allocate across the random variables, we leverage the fact that the conditional distribution
of the sequence of $\NB$ random variables given their sum follows the
\emph{Dirichlet multinomial} distribution\footnote{Like the negative binomial with real stopping parameter, the Dirichlet multinomial distribution is \emph{also} sometimes called the (multivariate) P\'{o}lya distribution.} denoted $\DirM(n ,\boldsymbol{\alpha})$ where $\boldsymbol{\alpha} = \{\alpha_1, \dots, \alpha_k\}$ and $\alpha_0 = \sum_{i=1}^k \alpha_i$.

\begin{align*}
f_{\DirM(n, \boldsymbol{\alpha})}(x) &=
    \frac{\Gamma(\alpha_0)\Gamma(n+1)}{\Gamma(n + \alpha_0)}
    \prod_{i=1}^k \frac{\Gamma(x_i + \alpha_i)}{\Gamma(\alpha_i)\Gamma(x_i + 1)}
\end{align*}

\begin{lemma}[e.g. \cite{zhou2018nonparametric,townes2020review}]\label{lem:nb-conditioned-sum-is-dirm}
Let $X = \{X_1, \dots, X_k\}$ be a vector of independent entries where each $X_i \sim \NB(\alpha_i, p)$. Let $T = \sum_{i=1}^k X_i$. Then the conditional distribution $X | T = t \sim \DirM(t, \boldsymbol{\alpha})$. 
\end{lemma}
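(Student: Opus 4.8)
The plan is to prove Lemma~\ref{lem:nb-conditioned-sum-is-dirm} by a direct computation of the conditional PMF via Bayes' rule. The key structural fact I would exploit is that the sum of independent negative binomials with a \emph{common} success parameter $p$ is again negative binomial: since $\NB(\alpha_i, p)$ is infinitely divisible and closed under summation with the shape parameters adding, we have $T = \sum_{i=1}^k X_i \sim \NB(\alpha_0, p)$ where $\alpha_0 = \sum_{i=1}^k \alpha_i$. This gives us a clean closed form for the denominator in the conditional PMF.

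**Main computation.** I would write
\[
\Pr[X = x \mid T = t] = \frac{\Pr[X = x]}{\Pr[T = t]}
\]
for any $x \in \Z_{\ge 0}^k$ with $\sum_i x_i = t$ (the conditional probability being zero otherwise). The numerator factors as a product over $i$ because the $X_i$ are independent:
\[
\Pr[X = x] = \prod_{i=1}^k (1-p)^{x_i} p^{\alpha_i} \frac{\Gamma(x_i + \alpha_i)}{\Gamma(\alpha_i)\Gamma(x_i + 1)}.
\]
The denominator is $\Pr[T = t] = (1-p)^t p^{\alpha_0} \frac{\Gamma(t + \alpha_0)}{\Gamma(\alpha_0)\Gamma(t+1)}$. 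When I take the ratio, the factor $p^{\alpha_0} = \prod_i p^{\alpha_i}$ cancels exactly, and since $\sum_i x_i = t$ the factor $(1-p)^t = \prod_i (1-p)^{x_i}$ also cancels completely. This is the crucial point: \emph{all dependence on $p$ disappears}, leaving only gamma-function factors, which is precisely why the conditional law is the parameter-free Dirichlet multinomial.

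**Finishing.** After the cancellation, the ratio becomes
\[
\frac{\prod_{i=1}^k \frac{\Gamma(x_i + \alpha_i)}{\Gamma(\alpha_i)\Gamma(x_i+1)}}{\frac{\Gamma(t+\alpha_0)}{\Gamma(\alpha_0)\Gamma(t+1)}} = \frac{\Gamma(\alpha_0)\Gamma(t+1)}{\Gamma(t+\alpha_0)} \prod_{i=1}^k \frac{\Gamma(x_i + \alpha_i)}{\Gamma(\alpha_i)\Gamma(x_i + 1)},
\]
which is exactly $f_{\DirM(t, \boldsymbol{\alpha})}(x)$ as defined in the excerpt (with $n = t$). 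This completes the identification.

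**Expected obstacle.** There is no deep difficulty here; the computation is essentially a one-line cancellation once the summation-closure fact for $T$ is in hand. The only point requiring a little care is the bookkeeping of the support condition $\sum_i x_i = t$, which is what makes the $(1-p)^t$ and $p^{\alpha_0}$ factors cancel cleanly; I would state this support restriction explicitly so that the exponent arithmetic is transparent. If one wanted to avoid even invoking the closed form of $\Pr[T=t]$, an alternative is to argue the conditional PMF is \emph{proportional} to $\prod_i \frac{\Gamma(x_i+\alpha_i)}{\Gamma(\alpha_i)\Gamma(x_i+1)}$ on the simplex $\{\sum_i x_i = t\}$ and then recognize the normalizing constant via the known Dirichlet multinomial normalization, but using the explicit form of $\Pr[T=t]$ is the most direct route.
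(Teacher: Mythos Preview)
Your proposal is correct and mirrors the paper's own proof essentially line for line: the paper also invokes Bayes' theorem, writes the conditional PMF as $\prod_i f_{\NB(\alpha_i,p)}(x_i)\big/ f_{\NB(\alpha_0,p)}(t)$, cancels the $(1-p)$ and $p$ factors, and identifies the result with $f_{\DirM(t,\boldsymbol{\alpha})}(x)$.
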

\begin{proof}
The result follows from Bayes' theorem.
\begin{align*}
f_{X | T = t}(x) &= \frac{f_{T | X = x}(t)f_{X}(x)}{f_T(t)}
 = \frac{\prod_{i=1}^k f_{\NB(\alpha_i, p)}(x_i)}
        {f_{\NB(\alpha_0, p)}(t)}\\
&=  \frac{\Gamma(\alpha_0)\Gamma(t + 1)}{(1-p)^{t}p^{\alpha_0}\Gamma(t + \alpha_0)}
    \prod_{i=1}^k 
        \frac{(1-p)^{x_i}p^{\alpha_i}\Gamma(x_i + \alpha_i)}{\Gamma(\alpha_i)\Gamma(x_i + 1)}\\
&= f_{\DirM(t, \boldsymbol{\alpha})}(x) &\qedhere
\end{align*}
\end{proof}

Our sampler can be implemented on a finite computer in the Word RAM model avoiding any real-arithmetic operations. Following  \cite[Section 5]{canonne2020discrete}, we focus on the runtime in the \emph{expected} number of arithmetic operations, which take only polynomial time in the bit complexity of the parameters. We make the following assumptions on the availability of basic sampling primitives requiring only $O(1)$ arithmetic operations in expectation:
\begin{itemize}
    \item A uniform sampler to draw $D \in \{1, 2, \dots, d\}$ for $d \in \mathbb{N}$.
    \item A $\Ber(n/d)$ sampler for $n,d\in \mathbb{N}$, which can be trivially implemented using the uniform sampler.
    \item A $\Geo(1-e^{-\gamma})$ sampler for $\gamma \in \mathbb{Q}$ from \cite{canonne2020discrete}.
\end{itemize}

Finally, we assume we have access to a map data structure with accesses and updates requiring $O(1)$ operations in expectation, and a vector data structure with $O(1)$ random access and append operations.

Our \Cref{alg:multi-nb-sample} is straightforward. It consists of
\begin{enumerate}
    \item Sampling from the negative binomial distribution with \Cref{alg:nb-sample-geo} to learn the sum of all the terms, handling rational values of $r$ following the approach in \cite{heaukulani2019black} using a simple rejection sampler.
    \item Sampling from the Dirichlet multinomial distribution with \Cref{alg:dirichlet}, which uses a version of the P\'{o}lya urn process \cite{discrete-multivariable-distributions} modified to handle rational fractions of balls. We sparsely encode the output to avoid storing zero entries, as MSDLap sampling only requires summing non-zero random variables.
\end{enumerate}

\begin{algorithm}[ht]
    \caption{Negative binomial sampling optimized for $p > \frac{1}{2}$}
    \label{alg:nb-sample-geo}
\begin{flushleft}
\textbf{Input}: $r \in \mathbb{Q}_{>0}$, $p = e^{-\gamma} = 1- e^{-\eps}$ for $\gamma \in \mathbb{Q}_{>0}$\\
\textbf{Output}: A sample from $\NB(r, p)$
\begin{algorithmic}[1] 
\Loop \Comment{Rejection sampler proposing $\NB(\lceil r \rceil, p)$}
    \State Sample $w \gets \Call{SampleIntegerNB}{\lceil r \rceil, p}$
    \State $A_w \gets (r)_w / (\lceil r \rceil)_w$
    \State Sample $accept \gets \Ber(A_w)$
    \If{$accept$} \textbf{return} $w$ \EndIf
\EndLoop
\Procedure{SampleIntegerNB}{$r$, $p=e^{-\gamma}$}
\State $failures \gets 0$
\State $successes \gets 0$
\Loop
    \State Sample $s \gets \Geo(1-e^{-\gamma})$ \Comment{Use \cite{canonne2020discrete}}
    \State $successes \gets successes + s$
    \If{$successes \ge r$}
        \State \textbf{return } $failures$
    \EndIf
    $failures \gets failures + 1$
\EndLoop
\EndProcedure
\end{algorithmic}
\end{flushleft}

\end{algorithm}
\begin{algorithm}[ht]
    \caption{Sparse Dirichlet multinomial sampling}
    \label{alg:dirichlet}
\begin{flushleft}
\textbf{Input}: $n \in \mathbb{N}$, $k \in \mathbb{N}$, $\alpha = a/b \in \mathbb{Q}_{>0}$\\
\textbf{Output}: A sample from $\DirM(n, \{\alpha,\dots,\alpha\})$, encoded as a sparse map from variate index to count, with zero variates removed. 
\begin{algorithmic}[1] 
\State $picked \gets []$ \Comment{Vector data structure}
\State $initialsize \gets k \cdot a$
\For{$i$ from $0$ to $n-1$}
    \State $urnsize \gets initialsize + i \cdot b$
    \State Sample $U \in \{1, \cdots, urnsize\}$ uniformly at random
    \If{$U < initialsize$}
        \State $idx \gets \lceil U /  a \rceil$
        \State Append $idx$ to $picked$
    \Else
        \State $idx \gets \lceil (U - initialsize) / b \rceil$
        \State Append $picked[idx]$ to $picked$
    \EndIf
\EndFor
\State $counter \gets \{\}$ \Comment{Map data structure}
\For{$p$ in $picked$}
    \If{$p$ in $counter$}
        \State $counter[p] \gets counter[p] + 1$
    \Else
        \State $counter[p] \gets 1$
    \EndIf
\EndFor
\State \textbf{return} $counter$
\end{algorithmic}
\end{flushleft}
\end{algorithm}

\begin{algorithm}[ht]
    \caption{Sparse multivariate i.i.d. $\NB$ sampling}
    \label{alg:multi-nb-sample}
\begin{flushleft}
\textbf{Input}: $k, r \in \mathbb{N}$, $p = e^{-\gamma} = 1-e^{-\eps}$ for $\gamma \in \mathbb{Q}_{>0}$\\
\textbf{Output}: Non-zero samples ${X_1, \dots, X_k}$ where $X_i \sim NB(r, p)$
\begin{algorithmic}[1] 
\State Sample $T \gets \NB(k \cdot r, p)$ \Comment{Use \Cref{alg:nb-sample-geo}}
\State Sample $counter \gets \DirM(T, k, r)$ \Comment{Use \Cref{alg:dirichlet}}
\State \textbf{return} $counter$
\end{algorithmic}
\end{flushleft}
\end{algorithm}

\begin{proposition}\label{prop:nb-sample-from-geo}
For input $r, \gamma \in \mathbb{Q}_{>0}$ and $p = e^{-\gamma}$, the procedure described in \Cref{alg:nb-sample-geo} returns one sample from $NB(r, 1-e^{-\eps})$ and completes in $\widetilde{O}\left(1 + r \cdot e^{-\eps}\right)$ arithmetic operations in expectation, where $\eps = \log \frac{1}{1-e^{-\gamma}}$ and $\widetilde{O}$ hides polynomial factors in $\eps$.

\end{proposition}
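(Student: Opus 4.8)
The plan is to establish correctness and expected running time separately, treating the inner routine \textsc{SampleIntegerNB} and the outer rejection loop in turn. First I would verify that \textsc{SampleIntegerNB}$(m,p)$ returns a genuine $\NB(m,p)$ variate for integer $m$. The key observation is that, for a coin with success probability $p$, the number $j$ of consecutive successes before the next failure has PMF $p^j(1-p)$, which is exactly $\Geo(1-p)$ in the paper's convention; since the algorithm draws $s \gets \Geo(1-e^{-\gamma}) = \Geo(1-p)$, each draw simulates one run of successes terminated by a single failure. Accumulating $s$ into \emph{successes} and incrementing \emph{failures} once per completed run, the loop halts the first time the running success count reaches $m$, at which point the number of failures already recorded equals the number of failures before the $m$-th success, i.e.\ a $\NB(m,p)$ sample. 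A short induction on the iterations makes this precise and simultaneously shows the loop runs exactly (returned value)$+1$ times.

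Next I would analyze the outer loop as a rejection sampler with proposal $\NB(\lceil r\rceil, p)$ and acceptance probability $A_w = (r)_w/(\lceil r\rceil)_w$. Because $r \le \lceil r\rceil$ gives $(r)_w \le (\lceil r\rceil)_w$ term by term, $A_w \in (0,1]$, so it is a legitimate acceptance probability. Using $f_{\NB(\rho,p)}(w) = (1-p)^w p^{\rho}\frac{(\rho)_w}{w!}$, the probability of proposing $w$ and accepting is $(1-p)^w p^{\lceil r\rceil}\frac{(r)_w}{w!}$, which is proportional in $w$ to $f_{\NB(r,p)}(w)$; conditioning on acceptance therefore yields exactly $\NB(r,p)$. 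Summing over $w$ and invoking $\sum_w (1-p)^w \frac{(r)_w}{w!} = p^{-r}$ (the normalization of $\NB(r,p)$, equivalently the generalized binomial series), the per-iteration acceptance probability equals exactly $p^{\lceil r\rceil - r}$, so the expected number $\E[N]$ of outer iterations is $p^{-(\lceil r\rceil - r)} \le 1/p = \frac{1}{1-e^{-\eps}}$, which is $O(1)$ once $\eps \ge 1$.

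For the running time I would bound the expected cost $C$ (in arithmetic operations) of a single outer iteration and multiply by $\E[N]$. By the stated assumptions, each uniform, $\Ber$, and $\Geo$ primitive costs $O(1)$ operations in expectation. One outer iteration calls \textsc{SampleIntegerNB}$(\lceil r\rceil, p)$, which runs its loop (returned value)$+1$ times and so costs $O(w+1)$ operations where $w$ is the returned value; it then forms $A_w = \prod_{i=0}^{w-1}\frac{r+i}{\lceil r\rceil + i}$ in $O(w)$ exact rational operations and draws one $\Ber(A_w)$. Since $\E[w] = \E[\NB(\lceil r\rceil,p)] = \frac{\lceil r\rceil(1-p)}{p} = \frac{\lceil r\rceil e^{-\eps}}{1-e^{-\eps}}$ (as $1-p = e^{-\eps}$), we get $\E[C] = O\!\left(1 + \frac{\lceil r\rceil e^{-\eps}}{1-e^{-\eps}}\right)$. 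Because the outer iterations are i.i.d.\ and the event $\{N \ge i\}$ (that iterations $1,\dots,i-1$ all rejected) is independent of the $i$-th iteration's cost, Wald's identity gives total expected cost $\E[N]\cdot\E[C]$. Plugging in $\E[N] \le \frac{1}{1-e^{-\eps}} = O(1)$ for $\eps \ge 1$, and using $\lceil r\rceil e^{-\eps} \le (r+1)e^{-\eps} = O(1 + r e^{-\eps})$, the total is $O(1 + r e^{-\eps})$, within the claimed $\widetilde{O}(1 + r e^{-\eps})$.

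The main obstacle is the rejection-sampler step: the single computation $\sum_w f_{\NB(\lceil r\rceil,p)}(w) A_w = p^{\lceil r\rceil - r}$ must do double duty, certifying correctness through the normalization argument and bounding $\E[N]$, and it leans on the identity $\sum_w (1-p)^w \frac{(r)_w}{w!} = p^{-r}$. The only other delicate points are the clean handling of the $\frac{1}{1-e^{-\eps}}$ factor (an $O(1)$ constant once $\eps \ge 1$) and the reduction $\lceil r\rceil e^{-\eps} = O(1 + r e^{-\eps})$, which is exactly what keeps the bound correct for tiny rational $r$ where $\lceil r\rceil = 1$.
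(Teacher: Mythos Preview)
Your proof is correct and follows the same overall structure as the paper: verify \textsc{SampleIntegerNB} for integer $m$, analyze the outer rejection loop, then combine for the running-time bound. The rejection-sampler analysis and the running-time bookkeeping (including the acceptance probability $p^{\lceil r\rceil - r}$ and the bound $\E[N] \le 1/p$) match the paper's argument essentially verbatim, and your use of Wald's identity makes explicit what the paper leaves implicit.

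The one genuine difference is in how you certify correctness of \textsc{SampleIntegerNB}. You argue conceptually: each $\Geo(1-p)$ draw simulates a run of successes terminated by a failure in an underlying Bernoulli process, so halting when the cumulative success count reaches $m$ returns precisely the number of failures before the $m$-th success. The paper instead computes $\Pr[K=k]$ directly by summing over the joint event $\{Z < r \le Z + X_{k+1}\}$ with $Z \sim \NB(k,1-p)$ and $X_{k+1} \sim \Geo(1-p)$, and closes the sum via the hockey-stick identity $\sum_{z=0}^{r-1}\binom{k+z-1}{k-1} = \binom{k+r-1}{k}$. Your argument is shorter and more transparent about \emph{why} the algorithm works; the paper's is more mechanical but gives an explicit PMF-level verification. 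Either suffices.
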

\begin{proof}
    We begin by analyzing the case for integer $r$, which amounts to analyzing $\Call{SampleIntegerNB}{}$.
    The approach we use invokes $\Geo(1-p)$ to count
    the \emph{successes} before the first \emph{failure} in
    a sequence of Bernoulli trials. In this way we
    can "batch" runs of successes together.
    
    Let $K$ be the random variable describing the output of the algorithm,
    and $X_i$ be the $i$th $\Geo(1-p)$ random variable.
    Denote $Z \sim \sum_{i=1}^k X_i \sim \NB(k, 1-p)$. 
    First note that $\pr{K = 0} = \pr{X_1 \ge r} = p^r = f_{\NB(r, p)}(0)$. Subsequently for $k \ge 1$,
    \begin{align*}
    \pr{K = k} &= \pr{Z < r \le Z + X_{k+1} }\\
    &= \sum_{z=0}^{r-1} \sum_{x=r-z}^\infty f_{\NB(k, 1-p)}(z) f_{\Geo(1-p)}(x)\\
    &= \sum_{z=0}^{r-1} \sum_{x=r-z}^\infty 
        (1-p)^k p^z \binom{k+z-1}{z} \cdot (1-p)p^x\\
    &= (1-p)^{k+1} \sum_{z=0}^{r-1} \binom{k + z - 1}{k - 1} \frac{p^{r}}{1-p}\\ 
    \text{\Cref{lem:pascal}} &= (1-p)^k p^r \binom{k+r-1}{k}
    = f_{\NB(r, p)}(k).
    \end{align*}
    Thus, the output $K$ follows the $\NB(r, p)$ distribution as desired.

    The expected number of iterations of the loop is exactly $1 + \E[\NB(r, p)]$, since we have one unconditional iteration, and one per increment of $failures$. Each iteration of the loop takes expected $\widetilde{O}(1)$ time, as the geometric sampler requires arithmetic operations polynomial in the bit complexity of $\gamma$ (which is $O(\eps)$).

    Next we analyze the outer loop which handles rational values of $r$ using the accept-reject approach in \cite{heaukulani2019black}, ensuring that in each iteration, for a proposed sample $W$ and acceptance event $A$:
    \begin{align*}
    \pr{W = w \wedge A} &= \pr{A | W=w}\pr{W=w}\\
        &= \frac{(r)_w}{(\lceil r \rceil)_w} f_{\NB(\lceil r \rceil, p)}(w)
        = p^{r - \lceil r \rceil} \cdot f_{\NB(r,p)}(w).
    \end{align*}
    
    This implies that the output follows $\NB(r, p)$ distribution as desired, and that $\pr{A} = p^{r - \lceil r \rceil}$. The latter in turn implies that the number of trials follows a geometric distribution with success probability $p^{\lceil r \rceil - r}$. 
    Therefore the expected number of trials is $O(1/p^{\lceil r \rceil - r}) = O(1/p)$,
    and the result follows as $\widetilde{O}\left(\frac{1}{1-e^{-\eps}} \left(1 + \E[\NB(r, 1-e^{-\eps}]\right)\right) = \widetilde{O}(1+r\cdot e^{-\eps})$.
\end{proof}

\begin{proposition}\label{prop:dirichlet-sample}
For input $n, k \in \mathbb{N}$, and $\alpha \in \mathbb{Q}$, the procedure described in \Cref{alg:dirichlet} returns one sample from $\DirM(n, \{\alpha, \cdots, \alpha\})$ and requires $O(n)$ arithmetic operations in expectation.
\end{proposition}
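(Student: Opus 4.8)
The plan is to recognize Algorithm~\ref{alg:dirichlet} as an exact realization of the P\'olya urn scheme for $\DirM(n, \{\alpha, \dots, \alpha\})$ and to bound its running time by two linear passes. Since $\alpha = a/b$, I would first rescale every mass by $b$, so that each of the $k$ colors starts with integer mass $a$ and each draw contributes integer mass $b$; this scaling leaves all draw probabilities unchanged. The structural point is that, rather than maintaining a per-color mass counter, the algorithm stores the ordered sequence of draws in \texttt{picked} and conceptually attaches the $b$ incremental units created by each draw to that particular draw. Thus before the $i$-th iteration the urn consists of $ka$ initial units split into $k$ blocks of size $a$ (block $j$ for color $j$) together with $ib$ incremental units, $b$ per earlier draw. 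Sampling $U$ uniformly and routing it either to its initial block (via $\lceil U/a\rceil$) or to the color of the draw it is attached to (via the index $\lceil (U-ka)/b\rceil$ into \texttt{picked}) therefore selects color $c$ with probability proportional to $a + b\cdot(\text{number of earlier draws of color } c)$, which is exactly the mass of color $c$ in the unscaled urn. I would make this mass-preservation explicit at every step so that the reformulation is provably faithful.

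Given the urn interpretation, I would compute the output distribution by exchangeability. The probability of producing any fixed draw sequence $(c_1, \dots, c_n)$ with color counts $(x_1, \dots, x_k)$ is
\[
\prod_{t=0}^{n-1} \frac{a + b\, m^{(t)}}{ka + tb} = \frac{\prod_{c=1}^k (\alpha)_{x_c}}{(k\alpha)_n},
\]
where $m^{(t)}$ is the running count of the color drawn at step $t+1$: the denominator factors as $\prod_{t=0}^{n-1} b(k\alpha + t) = b^n (k\alpha)_n$, while grouping the numerator color-by-color gives $\prod_{c} b^{x_c} (\alpha)_{x_c} = b^n \prod_c (\alpha)_{x_c}$, and the $b^n$ cancel. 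As this value is independent of the order, multiplying by the $\tfrac{n!}{\prod_c x_c!}$ sequences realizing a fixed count vector and using $(\alpha)_{x_c} = \Gamma(x_c + \alpha)/\Gamma(\alpha)$ and $(k\alpha)_n = \Gamma(n + k\alpha)/\Gamma(k\alpha)$ recovers exactly $f_{\DirM(n, \{\alpha, \dots, \alpha\})}(x)$ with $\alpha_0 = k\alpha$.

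For the runtime, the first loop runs $n$ times, each iteration using one uniform draw (an $O(1)$ primitive by assumption) plus a constant number of comparisons, integer divisions, ceilings, and one vector append; the second loop makes a single pass over the $n$ entries of \texttt{picked} with an $O(1)$-expected map update per entry. This yields $O(n)$ arithmetic operations in expectation.

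The step I expect to require the most care is the first one: verifying that the attach-increments-to-draws bookkeeping induces precisely the P\'olya-urn color probabilities, including checking the index arithmetic and boundary behavior of the two ceiling maps (e.g.\ which values of $U$ fall into the initial versus incremental regions, and the correspondence between incremental units and previously recorded draws). Once that equivalence is pinned down, the distributional identity is a routine exchangeable-urn computation and the $O(n)$ runtime is immediate.
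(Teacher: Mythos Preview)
Your proposal is correct and follows essentially the same argument as the paper: identify the algorithm as a P\'olya urn with $a$ initial balls per color and $b$ added per draw, compute the exchangeable sequence probability $\prod_c (\alpha)_{x_c}/(k\alpha)_n$, multiply by the multinomial count to recover the $\DirM$ PMF, and observe the two $n$-step loops for the runtime. Your explicit ``attach-increments-to-draws'' description of how the \texttt{picked} vector encodes the urn is a bit more detailed than the paper's, but the structure and all the key steps match.
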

\begin{proof}

We begin the proof by mapping \Cref{alg:dirichlet} to the P\'{o}lya
urn model. The algorithm proceeds as follows:
\begin{itemize}
    \item To initialize the urn contents, $a$ balls are added to the urn for each of the $k$ colors.
    \item The algorithm proceeds over $n$ steps, where at each step a ball is chosen uniformly at random from the urn. When a ball is selected, it is replaced, along with $b$ other balls of that color.
    \item For each color, the algorithm returns the count of how many times that color was chosen.
\end{itemize}

\Cref{alg:dirichlet} implements this in a straightforward manner where $idx$ maps to the $idx$th color $c_{idx}$, and we maintain the list of colors selected in $picked$. We note that the return value is \emph{sparsely encoded} to avoid storing zero entries, or colors that have never been picked.
For purposes of the proof of correctness we assume an unrolled output of $counter$ equal to $X =\{x_1, \cdots, x_k\}$, where each $x_i$ counts the number of times the $c_i$ color was picked. This can be obtained with a simple post-processing step.

Let $S_{m}$ be the color of the ball chosen at iteration $m$.
We evaluate the probability its color is $c$ given the
previous draws from the urn.
\begin{align*}
&\pr{S_{m} = c | S_{1} = s_1 , \dots, S_{m-1} = s_{m-1}}\\
&= \frac{a + \sum_{i=1}^{m-1} b \cdot s_i}{k \cdot a + b(m -1)} &\text{let $z = \sum_{i=1}^{m-1} \mathbbm{1}(s_i = c)$}\\
&= \frac{a + b \cdot z}{k\cdot a + b(m -1)}.\\
\end{align*}

Note that $S_{m}$ only depends about the current state of the urn, not the order in which balls are picked. Similarly,
note that the denominator of the fraction is independent of
any information about $z$. From this we can show that the probability of seeing any one \emph{particular sequence} $S = \{s_1, \cdots, s_n\}$ which has color counts $X = \{x_1, \dots, x_k\}$ is

\begin{align*}
\pr{S_1 = s_1, \Compactcdots, S_n = s_n} &= \left(\prod_{i=1}^n 
             \frac{1}{k\cdot a + b(i - 1)}
             \right)
             \prod_{j=1}^k
             \prod_{t=1}^{x_i} a + b(t - 1)\\
   &= \frac{\prod_{j=1}^k b^{x_i} (a/b)_{x_i}}{b^n (k\cdot a/b)_n}\\
    &= \frac{\Gamma(k\cdot \alpha)}{\Gamma(n + k\cdot \alpha)}
        \prod_{j=1}^k \frac{\Gamma(\alpha + x_i)}{\Gamma(\alpha)}.
\end{align*}
Furthermore, this quantity is order agnostic. Any
reordering of the draws has the same probability.
The correctness result follows from noting that there are
$$
\binom{n}{x_1, x_2,\dots, x_k} = \frac{n!}{x_1! \cdots x_k!} = \frac{\Gamma(n + 1)}{\prod_{i=1}^k \Gamma(x_i+1)}
$$ possible orderings of the draws, so
\begin{align*}
\pr{X = x} &= \frac{\Gamma(k\cdot \alpha)\Gamma(n+1)}{\Gamma(n + k\cdot \alpha)}
        \prod_{j=1}^k \frac{\Gamma(\alpha + x_i)}{\Gamma(x_i + 1)\Gamma(\alpha)}
= f_{\DirM(n, \{\alpha, \dots, \alpha\})}(x).
\end{align*}

For the run time analysis, both loops in the algorithm iterate exactly $n$ times,
and each require only constant arithmetic operations
in expectation, assuming $O(1)$ map and vector operations.
\end{proof}

\Cref{lem:nb-conditioned-sum-is-dirm}, \Cref{prop:nb-sample-from-geo}, and \Cref{prop:dirichlet-sample} immediately show \Cref{thm:multi-nb-sample}.





\newcommand{\cY}{\mathcal{Y}}
\newcommand{\cO}{\mathcal{O}}

\section{Order optimal MSE in the multi-message shuffle model}\label{sec:shuffle}

In this section we use the mechanisms from \Cref{sec:discrete} to extend multi-message protocols in the shuffle model of differential privacy.

First, we recall the definition of the shuffle model~\cite{CheuSUZZ19,ErlingssonFMRTT19}. An $m$-message protocol in the shuffle model consists of a \emph{randomizer} $\cR: \cX \to \cY^m$ where $\cY$ denote the set of all possible messages and an \emph{analyzer} $\cA: \cY^{n m} \to \cO$ where $\cO$ denotes the output domain. In the shuffle model, the analyst does not see the output of each randomizer directly but only the randomly shuffled messages. We write $\cS_{\cR}(x_1, \dots, x_n)$ to denote the output of randomly shuffling $nm$ (random) messages produced by $\cR(x_1), \dots, \cR(x_n)$ for $x_1, \dots, x_n \in \cX$. The shuffle model required that these shuffled messages have to satisfy DP, as stated more formally below.

\begin{definition}[\cite{CheuSUZZ19,ErlingssonFMRTT19}]
An $m$-message protocol $(\cR, \cA)$ is $(\eps, \delta)$-shuffle-DP if, for every $x, x' \in \cX^n$ differing in a single entry, 
$
\Pr[\cS_{\cR}(x) \in S] \leq e^{\eps} \cdot \Pr[\cS_{\cR}(x') \in S] + \delta
$ 
for all $S \subseteq \cY^{nm}$.
\end{definition}

In the real summation problem, each $x_i$ is a real number in $[0, 1]$ and the goal is to compute $\sum_{i \in [n]} x_i$. Our main result of this section can be stated as follows:
\begin{theorem} \label{thm:shuffle-dp-main}
For every $\eps \geq 2$ and $\delta \in (0, 1/n)$, there exists an $(\eps, \delta)$-shuffle-DP, $O\left(\frac{\eps + \log(1/\delta)}{\log(n)}\right)$-message protocol for real summation with MSE $O\left(e^{-2\eps/3}\right)$ such that each message is $O\left(\eps + \log n\right)$ bits long.
\end{theorem}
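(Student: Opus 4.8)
The plan is to reuse the ``split-and-mix'' secure-aggregation framework of \cite{balle2020private} almost verbatim, changing only the noise distribution: where \cite{balle2020private} add distributed discrete Laplace noise (giving $O(1/\eps^2)$ MSE), I would instead have the parties add an infinitely divisible share of the generalized MSDLap noise of \Cref{cor:msdlap-rextremes}, whose $e^{-2\eps/3}$ branch is what yields the improved MSE. The protocol operates over a finite group $\Z_q$; each party sends $m$ additive shares of a single group element, all $nm$ messages are shuffled, and the analyzer sums them modulo $q$. Privacy comes from combining the (statistical) security of the mixing with the central $\eps$-DP guarantee of the noise, and the improved $\eps$-term in the message count will be traced back to the larger modulus $q$ forced by a finer discretization.

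First I would reduce real summation to integer summation by randomized rounding at granularity $K := \Theta(\sqrt{n}\,e^{\eps/3})$: party $i$ rounds $Kx_i$ to a neighboring integer $\tilde x_i$ unbiasedly, so $\E[\tilde x_i]=Kx_i$ and $\Var(\tilde x_i)\le 1/4$, and the integer query $\sum_i\tilde x_i$ has sensitivity $K$. Let $\cD$ be the infinitely divisible distribution of \Cref{cor:msdlap-rextremes} (via \Cref{thm:all-regime-dlap} with $r=\lceil e^{-\eps/3}K\rceil$) that is $\eps$-DP for sensitivity $K$; each party draws an independent share $Z_i\sim\cD_{/n}$, so the total noise $Z:=\sum_i Z_i\sim\cD$ by infinite divisibility. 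Party $i$ then splits $y_i=\tilde x_i+Z_i \bmod q$ into $m$ shares. I would set $q=\Theta(nK)$, large enough that the noisy sum does not wrap around except with probability $O(\delta)$, using that $\Var(\cD)=O(K^2e^{-2\eps/3})=O(n)$ together with the sub-exponential tails of $\cD$; this keeps $\log q=O(\eps+\log n)$.

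For privacy I would argue in two stages via the post-processing and triangle-inequality lemmas. By the split-and-mix security guarantee of \cite{balle2020private}, taking $m=O\!\left(1+\tfrac{\log(1/\delta)+\log q}{\log n}\right)$ makes the shuffled multiset of shares within statistical distance $\delta$ of a distribution depending on the inputs only through the group sum $\sum_i y_i=\sum_i\tilde x_i+Z\bmod q$. Conditioned on that $\delta$-likely event, the analyzer's view is a post-processing (\Cref{lem:post-processing}) of the $\cD$-noise addition mechanism applied to a sensitivity-$K$ integer query, which is $\eps$-DP by \Cref{cor:msdlap-rextremes} and \Cref{lem:cond-dp}; absorbing the statistical gap gives $(\eps,\delta)$-shuffle-DP. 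The message count then simplifies: since $\delta<1/n$ forces $\log(1/\delta)>\log n$, the additive constant is absorbed, and $\log q=O(\eps+\log n)$ gives $m=O\!\left(\tfrac{\eps+\log(1/\delta)}{\log n}\right)$, with each message one element of $\Z_q$, i.e.\ $O(\eps+\log n)$ bits.

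Finally, for accuracy I would note the estimator $\tfrac1K\big(\sum_i\tilde x_i+Z\big)$ is unbiased (both rounding and $\cD$ are mean-zero), so its MSE splits as $\tfrac{1}{K^2}\sum_i\Var(\tilde x_i)+\tfrac{1}{K^2}\Var(\cD)$. The first term is $\le \tfrac{n}{4K^2}=O(e^{-2\eps/3})$ by the choice of $K$, and the second is $O(e^{-2\eps/3})$ because $K\ge e^{\eps/3}$ puts us on the $e^{-2\eps/3}$ branch of \Cref{cor:msdlap-rextremes}. I expect the main obstacle to be the privacy bookkeeping rather than the error analysis: precisely invoking the split-and-mix security as a black-box reduction to the central MSDLap mechanism, and confirming that its message requirement carries the $\log q$ (hence $\eps$) dependence while the wrap-around/tail event stays $O(\delta)$ so that $q$, and therefore the message length, remains $O(\eps+\log n)$ bits. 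The secondary delicate point is calibrating $K=\Theta(\sqrt n\,e^{\eps/3})$ so that rounding error and noise error both land at $O(e^{-2\eps/3})$.
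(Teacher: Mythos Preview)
Your plan is essentially the paper's proof: same discretization scale $K=\Theta(\sqrt{n}\,e^{\eps/3})$, same noise from \Cref{cor:msdlap-rextremes}, same split-and-mix reduction, same MSE decomposition $\tfrac{n}{4K^2}+\tfrac{\Var(\cD)}{K^2}=O(e^{-2\eps/3})$. Two bookkeeping points differ, and both are worth tightening.

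\emph{Wrap-around.} You propose to bound overflow by sub-exponential tails of $\cD$ and ``charge it to $\delta$.'' But wrap-around is an accuracy event, not a privacy event, so its cost to the MSE is (overflow probability)$\times\Theta(q/K)^2=\Theta(n^2)$, and you would need the overflow probability to be $O(e^{-2\eps/3}/n^2)$, not $O(\delta)$. This is achievable with the actual tails of MSDLap, but it is not what you wrote. The paper sidesteps the issue entirely: their \Cref{lem:balle-lem53} (their stated improvement over \cite{balle2020private}) has the analyzer clip the decoded value to $[0,n]$, which yields the deterministic inequality $|r'-u|\le|\tilde u/K - u|$ and removes any tail argument from the accuracy proof.

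\emph{Source of the $\eps$ in the message count.} You trace it to $\log q$, but in the paper's statement of \Cref{thm:split-and-mix} the message count is $O(1+\sigma/\log n)$ and does not depend on $q$. The $\eps$ actually enters through the privacy conversion: TV distance $2^{-\sigma}$ to the ideal view combined with $\eps$-DP of the ideal yields $(\eps,(1+e^\eps)2^{-\sigma})$-DP, so one must take $\sigma=\Theta(\eps+\log(1/\delta))$. Your line ``absorbing the statistical gap gives $(\eps,\delta)$'' hides this $e^\eps$ blow-up. The final message bound is the same, but for a different reason than you state.
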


Prior to this work, the best known protocol has MSE that scales as $O(1/\eps^2)$ (even for large $\eps$)~\cite{balle2020private,GhaziMPV20,almost-central-shuffle-ghazi21a}\footnote{Pagh and Stausholm~\cite[Corollary 23]{pagh2022infinitely} claims that their result implies a protocol with absolute error $\frac{1}{e^{\Omega(\eps)} -1}$, but, to the best of our knowledge, this is not the case. See the discussion at the end of this section for more detail.} and thus our result provides a significant improvement in the large $\eps$ regime.

To prove this result, it would be convenient to define the $\Z_q$-summation problem as follows. Here the input $x_1, \dots, x_n$ belongs to $\Z_q$ and the goal is to compute $\sum_{i \in [n]} \Z_q$. Similar to before, an $m$-message protocol consists of a randomizer $\cR$ and an analyzer $\cA$. We say that the protocol is \emph{exact} if the analyzer always output the correct answer. Furthermore, we say that the protocol is \emph{$\sigma$-secure} (for some parameter $\sigma > 0$) if, for all $x, x' \in \Z_q^n$ that results in the same output (i.e. $\sum_{i \in [n]} x_i = \sum_{i \in [n]} x'_i$), we have $\TV(\cS_{\cR}(x), \cS_{\cR}(x')) \leq 2^{-\sigma}$. Building on the "split and mix" protocol of \cite{ishai2006cryptography}, Balle at al.~\cite{balle2020private} and Ghazi et al.~\cite{GhaziMPV20} gave a secure $\Z_q$-summation protocol with the following guarantee.

\begin{theorem}[\cite{balle2020private,GhaziMPV20}] \label{thm:split-and-mix}
For any $\sigma \in (0, 1/2)$ and $q \in \N$, there is an $\sigma$-secure $O\left(1 + \frac{\sigma}{\log n}\right)$-message protocol for $\Z_q$-summation in the shuffle model where each message belongs to $\Z_q$.
\end{theorem}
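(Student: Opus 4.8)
The plan is to analyze the \emph{split-and-mix} protocol. In the randomizer $\cR$, a party holding input $x_i \in \Z_q$ draws $m-1$ independent uniform shares $y_{i,1},\dots,y_{i,m-1} \in \Z_q$ and sets $y_{i,m} = x_i - \sum_{j=1}^{m-1} y_{i,j} \bmod q$, sending the $m$ messages $y_{i,1},\dots,y_{i,m}$; the analyzer $\cA$ outputs the sum of all $nm$ received messages modulo $q$. Exactness is immediate: since $\sum_j y_{i,j} = x_i$ for every $i$, the total is $\sum_i x_i \bmod q$ regardless of the shuffle, and every message lies in $\Z_q$ as required. Choosing $m = O(1 + \sigma/\log n)$ yields the claimed message complexity, so the entire content of the theorem is the security bound.

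For security, fix $x, x' \in \Z_q^n$ with $\sum_i x_i = \sum_i x'_i =: S$, and let $\cU_S$ denote the \emph{ideal} distribution obtained by drawing $nm$ elements of $\Z_q$ uniformly at random conditioned on their sum being $S$ and then shuffling. Since $\cU_S$ depends only on $S$, the triangle inequality for total variation gives
\[
\TV(\cS_{\cR}(x), \cS_{\cR}(x')) \le \TV(\cS_{\cR}(x), \cU_S) + \TV(\cU_S, \cS_{\cR}(x')),
\]
so it suffices to prove $\TV(\cS_{\cR}(x), \cU_S) \le 2^{-\sigma-1}$ for every input with sum $S$. This is the crux: once the per-party block structure of the shares is hidden by the shuffle, the multiset must be statistically indistinguishable from $nm$ uniform values conditioned only on the global sum.

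To bound $\TV(\cS_{\cR}(x), \cU_S)$ I would first record an exact likelihood-ratio identity. For a multiset $M$ of $nm$ elements summing to $S$, comparing the $q^{n(m-1)}$ internal random choices of the protocol against the $q^{nm-1}$ tuples summing to $S$ shows
\[
\frac{\Pr[\cS_{\cR}(x) = M]}{\cU_S(M)} = q^{n-1}\cdot \Pr_{\pi}\!\left[\text{the block-sum vector of a uniformly random ordering } \pi \text{ of } M \text{ equals } x\right],
\]
where $\pi$ partitions the $nm$ elements of $M$ into $n$ blocks of size $m$. Hence $\TV(\cS_{\cR}(x),\cU_S)$ equals exactly the expected non-uniformity, over $M \sim \cU_S$, of the block-sum vector of a random partition of $M$ relative to the uniform distribution over the $q^{n-1}$ sum-$S$ vectors. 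The problem thus reduces to a mixing statement: \emph{for a typical multiset, a uniformly random partition into $n$ blocks of size $m$ has block sums close to uniform.} I would establish this by Fourier analysis on $\Z_q^n$: the block sums are governed by the empirical character sums $\widehat{P}(t) = \sum_{\ell} \omega^{t\,m_\ell}$ (with $\omega = e^{2\pi i/q}$, $t \neq 0$), which for $M \sim \cU_S$ have magnitude $O(\sqrt{nm})$ by a random-walk bound; each block sum then mixes modulo $q$ at a geometric rate in the block size $m$, producing an overall deviation of order $n^{-\Omega(m)} = 2^{-\Omega(m\log n)}$, which is $2^{-\sigma-1}$ precisely when $m \gtrsim 1 + \sigma/\log n$.

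The main obstacle is exactly this quantitative mixing estimate. The difficulties are twofold: one must control \emph{all} nonzero frequencies $t$ simultaneously (their joint, not marginal, behavior) while handling the without-replacement dependence introduced by partitioning a fixed multiset, and one must verify that the resulting bound is independent of the modulus $q$ after normalizing against the uniform distribution over the $q^{n-1}$ admissible block-sum vectors, so that the message complexity depends only on $\sigma$ and $n$. This $q$-independent, $n$-sensitive decay is precisely the technical heart worked out in \cite{balle2020private,GhaziMPV20}, which the plan above reorganizes around the explicit likelihood-ratio identity.
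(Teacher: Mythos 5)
First, a framing point: the paper contains no proof of \Cref{thm:split-and-mix} to compare against --- it is imported as a black box from \cite{balle2020private,GhaziMPV20} (the protocol itself going back to \cite{ishai2006cryptography}), and the paper only consumes it in \Cref{lem:balle-lem53}. So your attempt has to be judged as a standalone proof of the cited result, and on that standard it has a genuine gap that you yourself flag: the entire quantitative content of the theorem is the mixing estimate, and your proposal asserts rather than proves it, explicitly deferring ``the technical heart'' to the very papers being cited. The protocol description, the exactness argument, the message-count accounting, and the triangle-inequality reduction to closeness between $\cS_{\cR}(x)$ and the ideal sum-conditioned uniform distribution $\mathcal{U}_S$ are all correct but routine; everything after that is a heuristic. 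Concretely, the per-frequency bound $|\widehat{P}(t)| = O(\sqrt{nm})$ plausibly gives a single block, single frequency character term of size roughly $(nm)^{-m/2}$, but converting this into a total-variation bound over the $q^{n-1}$ admissible block-sum vectors requires controlling a sum over on the order of $q^{n-1}$ joint frequency vectors, with without-replacement dependence across blocks. The number of terms grows with $q$, so the $q$-independence you correctly identify as essential does not follow from the sketch --- it is precisely the point at which this Fourier route would collapse without substantial additional work, and you give no mechanism for it.

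It is also worth noting that the cited proofs do not take your route at all. The analyses in \cite{balle2020private,GhaziMPV20} (refining \cite{ishai2006cryptography}) are combinatorial: they bound $\TV(\cS_{\cR}(x), \mathcal{U}_S)$ by the probability that a certain random graph on the $n$ parties, built from the randomness of the shares, fails to be connected --- roughly, conditioned on connectivity the shuffled view is \emph{exactly} distributed as the sum-conditioned uniform distribution, so the distance is controlled by the disconnection probability, which a union bound over cuts shows is $n^{-\Omega(m)}$. This is how the $q$-independent, $n$-sensitive decay rate (hence $m = O(1 + \sigma/\log n)$) is actually obtained, sidestepping the joint-frequency control that your Fourier approach would have to confront head-on. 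Your likelihood-ratio identity is a reasonable reformulation and could in principle anchor an alternative analytic proof, but as written the proposal reduces the theorem to an unproven lemma whose difficulty is the theorem itself.
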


Balle et al.~\cite{balle2020private} presented an elegant method to translate a secure $\Z_q$-summation protocol to a shuffle-DP real summation protocol. Below, we provide a slight generalization and improvement\footnote{Originally, their analysis has an additional error term due to ``overflow / underflow''. We observe that this is in fact unnecessary, which reduces the claimed error and also simplifies the analysis.} of their result, which will eventually allow us to combine~\Cref{thm:split-and-mix} and our infinitely divisible noise to achieve~\Cref{thm:shuffle-dp-main}.


\begin{lemma}[generalization of \cite{balle2020private} Lemma 5.2]\label{lem:balle-lem53}
Suppose that, for some $\eps > 0, \Delta \in \N$, there is a zero-mean discrete infinitely divisible distribution $\cD$ such that the $\cD$-noise addition mechanism is $\eps$-DP for sensitivity $\Delta$.
Furthermore, suppose that, for some $q, n \in \N$ with $q \geq 2\Delta n$ and $\sigma \in (0, 1/2)$, there exists an $n$-party $\sigma$-secure $m$-message exact $\Z_q$-summation protocol in the shuffle model. Then, there is an $m$-message $(\eps, (1+e^\eps)2^{-\sigma})$-shuffle-DP protocol for real summation with MSE
$\frac{\Var(\cD)}{\Delta^2} + \frac{n}{4\Delta^2}$.

Moreover, the message length is the same as in the $\Z_q$-summation protocol.
\end{lemma}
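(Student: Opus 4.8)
The plan is to follow the randomized-rounding-plus-infinitely-divisible-noise recipe, instantiated with the given $\cD$ and the secure $\Z_q$-summation protocol of \Cref{thm:split-and-mix}, and then sharpen the accuracy analysis to remove the modular overflow term. Concretely, each party $i$ discretizes its input $x_i \in [0,1]$ by randomized rounding to $\bar x_i = \lfloor \Delta x_i \rfloor + \Ber(\Delta x_i - \lfloor \Delta x_i \rfloor) \in \{0, \dots, \Delta\}$, which is unbiased ($\E[\bar x_i] = \Delta x_i$) with $\Var(\bar x_i) \le 1/4$. Using the infinite divisibility of $\cD$, party $i$ also draws a share $Z_i \sim \cD_{/n}$, so that $Z := \sum_i Z_i \sim \cD$, and then feeds $(\bar x_i + Z_i) \bmod q$ into the randomizer $\cR$ of the exact $\sigma$-secure $\Z_q$-summation protocol. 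Since that analyzer recovers the $\Z_q$-sum exactly, it learns $W := (\bar T + Z) \bmod q$, where $\bar T = \sum_i \bar x_i$; the real-summation analyzer post-processes $W$ into an estimate of $\sum_i x_i$. The message count and length are inherited directly from the $\Z_q$-protocol, giving the ``moreover'' clause for free.

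For privacy I would first show that the integer statistic $W$ is itself $\eps$-DP and then pay the security slack. Fix neighbors $x, x'$ differing at party $i$ and write $W = (\bar x_i + R) \bmod q$, where $R = \sum_{j \ne i} \bar x_j + Z$ is distributed identically under $x$ and $x'$ and is independent of $\bar x_i$. The key step is the pointwise bound $\dr{\bar x_i + Z}{\bar x_i' + Z} \le \eps$: since $\bar x_i, \bar x_i'$ are supported on $\{0, \dots, \Delta\}$, any realizations satisfy $|v - v'| \le \Delta$, so $\dr{v + Z}{v' + Z} = \dr{(v-v') + Z}{Z} \le \eps$ by \Cref{lem:cond-dp} ($\cD$ is $\eps$-DP for sensitivity $\Delta$); averaging over matched mixing weights and using $f_{\bar x_i' + Z}(t) \ge e^{-\eps} f_Z(t - v)$ for every $v$ lifts this to the two mixtures. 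Conditioning on $R$ (again a matched mixture) and post-processing by $(\cdot \bmod q)$ then yields $\dr{W(x)}{W(x')} \le \eps$. Finally, $\sigma$-security implies that $\cS_\cR(x)$ is $2^{-\sigma}$-close in $\TV$ to a fixed post-processing $\hat A$ of $W(x)$; combining $\dr{\hat A(W(x))}{\hat A(W(x'))} \le \eps$ (via \Cref{lem:post-processing}) with the two TV slacks gives $\Pr[\cS_\cR(x) \in S] \le e^\eps \Pr[\cS_\cR(x') \in S] + (1 + e^\eps) 2^{-\sigma}$, exactly the claimed $(\eps, (1+e^\eps)2^{-\sigma})$-shuffle-DP.

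For accuracy the analyzer outputs $\hat T = \rho(W)/\Delta$, where $\rho(W)$ is the representative of $W$ in a length-$q$ window centered on the range $[0, \Delta n]$ of $\bar T$. When no wraparound occurs, $\rho(W) = \bar T + Z$, so with $T = \sum_i x_i$ we get $\hat T - T = \tfrac{(\bar T - \Delta T) + Z}{\Delta}$; since $\bar T$ is unbiased, $Z$ is zero-mean, and the two are independent, this has variance exactly $\tfrac{\Var(\bar T) + \Var(\cD)}{\Delta^2} \le \tfrac{n/4 + \Var(\cD)}{\Delta^2}$, matching the target.

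I expect the main obstacle to be precisely the modular wraparound --- the ``overflow/underflow'' contribution that the original analysis added as a separate term. The crux is to show that decoding $W$ in a window of length $q \ge 2\Delta n$ centered on $[0, \Delta n]$ adds no error beyond the unwrapped variance, i.e.\ $\E[(\rho(W) - \Delta T)^2] \le \Var(\bar T + Z)$. The plan is to exploit that $q \ge 2\Delta n$ keeps the target mean $\Delta T \in [0, \Delta n]$ strictly inside the window, together with the fact that reducing to the chosen representative is a projection toward that window. The delicate point is that a data-independent window cannot be centered at the unknown $\Delta T$, so I would verify that the range constraint $\Delta T \in [0, \Delta n]$ and the slack $q/2 \ge \Delta n$ together suffice to rule out the harmful wraps on the relevant support of $\bar T + Z$ and make the projection non-expansive there. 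Establishing this inequality cleanly is what eliminates the extra error term and yields the stated MSE.
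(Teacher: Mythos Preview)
Your proposal is correct and follows the same construction and privacy argument as the paper (the paper in fact defers privacy entirely to \cite{balle2020private}, so your argument is more explicit). The one place you diverge is the decoder and the handling of the overflow step you flag as ``delicate.'' The paper does not use a symmetric window; instead it \emph{clips}: the analyzer outputs $r' = r/\Delta$ if $r \in [0,n\Delta]$, $r' = n$ if $r \in (n\Delta, 2n\Delta]$, and $r' = 0$ otherwise. With this choice the paper proves the \emph{pointwise} inequality $|r' - u| \le |\tilde u/\Delta - u|$ (where $\tilde u = \bar T + Z$ is the unwrapped sum and $u = \sum_i x_i \in [0,n]$) by a four-line case split on whether $\tilde u$ lies in $[0,n\Delta]$, $(-n\Delta,0)$, $(n\Delta,2n\Delta)$, or outside $(-n\Delta,2n\Delta)$; in the last case $|\tilde u/\Delta - u| \ge n$ dominates any clipped error, and in the two ``overflow'' cases clipping to the correct endpoint of $[0,n]$ can only move the estimate toward $u$. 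Squaring and taking expectations then gives the stated MSE with no extra term. This clipping-plus-case-analysis is the clean resolution of the obstacle you identified; your window/non-expansiveness route can also be made to work, but the paper's version avoids having to reason about which wraps are harmful.
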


\begin{proof}
Let $\Pi = (\mathcal{R}_\Pi, \mathcal{A}_\Pi)$ be the $\sigma$-secure exact $\Z_q$-summation protocol. Our protocol $\mathcal{P} = (\mathcal{R}_\Pi \circ \mathcal{R}, \mathcal{A} \circ \mathcal{A}_\Pi)$ where $\cR, \cA$ are defined as follows\footnote{Note that the final protocol $\cP$ is done by composing the randomizer $\cR$ / analyzer $\cA$ with those of $\Pi$.}:

\begin{enumerate}
    \item $\mathcal{R}: [0, 1] \to \mathbb{Z}_{q}$ on input $x_i$ works as follows:
    \begin{itemize}
    \item First, it computes a randomized encoding $y_i$ by setting
    \begin{align*}
    y_i = 
    \begin{cases}
    1 + \lfloor \Delta x_i \rfloor & \text{ w.p. } \Delta x_i - \lfloor \Delta x_i \rfloor \\
    \lfloor \Delta x_i \rfloor & \text{ w.p. } 1 - (\Delta x_i - \lfloor \Delta x_i \rfloor)
    \end{cases}
    \end{align*}
    \item Then, it samples $Z_i \sim \cD_{/n}$
    \item Finally, it outputs $(y_i + Z_i) \mod q$.
    \end{itemize}
    \item $\mathcal{A}$ decodes the result $r$ by returning 
    \begin{align*}
    r' =
    \begin{cases}
    r / \Delta &\text{ if } 0 \leq r \leq n \Delta, \\
    n &\text{ if } n \Delta + 1 \leq r \leq 2n\Delta, \\
    0 &\text{ otherwise.}
    \end{cases}
    \end{align*}
\end{enumerate}

\emph{(Privacy)} The proof of privacy proceeds identically to \cite[Lemma 5.2]{balle2020private} and is omitted here.

\emph{(Accuracy)}
Since $\Pi$ is an exact $\Z_q$-summation protocol, its output $r$ is exactly equal to $\tu \mod q$ where $\tu = \sum_{i \in [n]} (y_i + z_i) = Z + \sum_{i \in [n]} y_i$ where $Z = z_1 + \cdots + z_n$ is distributed as $\cD$.

Let $u = \sum_{i \in [n]} x_i$ be the true (unnoised) sum.
The first step in our accuracy analysis is a claim\footnote{This claim is indeed our improvement over the error analysis of \cite[Lemma 5.2]{balle2020private}.} that, regardless of the value of $u$, we have $|r' - u| \leq |\tu/\Delta - u|$. To see that this is true, let us consider the following cases:
\begin{itemize}
\item Case I: $\tu \notin (-n\Delta, 2n\Delta)$. In this case, $|\tu/\Delta - u| \geq n \geq |r' - u|$.
\item Case II: $\tu \in [0, n\Delta]$. In this case, we have $r' = \tu/\Delta$ and thus the inequality holds as an equality.
\item Case III: $\tu \in (-n\Delta, 0)$. In this case, we set $r' = 0$. Thus, $|\tu/\Delta - u| = u - \tu/\Delta \geq u - r' = |r' - u|$
\item Case IV: $\tu \in (n\Delta, 2n\Delta)$. In this case, we set $r' = n$. Thus, $|\tu/\Delta - u| = \tu/\Delta - u \geq r' - u = |r' - u|$.
\end{itemize}
Thus, in all cases, we have $|r' - u| \leq |\tu/\Delta - u|$.

We can then bound the MSE of the protocol as follows:
\begin{align*}
\E\left[\left(r' - u\right)^2\right]
&\leq \E\left[\left(\tu/\Delta - u\right)^2\right] \\
&= \E\left[\left(Z/\Delta + \sum_{i \in [n]} (y_i/\Delta - x_i)\right)^2\right] \\
&= \E\left[\left(Z/\Delta\right)^2\right] + \sum_{i \in [n]} \E\left[\left(y_i/\Delta - x_i\right)^2\right] \\
&\leq \frac{\Var(\cD)}{\Delta^2} + \frac{n}{4\Delta^2}. \qedhere
\end{align*}
\end{proof}

We can now prove \Cref{thm:shuffle-dp-main} by plugging our infinitely divisible noise (\Cref{cor:msdlap-rextremes}) to the above lemma.

\begin{proof}[Proof of \Cref{thm:shuffle-dp-main}]
Let $\Delta = \left\lceil e^{\eps/3} \sqrt{n} \right\rceil, q = 2 n \Delta$ and $\sigma = \log_2\left(\frac{e^{\eps} + 1}{\delta}\right)$. From \Cref{cor:msdlap-rextremes}, there is a zero-mean discrete infinitely divisible distribution $\cD$ such that $\cD$-noise addition mechanism is $\eps$-DP for sensitivity $\Delta$ where $\Var(\cD) \leq O(\Delta^2 \cdot e^{-2\eps/3})$. Furthermore, \Cref{thm:split-and-mix} ensures that there exists an $n$-party $\sigma$-secure $m$-message exact $\Z_q$-summation protocol where $m = O\left(1 + \frac{\sigma}{\log n}\right) = O\left(1+\frac{\eps + \log(1/\delta)}{\log n}\right)$. Thus, \Cref{lem:balle-lem53} implies that there is an $(\eps, \delta)$-shuffle-DP $m$-message protocol for real summation where each message is of length $O(\log q) = O(\eps + \log n)$ bits and MSE 
\begin{align*}
\frac{\Var(\cD)}{\Delta^2} + \frac{n}{4\Delta^2} \leq O(e^{-2\eps/3}). & \qedhere
\end{align*}
\end{proof}

Notice that, in the above proof, we use the noise from \Cref{cor:msdlap-rextremes}, which is a modified version of MSDLap.
We remark that a similar result \emph{cannot} immediately be shown via the GDL (\Cref{thm:opt-var}) or the vanilla MSDLap distributions with $r=0$ (\Cref{thm:multi-scale-dlap}),
as their $\Delta^3$ dependence in MSE leads to unavoidable dependence on $n$ in the final MSE error, due to the randomized rounding. We also stress that the continuous
results in \Cref{sec:integer-to-real} or \cite{pagh2022infinitely} similarly cannot be used here,
as the protocol must round each contribution to a finite group $\mathbb{Z}_q$ prior to summing. Neither the privacy nor the infinite divisibility of the resulting sum distribution is clear in this case.

\section{Conclusion}

This work closes the utility gap for infinitely divisible mechanisms in the high $\eps$ pure DP regime. We find no separation in either the discrete or continuous settings by restricting the private mechanism to infinitely divisible noise addition. The "staircase-like" MSE of both the GDL and the multi-scale discrete Laplace in the low-privacy regime make them a natural replacement for the staircase mechanism in the discrete distributed setting, and we hope the results introduced here can be of practical value. We show one such practical application by extending the \cite{ishai2006cryptography}
"split and mix" protocol under shuffle differential privacy,  resolving the open question posed in \cite{almost-central-shuffle-ghazi21a}. 

Beyond improving utility, we believe the generalized discrete Laplace is of independent interest for distributed private mechanism design due to the fact that it is closed under summation.
This makes it well-suited for cases where "smooth" privacy guarantees are needed for multiple outcomes, or for a single deployed system (like a secure aggregation MPC protocol over $n$ clients e.g. \cite{secagg}) where different people can make \emph{different} assumptions about what the honest fraction of the involved clients are. Additionally, this property is useful in \emph{post-hoc} privacy loss analysis when honesty assumptions are broken in a production system, and where otherwise an analyst must resort to numerical approximation of realized privacy loss.

In the continuous distributed setting, our continuous transformations of all of the mechanisms in \Cref{sec:discrete} outperform the existing Arete mechanism, as we plot in \Cref{fig:cont}. We also note a strong resemblance between the Arete distribution and the GDL mechanism, as the negative binomial distribution converges to the gamma under certain conditions. We explore this convergence formally in \Cref{app:arete-conv}.

Finally, we hope our optimized MSDLap sampler \Cref{alg:multi-nb-sample} or its constituent parts can be useful in any context where sparse, exact multivariate $\NB$ random generation is needed, or even where a general $\NB(r,p)$ sampler needs to be sublinear in $r$ (\Cref{alg:nb-sample-geo}). For multivariate sampling when $p$ is very close to 1, our approach should
be a large improvement over standard methods. We note that \Cref{alg:dirichlet} can be extended in a straightforward way (due to \Cref{lem:nb-conditioned-sum-is-dirm}) to support $\NB$ variates with different $r$ values. The only change to the algorithm is initializing the urn with varying numbers of balls.

\subsection{Open questions}
In preparing this paper, we studied the Arete mechanism \cite{pagh2022infinitely} extensively and are convinced it can also achieve the order optimal MSE of $O\left(\Delta^2 e^{-2\eps/3}\right)$, but the formal proof eluded us. We present the following conjecture\footnote{Even if this conjecture were proven, the constant factors for the Arete's MSE still underperform the continuous multi-scale discrete Laplace.
} as an open question:

\begin{conjecture}\label{conj:arete}
    Let $\Delta \ge 1$, $\eps > 10$. The $\Arete(\alpha, \theta, \lambda)$-noise addition mechanism with $\alpha = e^{-2 \eps / 3}$, $\theta = (1+t) \frac{1}{\log 2}$, and $\lambda = (1+t) e^{-\eps / 3}$ where $t = o(1)$
    is $\eps$-DP for sensitivity $\Delta$ and has MSE $O\left(\Delta^2 e^{-2\eps/3}\right)$.
\end{conjecture}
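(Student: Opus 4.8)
The plan is to follow the same template used for the GDL mechanism: reduce $\eps$-DP to a single peak-to-sensitivity density ratio via log-convexity, and then bound that ratio and the variance under the prescribed parameters. By rescaling the input by $1/\Delta$ exactly as in the proof of \Cref{thm:discrete-to-cont}, and rescaling $\theta, \lambda$ accordingly, I would assume without loss of generality that $\Delta = 1$; the variance then scales by $\Delta^2$, which is where the $\Delta^2$ factor in the target MSE comes from. Write $f = f^{\R}_{\Arete(\alpha,\theta,\lambda)}$ for the (continuous, symmetric) Arete density.

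First I would establish that $f$ is decreasing and log-convex on $[0,\infty)$ whenever $\alpha$ is small, which holds here since $\alpha = e^{-2\eps/3} \ll 1$ for $\eps > 10$. This is the continuous analogue of \Cref{lem:gdl-logconvex}, and I would prove it through the negative-binomial-to-gamma correspondence made precise in \Cref{app:arete-conv}: writing $f$ in its integral/series representation, the argument reduces to showing that the logarithm of each term is convex and decreasing, which follows from the monotonicity of $\psi$ and $\psi'$ (\Cref{obs:digamma}), exactly as in \Cref{lem:gdl-logconvex}. Granting this, \Cref{lem:cond-dp} together with \Cref{lem:logconvex-ratio} collapses the entire DP requirement to the single inequality
\[
\log \frac{f(0)}{f(1)} \le \eps .
\]

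The analytic core is a matching pair of estimates for the numerator and denominator. The role of the three parameters becomes transparent here: $\lambda$ sets a \emph{minimum scale} that regularizes the peak so that $f(0) = \Theta(1/\lambda) = \Theta(e^{\eps/3})$ stays finite, while away from the origin the density is governed by its heaviest (gamma) tail, giving $f(1) = \Theta\!\left(\alpha\, e^{-1/\theta}\right) = \Theta\!\left(e^{-2\eps/3}\cdot 2^{-1}\right)$ once $\theta = (1+t)/\log 2$ is substituted. Combining, $\log(f(0)/f(1)) = \tfrac{\eps}{3} + \tfrac{2\eps}{3} + O(1) = \eps + O(1)$, and the free slack $t = o(1)$ is what absorbs the lower-order $O(1)$ terms and constants so that the bound holds exactly with budget $\eps$; the split into $\eps/3$ (peak height, from $1/\lambda$) and $2\eps/3$ (rarity of the tail excursion needed to reach distance $1$, from $1/\alpha$) is precisely the decomposition $\eps = \eps/3 + 2\eps/3$ behind the conjectured parameters. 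The variance is then read off as $\Var(\Arete) = \Theta(\alpha \theta^2 + \lambda^2) = O(e^{-2\eps/3})$, since both contributions are of this order.

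I expect the two genuinely hard steps to be (i) the log-convexity claim and (ii) the \textbf{constant-sharp} control of the ratio in the display above. Unlike the GDL, where the peak value $f(0)$ is an elementary closed form, the Arete peak requires asymptotics of an integral against the gamma/exponential mixing measure, and one must simultaneously pin down the leading constants of $f(0)$ and $f(1)$ — an upper bound on one and a lower bound on the other — tightly enough that the residual error fits inside the $(1+t)$ margin. This two-sided, constant-level matching near the regularized peak and in the exponential tail \emph{at once} is, I believe, exactly the obstacle that prevented a clean proof; a cleaner route might be to show $\log(f(0)/f(1))$ is monotone in $\alpha$ and $\lambda$ and optimize the bound directly, rather than compute both endpoints exactly.
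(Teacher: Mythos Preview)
The statement you are attempting is \Cref{conj:arete}, which the paper explicitly presents as an \emph{open conjecture}: the authors write that ``the formal proof eluded us.'' There is therefore no proof in the paper to compare your plan against; your proposal is a sketch of how one might attack the conjecture, not a reconstruction of an existing argument.

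Your overall template (reduce to $\Delta=1$, establish log-convexity of the density on $[0,\infty)$, invoke \Cref{lem:logconvex-ratio} to collapse DP to $\log\tfrac{f(0)}{f(1)}\le\eps$, then estimate $f(0)$ and $f(1)$) is the natural one and mirrors the GDL analysis. But the step you label ``exactly as in \Cref{lem:gdl-logconvex}'' has a genuine gap. In the GDL case the PMF is a series whose terms are each log-convex in $x$, so the sum is log-convex. For the Arete, writing $f(x)=\int f_{\Gamma-\Gamma}(y)\, f_{\Lap(\lambda)}(x-y)\,dy$, the integrand $f_{\Lap(\lambda)}(x-y)$ as a function of $x$ is \emph{tent-shaped} (log-concave, not log-convex), so the ``mixture of log-convex pieces'' argument does not go through. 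Rewriting the Arete as a difference of i.i.d.\ positives $W=\Gamma(\alpha,\theta)+\mathrm{Exp}(1/\lambda)$ does not help either: $f_W(w)\sim w^{\alpha}$ near $0$, hence $f_W$ is log-concave there. Whether the Arete density is nonetheless log-convex on $[0,\infty)$ may well be true, but it needs a different argument and is presumably part of why the conjecture is open.

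Your heuristic $f(0)=\Theta(1/\lambda)$ and $f(1)=\Theta(\alpha e^{-1/\theta})$ is sound, but note that the slack $t=o(1)$ only buys an additive $o(1)$ in $\log\tfrac{f(0)}{f(1)}$, not $O(1)$; so the constants hidden in your $\Theta(\cdot)$'s must match \emph{exactly}, not just up to bounded factors. You correctly flag this as the second hard step; together with the log-convexity gap, these are precisely the obstacles the paper could not overcome.
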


Our work also raises the following questions:

\begin{itemize}
\item 
How close to the optimal MSE \emph{including constants} can be achieved in the for continuous and infinite divisible mechanisms? While we match the constants of the discrete staircase for large $\eps$, there is still a sizable gap between our results and the continuous staircase mechanism in constant factors.
\item Can the number of \emph{messages} required for our shuffle-DP protocol in \Cref{thm:shuffle-dp-main} be further improved? Recall that we use the approach of Balle et al. \cite{balle2020private} to translate a secure summation protocol to a shuffle-DP one. In fact, there is a more direct approach by Ghazi et al.~\cite{almost-central-shuffle-ghazi21a} that achieves a smaller number of messages. Since their proof also uses infinitely divisible noises, it is plausible that our noise distribution can be used there. However, their proof is considerably more involved compared to \cite{balle2020private} and does not use the noise distributions in a black-box manner. Therefore, we leave it as a future research direction.

\end{itemize}


\section*{Acknowledgment}
Thanks to Peter Kairouz for useful discussions and pointing out the prior work, especially \cite{bagdasaryan2022heatmaps}. This work was supported by Google.

\bibliographystyle{alpha}
\bibliography{refs}

\appendix

\section{Appendix}

\subsection{Deferred Proof of \Cref{lem:logconvex-ratio}}
\label{app:logconvex-ratio-proof}

Before we prove \Cref{lem:logconvex-ratio}, it will be convenient to state the following simple lemma.

\begin{lemma}\label{lem:logconvex-app}
    Let $f(x)$ be log convex on the interval $[a, b]$. Then for any $x \in [a,b]$ and $\Delta \in \mathbb{R}^+$ such that $x+\Delta \in [a,b]$:
    $$\frac{f(0)}{f(\Delta)} \ge \frac{f(x)}{f(x + \Delta)}$$
\end{lemma}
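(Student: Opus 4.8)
The plan is to pass to logarithms and reduce the claim to a standard convexity inequality. Writing $g := \log f$, which is convex on $[a,b]$ by hypothesis, and taking logs of both sides, the desired inequality $\frac{f(0)}{f(\Delta)} \ge \frac{f(x)}{f(x+\Delta)}$ becomes $g(0) - g(\Delta) \ge g(x) - g(x+\Delta)$, which after rearranging is equivalent to
\[
g(0) + g(x+\Delta) \ge g(x) + g(\Delta).
\]
So the statement collapses to showing that, for a convex $g$, the sum of its values at the two ``outer'' points $0$ and $x+\Delta$ dominates the sum at the two ``inner'' points $x$ and $\Delta$. I note that the target is symmetric in $x$ and $\Delta$, so I need not assume anything about their relative order; I will only use $x \ge 0$ and $\Delta > 0$ (whence $0 \le x,\Delta \le x+\Delta$ and $x+\Delta>0$), together with the implicit requirement that all four points $0,x,\Delta,x+\Delta$ lie in $[a,b]$, which holds in the intended application where $a=0$.

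Next I would exhibit $x$ and $\Delta$ as convex combinations of the two outer points using complementary weights. Setting $\lambda := \frac{\Delta}{x+\Delta}$ and $1-\lambda = \frac{x}{x+\Delta}$, a direct check gives
\[
x = \lambda \cdot 0 + (1-\lambda)(x+\Delta), \qquad \Delta = (1-\lambda)\cdot 0 + \lambda(x+\Delta),
\]
with $\lambda,\,1-\lambda \in [0,1]$ since $x,\Delta \ge 0$. Applying convexity of $g$ to each decomposition yields $g(x) \le \lambda\, g(0) + (1-\lambda)\, g(x+\Delta)$ and $g(\Delta) \le (1-\lambda)\, g(0) + \lambda\, g(x+\Delta)$.

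Finally I would add the two inequalities. The coefficients of $g(0)$ sum to $\lambda + (1-\lambda) = 1$, and likewise those of $g(x+\Delta)$, so the right-hand sides combine to exactly $g(0) + g(x+\Delta)$, giving $g(x)+g(\Delta) \le g(0)+g(x+\Delta)$; exponentiating recovers the claim. I do not anticipate a genuine obstacle here, since the argument is essentially mechanical — the only points that require care are the bookkeeping of the convex-combination weights and making explicit the side conditions ($x \ge 0$ and all four points lying in the log-convex domain) that the lemma statement leaves implicit. As an alternative, one could instead set $\phi(t) := g(t+\Delta)-g(t)$ and observe that $\phi$ is nondecreasing (because $g'$ is nondecreasing by convexity), so that $\phi(x) \ge \phi(0)$ reproduces the same inequality; but the explicit two-line convex-combination computation is cleaner and avoids assuming differentiability of $g$.
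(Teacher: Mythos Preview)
Your proposal is correct and is essentially the same argument as the paper's: both express $x$ and $\Delta$ as convex combinations of $0$ and $x+\Delta$ with complementary weights $\lambda = \tfrac{\Delta}{x+\Delta}$ and $1-\lambda$, apply log-convexity to each, and add. The only difference is presentational---the paper writes the two inequalities multiplicatively in terms of $f$ (weighted geometric means) and multiplies, whereas you take $g=\log f$ first and add; your explicit note that the argument needs $0,\Delta \in [a,b]$ (implicit in the lemma statement, satisfied in the application since $a=0$) is a useful clarification.
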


\begin{proof}
From log-convexity, we have
\begin{align*}
f(0)^{\frac{\Delta}{x + \Delta}} f(x + \Delta)^{\frac{x}{x + \Delta}} &\geq f(x), &\text{ and }  \\
f(0)^{\frac{x}{x + \Delta}} f(x + \Delta)^{\frac{\Delta}{x + \Delta}} &\geq f(\Delta).
\end{align*}
Multiplying the two yields the claimed inequality.
\end{proof}

We are now ready to prove \Cref{lem:logconvex-ratio}.

\begin{proof}[Proof of \Cref{lem:logconvex-ratio}]
    Assume w.l.o.g. that $x \le x'$. There are three cases to consider.
    \begin{enumerate}
        \item $x < 0$ and $x' < 0$. We have $\frac{f(x)}{f(x')} \le \frac{f(-x)}{f(-x')}$.
        \item $x < 0$ and $'x \ge 0$. We have $\frac{f(x)}{f(x')} \le \frac{f(0)}{f(x')}$
        \item $x \ge 0$ and $x' \ge 0$.
    \end{enumerate}
    So it suffices to handle case (3). By $f$ decreasing on $[0, \infty)$, we have $\frac{f(x)}{f(x')} \le \frac{f(x)}{f(x+\Delta)}$. Applying \Cref{lem:logconvex-app} concludes the proof.
\end{proof}

\subsection{Alternative simplified GDL privacy bound}\label{sec:alt-gdl-privacy}

In this section we will outline a tighter version of
\Cref{cor:hyper-ratio}, which only well-approximates the privacy
loss in the small $\beta$ regime. This bound well-approximates
the privacy loss in all $\beta$ regimes, at the cost of some added complexity in the expression.

\begin{corollary}[to \Cref{thm:main-epsilon}]
For any $\Delta \in N, a > 0, \beta \in (0, 1)$, the GDL$(\beta, a)$-noise addition mechanism is $\eps$-DP for sensitivity $\Delta$ where $\eps \le a \Delta + (1-\beta)\log(\beta + \Delta) + \log(\Gamma(\beta))$
\end{corollary}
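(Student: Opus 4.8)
The plan is to start from the exact characterization in \Cref{thm:main-epsilon} for the regime $0 < \beta < 1$ and weaken it into the claimed closed form. There the exact threshold is $a\Delta + \log\left(\frac{\HG{\beta}{\beta}{1}{e^{-2a}}}{\HG{\beta}{\beta+\Delta}{1+\Delta}{e^{-2a}}} \cdot \frac{\Gamma(\Delta+1)\Gamma(\beta)}{\Gamma(\beta+\Delta)}\right)$. First I would reuse the term-wise comparison already carried out in the proof of \Cref{cor:hyper-ratio} to bound the ratio of hypergeometric functions by $1$, so that it suffices to show
\[
\log\frac{\Gamma(\Delta+1)\Gamma(\beta)}{\Gamma(\beta+\Delta)} \le (1-\beta)\log(\beta+\Delta) + \log\Gamma(\beta).
\]

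The second step is to cancel the common $\log\Gamma(\beta)$ term on both sides, reducing the task to the pure gamma-ratio inequality $\frac{\Gamma(\Delta+1)}{\Gamma(\beta+\Delta)} \le (\beta+\Delta)^{1-\beta}$. The key observation is to rewrite the numerator as $\Gamma(\Delta+1) = \Gamma\big((\beta+\Delta)+(1-\beta)\big)$, so that both sides are controlled by shifting the argument $x := \beta+\Delta$ by the amount $s := 1-\beta \in (0,1)$.

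The main tool is the log-convexity of $\Gamma$, which follows from \Cref{obs:digamma}, since $(\log\Gamma)'' = \psi' \ge 0$ on $(0,\infty)$. Applying convexity of $\log\Gamma$ to the interpolation $x+s = (1-s)\,x + s\,(x+1)$ gives $\log\Gamma(x+s) \le (1-s)\log\Gamma(x) + s\log\Gamma(x+1)$, whence $\frac{\Gamma(x+s)}{\Gamma(x)} \le \left(\frac{\Gamma(x+1)}{\Gamma(x)}\right)^{s} = x^{s}$ using $\Gamma(x+1)=x\,\Gamma(x)$. Substituting $x = \beta+\Delta$ and $s = 1-\beta$ yields exactly the desired gamma-ratio bound and completes the proof.

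I do not anticipate a serious obstacle; the argument is essentially an application of a Gautschi-type gamma-ratio estimate, with the only nonroutine move being the recognition that the $\log\Gamma(\beta)$ terms cancel so that log-convexity can be applied cleanly. The one point worth verifying is that this bound genuinely improves on \Cref{cor:hyper-ratio}: for small $\beta$ one checks $(1-\beta)\log(\beta+\Delta) + \log\Gamma(\beta) \approx \log\Delta + \log(1/\beta) = \log(\Delta/\beta)$ (using $\Gamma(\beta)\sim 1/\beta$), recovering the earlier bound, while as $\beta \to 1$ both added terms vanish and the bound approaches the value $a\Delta$ from the $\beta \ge 1$ case of \Cref{thm:main-epsilon}, confirming that it remains accurate across all $\beta$ regimes.
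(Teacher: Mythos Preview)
Your proposal is correct and essentially identical to the paper's proof: both reuse the hypergeometric-ratio bound from \Cref{cor:hyper-ratio} and then upper-bound $\frac{\Gamma(\Delta+1)}{\Gamma(\beta+\Delta)}$ by $(\beta+\Delta)^{1-\beta}$. The only cosmetic difference is that the paper cites this last step as Wendel's double inequality, whereas you derive it directly from the log-convexity of $\Gamma$ (which is in fact the standard proof of Wendel's bound).
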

\begin{proof}
    The proof is identical to \Cref{cor:hyper-ratio}, except in the last step where we use the following Wendel's double inequality \cite[eq. 2.6]{qi2010bounds}:

\begin{align*}
\log\left(\frac{\Gamma(\Delta + 1)\Gamma(\beta)}{\Gamma(\beta + \Delta)}\right) 
&\le \log\left(\Gamma(\beta)(\Delta+\beta)^{1-\beta}\right)\\
&= (1-\beta)\log(\beta + \Delta) + \log(\Gamma(\beta)). &\qedhere
\end{align*}
\end{proof}
\subsection{Analytical variance of the discrete staircase distribution}\label{sec:dstair-var}

In this section we derive the analytical variance of the discrete staircase
using the Mathematica software \cite{Mathematica}, for the purposes of generating
\Cref{fig:discrete}.

\begin{figure}[htp]
  \centering
  \includegraphics[width=\linewidth]{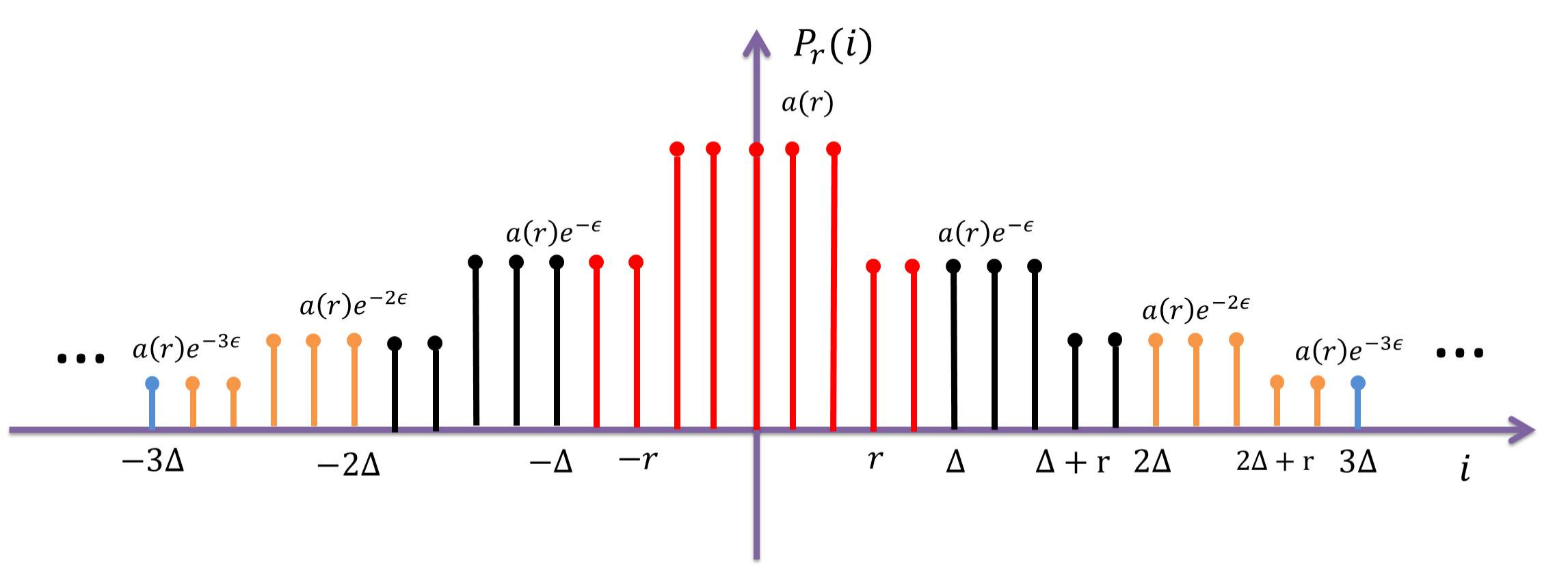}
  \caption{The Discrete Staircase PMF from \cite[Figure 5]{geng2014optimal}}
  \label{fig:dstair}
  \Description{A visual of the PMF in \Cref{def:dstair}. It shows a symmetric PMF, with a central column of width $2 r - 1$ with probability mass $a(r)$ at each point. Subsequent columns are of width $\Delta$ and their probability mass is $e^{-\eps}$ times smaller than the column nearer to the center.}
\end{figure}

\begin{definition}[\cite{geng2014optimal}]\label{def:dstair}
The discrete staircase distribution with parameters $1 \le r \le \Delta$, $\eps > 0$, and $\Delta \in \mathbb{N}$ is defined as
$$
f_{\DStair_{r, \eps, \Delta}}(i) = 
\begin{cases}
a(r) & 0 \le i < r \\
a(r)e^{-\epsilon} & r \le i < \Delta\\
e^{-k \epsilon}f_{\DStair_{r,\eps, \Delta}}(i - k\Delta) & k \Delta \le i < (k + 1) \Delta\\
f_{\DStair_{r,\eps, \Delta}}(-i) & i < 0
\end{cases}
$$

where  $a(r) = \frac{1-b}{2r + 2b(\Delta - r) - (1-b)}$, $b = e^{-\epsilon}$, and $k \in \mathbb{N}$.
\end{definition}

\begin{lemma}
Let $z = e^\eps - 1$, then
\begin{align*}
\Var[\DStair_{r, \eps, \Delta}] = \frac{x_1 + x_2 + x_3}{3z^2 \left(1-2r +e^\eps (2r -1) + 2\Delta\right)}
\end{align*}
Where
\begin{itemize}
\item $x_1 = 2r^3 z^3 - 3r^2 z^2(z - 2\Delta)$
\item $x_2 = r z(1 + e^{2\eps} + 6\Delta(1+\Delta) + e^\eps (6\Delta(\Delta-1) - 2))$
\item $x_3 = 2e^\eps \Delta (-1 + 4\Delta^2 + \cosh(\eps) + 2\Delta^2 \cosh(\eps) - 3\Delta \sinh(\eps))$
\end{itemize}
\end{lemma}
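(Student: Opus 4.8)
The plan is to use the symmetry and block-periodicity of the staircase PMF. Because $f_{\DStair_{r,\eps,\Delta}}$ is symmetric about $0$ (Definition~\ref{def:dstair}), its mean is $0$ and the variance equals the second moment, $\Var[\DStair_{r,\eps,\Delta}] = 2\sum_{i\ge 0} i^2 f_{\DStair_{r,\eps,\Delta}}(i)$, where the $i=0$ term drops out. The key structural observation is that every $i \ge 0$ factors uniquely as $i = k\Delta + j$ with $k \ge 0$ and $0 \le j < \Delta$, and then $f_{\DStair_{r,\eps,\Delta}}(i) = e^{-k\eps} c_j$, where $c_j = a(r)$ for $0 \le j < r$ and $c_j = a(r)e^{-\eps}$ for $r \le j < \Delta$.

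Expanding $(k\Delta + j)^2 = k^2\Delta^2 + 2k\Delta j + j^2$ and separating the outer (over $k$) and inner (over $j$) sums, I would write
\begin{align*}
\sum_{i\ge 0} i^2 f_{\DStair_{r,\eps,\Delta}}(i) = \Delta^2 A\, G_2 + 2\Delta B\, G_1 + C\, G_0,
\end{align*}
where $A = \sum_{j=0}^{\Delta-1} c_j$, $B = \sum_{j=0}^{\Delta-1} j c_j$, $C = \sum_{j=0}^{\Delta-1} j^2 c_j$ are finite within-period sums, and $G_m = \sum_{k\ge 0} k^m e^{-k\eps}$ are the standard geometric moments $G_0 = \tfrac{1}{1-e^{-\eps}}$, $G_1 = \tfrac{e^{-\eps}}{(1-e^{-\eps})^2}$, $G_2 = \tfrac{e^{-\eps}(1+e^{-\eps})}{(1-e^{-\eps})^3}$. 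Each of $A, B, C$ is then evaluated by splitting the range at $r$ and applying the closed forms for $\sum j$ and $\sum j^2$ over consecutive integers; every one carries the common factor $a(r)$.

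Before tackling the second moment, I would first run the same block decomposition with $i^0$ in place of $i^2$: this recovers the normalization $2A\,G_0 - a(r) = 1$, which forces $a(r) = \tfrac{1-e^{-\eps}}{2r + 2e^{-\eps}(\Delta-r) - (1-e^{-\eps})}$ exactly as stated, and equivalently $a(r) = z/D$ with $D = 1 - 2r + e^\eps(2r-1) + 2\Delta$ after multiplying through by $e^\eps$ and using $1-e^{-\eps} = e^{-\eps}z$. This check both validates the block accounting and exposes where the denominator $D$ and the factor $z$ originate.

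Finally, substituting $a(r) = z/D$ and the explicit $A,B,C,G_0,G_1,G_2$ into $\Var = 2(\Delta^2 A G_2 + 2\Delta B G_1 + C G_0)$ and rewriting every $e^{-\eps}$, $\cosh\eps$, $\sinh\eps$ in terms of $z = e^\eps - 1$ yields the claimed rational expression; for instance the $C\,G_0$ contribution alone already produces the leading $2r^3 z^3$ of the numerator. The main obstacle is this concluding algebraic consolidation: collecting the three pieces over the single denominator $3z^2 D$ and sorting the result into the numerator blocks $x_1, x_2, x_3$ requires tracking cubic cross-terms in both $r$ and $\Delta$, which is precisely the error-prone bookkeeping that the authors delegate to Mathematica.
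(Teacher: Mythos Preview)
Your proposal is correct and follows essentially the same approach as the paper: exploit the block-periodic structure of the staircase PMF to reduce the second moment to a finite within-period sum weighted by geometric series in $e^{-\eps}$, then hand the final consolidation to symbolic algebra. The only cosmetic difference is the indexing---the paper separates the central column $[-(r-1),r-1]$ and then sums over the width-$\Delta$ steps at heights $a(r)e^{-k\eps}$, whereas you index by residue $j=i\bmod\Delta$ and factor into $A,B,C$ against the geometric moments $G_0,G_1,G_2$; both routes arrive at the same rational function and both concede the last algebraic step.
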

\begin{proof}
Let $X \sim \DStair(r, \eps, \Delta)$. We will first compute the variance from just the central "stair".
$$
C = \sum_{x=-r+1}^{r-1} x^2 a(r) = \frac{1}{3}r(r-1)(2r - 1)a(r)
$$

Finally we compute the variance from the full support of the distribution.

\begin{align*}
\E[X^2] &= \sum_{x=-\infty}^\infty x^2 f_{\DStair_{r, \eps, \Delta}}(x)\\
&= C + 2\sum_{k=1}^\infty \sum_{i=1}^\Delta ((k-1) \Delta + r + i - 1)^2 f_{\DStair_{r, \eps, \Delta}}(k\Delta -r + i)\\
&= C + 2\sum_{k=1}^\infty \sum_{i=1}^\Delta ((k-1) \Delta + r + i - 1)^2 a(r)e^{-k\epsilon}\\
\end{align*}

The result follows from symbolic simplification in Mathematica, and replacing $e^\eps - 1$ terms with $z$ for ease of presentation.
\end{proof}

We also derive the following concrete bound for large $\eps$, which facilitates a comparison with our MSDLap noise.

\begin{observation} \label{obs:dstair-var-large}
For any $\Delta$ and $\eps \geq \log\left(\frac{\Delta (\Delta + 1)(2\Delta+1)}{2}\right)$, the variance of the discrete staircase
    distribution (for any value of $r$) is at least $\frac{1}{e^{\eps} + (2\Delta - 1)} \cdot \frac{\Delta (\Delta + 1)(2\Delta+1)}{3}$
\end{observation}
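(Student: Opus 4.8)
The plan is to argue directly from the block decomposition of the second moment that appears in the proof of the variance lemma above, rather than from its unwieldy closed form. Since $\DStair_{r,\eps,\Delta}$ is symmetric about $0$, its variance equals $\E[X^2]$, and writing $q = e^{-\eps}$, $A = \sum_{m=1}^{r-1} m^2$, $S_k = \sum_{i=1}^{\Delta}\big((k-1)\Delta + r + i - 1\big)^2$ and $N = \tfrac{\Delta(\Delta+1)(2\Delta+1)}{3}$, the decomposition reads $\Var[\DStair_{r,\eps,\Delta}] = 2a(r)\big(A + \sum_{k\ge1} q^k S_k\big)$ with $a(r) = \tfrac{1-q}{(2r-1)(1-q) + 2\Delta q}$. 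The first step is the clean lower bound $\sum_{k\ge1} q^k S_k \ge \tfrac{q}{1-q} S_1$, valid because each position in block $k$ dominates the corresponding position in block $1$, so $S_k \ge S_1$ for all $k$. This reduces the claim to $L(r) := 2a(r)\big(A + \tfrac{q}{1-q} S_1\big) \ge T$, where the target is $T = \tfrac{N}{e^\eps + 2\Delta - 1} = \tfrac{Nq}{1 + (2\Delta-1)q}$. A reassuring check is that at $r=1$ (where $A=0$ and $S_1 = N/2$, the discrete-Laplace case) one has $L(1) = T$ \emph{exactly}, with no use of the hypothesis on $\eps$; this confirms the target constant is correct and shows the hypothesis is only needed to handle $r \ge 2$.

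Next I would clear the positive denominators in $L(r) \ge T$ to obtain a quadratic inequality in $q$,
\[
2A + L_1 q + L_2 q^2 \ge 0, \qquad L_1 = 2A(2\Delta-2) + 2S_1 - N(2r-1), \quad L_2 = 2(2\Delta-1)(S_1 - A) - N(2\Delta - 2r + 1),
\]
to be verified on the interval $0 < q \le q_0 := \tfrac{2}{\Delta(\Delta+1)(2\Delta+1)} = \tfrac{2}{3N}$ that the hypothesis $e^\eps \ge \tfrac{\Delta(\Delta+1)(2\Delta+1)}{2}$ supplies. The constant term $2A$ is nonnegative, so everything comes down to two polynomial facts about $L_1$ and $L_2$ as functions of $r \in \{1,\dots,\Delta\}$.

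For $L_2$ I would prove $L_2 \ge 0$ throughout by monotonicity: using $S_1 - A = \sum_{m=1}^{r+\Delta-1}m^2 - 2\sum_{m=1}^{r-1}m^2$, the finite difference of $\tfrac12 L_2$ from $r$ to $r+1$ equals $(2\Delta-1)\big((r+\Delta)^2 - 2r^2\big) + N$, which is positive for all $r \le \Delta$ (there $(r+\Delta)^2 > 2r^2$); since $L_2 = 0$ at $r=1$, this gives $L_2 \ge 0$. Having $L_2 \ge 0$ I may discard $L_2 q^2$, leaving $2A + L_1 q \ge 0$ on $(0,q_0]$. When $L_1 \ge 0$ this is immediate; when $L_1 < 0$ the expression is minimized at $q = q_0$, so it suffices to show $|L_1| \le 2A/q_0 = 3AN$, i.e. $L_1 + 3AN \ge 0$ — and this is exactly where $q_0$, hence the $\eps$-hypothesis, enters. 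I would certify $L_1 + 3AN \ge 0$ by grouping: $3AN - N(2r-1) = N\,(2r-1)\big(\tfrac{(r-1)r}{2} - 1\big)$ is nonnegative for $r \ge 2$, while $2A(2\Delta-2)$ and $2S_1$ are manifestly nonnegative, and $r=1$ holds with equality.

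The main obstacle is the uniformity in $r$: the linear coefficient $L_1$ genuinely turns negative for intermediate values (e.g. $\Delta = 10,\ r = 2$ gives $L_1 = -1264$), so any crude ``central-term-dominates'' estimate fails, and one must match the sharp threshold $|L_1| \le 3AN$ to the endpoint $q_0$ precisely. What makes this tractable is recognizing that $q \le q_0$ is exactly the condition keeping $q$ to the left of the first root of the parabola $2A + L_1 q + L_2 q^2$, and that the two structural facts — monotonicity of $L_2$ and the factorization certifying $L_1 + 3AN \ge 0$ — let one verify the inequality for all $r$ at once instead of resorting to a brute-force polynomial estimate.
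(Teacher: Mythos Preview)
Your argument is correct, and the algebra checks out: the reduction to the quadratic $2A + L_1 q + L_2 q^2 \ge 0$, the monotonicity argument giving $L_2 \ge 0$ (with $L_2(1)=0$), and the factorization $3AN - N(2r-1) = N(2r-1)\big(\tfrac{r(r-1)}{2}-1\big)$ establishing $L_1 + 3AN \ge 0$ for $r \ge 2$ all hold. The check $L(1)=T$ confirms tightness at $r=1$.

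However, your route is considerably more elaborate than the paper's. The paper simply splits on $r$: for $r \ge 2$ one has $f_{\DStair_{r,\eps,\Delta}}(0)=a(r) \le \tfrac{1}{3}$, whence $\Pr[|X|\ge 1]\ge \tfrac{2}{3}$ and $\Var[X]\ge \tfrac{2}{3}$, which already meets the target because the hypothesis $e^{\eps}\ge \tfrac{\Delta(\Delta+1)(2\Delta+1)}{2}$ forces the target below $\tfrac{2}{3}$. For $r=1$ the paper does essentially your $S_k\ge S_1$ step (dropping the $\ell\Delta$ cross terms) and obtains the target as an exact value, with no $\eps$-hypothesis needed. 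So the paper trades your uniform quadratic analysis for a two-line crude bound on the $r\ge 2$ cases. What you gain is a single argument that treats all $r$ simultaneously and makes explicit \emph{why} the threshold $q_0=\tfrac{2}{\Delta(\Delta+1)(2\Delta+1)}$ is exactly what is needed (it is the point at which the linear term $L_1 q$ is controlled by the constant $2A$); what the paper gains is brevity, since the $r\ge 2$ cases are so far from tight that the coarsest possible variance estimate already suffices.
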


\begin{proof}
First, notice that, if $r \ne 1$, then $f_{\DStair_{r, \eps, \Delta}(0)} \leq \frac{1}{3}$. Since the noise is zero-mean, the variance is thus at least $\frac{2}{3}$ which is at least $\frac{1}{e^{\eps} + (2\Delta - 1)} \cdot \frac{\Delta (\Delta + 1)(2\Delta+1)}{3}$ by our condition on $\eps$.

Next, consider the case $r = 1$. In this case, the variance is exactly
\begin{align*}
&\sum_{i=-\infty}^{\infty}  i^2 \cdot f_{\DStair_{1, \eps, \Delta}}(i) \\
&= 2 \sum_{i=1}^{\infty}  i^2 \cdot f_{\DStair_{1, \eps, \Delta}}(i) \\
&= 2 \sum_{j=1}^{\Delta} \sum_{\ell=0}^{\infty} (j + \ell \Delta)^2  
 \cdot f_{\DStair_{1, \eps, \Delta}}(j+\ell \Delta) \\
 &= 2 \sum_{j=1}^{\Delta} \sum_{\ell=0}^{\infty} (j + \ell \Delta)^2  
 \cdot a(1) \cdot e^{-\eps(\ell+1)} \\
 &\geq 2 \sum_{j=1}^{\Delta} \sum_{\ell=0}^{\infty} j^2  
 \cdot a(1) \cdot e^{-\eps(\ell+1)} \\
 &= 2a(1) \left(\sum_{j=1}^{\Delta} j^2\right)\left(\sum_{\ell}^{\infty} e^{-\eps(\ell + 1)}\right) \\
 &= 2 \left(\frac{1 - e^{-\eps}}{1 + (2\Delta - 1)e^{-\eps}}\right) \left(\frac{\Delta (\Delta + 1)(2\Delta+1)}{6}\right) \left(\frac{e^{-\eps}}{1-e^{-\eps}}\right) \\
 &= \frac{1}{e^{\eps} + (2\Delta - 1)} \cdot \frac{\Delta (\Delta + 1)(2\Delta+1)}{3}
\end{align*}
where $a(1)$ is as defined in \Cref{def:dstair}.
\end{proof}

\subsection{Arete convergence}\label{app:arete-conv}
\todo{WIP section}
In this section we show a link between the GDL
distribution and the Arete distribution from \cite{pagh2022infinitely}.

\begin{definition}\label{def:arete}
For $k, \theta, \lambda > 0$, let $\Arete(k, \theta, \lambda)$ denote the distribution of $Z_1 - Z_2 + Z_3$ where $Z_1, Z_2 \overset{\text{i.i.d.}}{\sim} \Gamma(k, \theta)$ and $Z_3 \sim \Lap(\lambda)$.
\end{definition}

The main technical lemma linking the $\GDL$ and $\Arete$ follows from showing how
the negative binomial distribution converges to the gamma distribution.

\begin{lemma}\label{lem:nb-to-gamma}
Let $X \sim \NB(k, 1-e^{-\frac{1}{\theta \Delta_\disc}})$ and $Z_{\Delta_\disc} \sim X / \Delta_\disc$. Then
    $$
    Z_{\Delta_\disc} \xrightarrow{dist} \Gamma(k, \theta)
    $$
as $\Delta_\disc \to \infty$.
\end{lemma}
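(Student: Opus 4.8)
The plan is to prove convergence in distribution by showing pointwise convergence of the moment generating functions (or equivalently the characteristic functions) and appealing to Lévy's continuity theorem. First I would write down the MGF of the scaled variable $Z_{\Delta_\disc} = X/\Delta_\disc$ where $X \sim \NB(k, p_{\Delta_\disc})$ with $p_{\Delta_\disc} = 1 - e^{-1/(\theta \Delta_\disc)}$. Recall that the MGF of $\NB(k,p)$ evaluated at argument $s$ is $\left(\frac{p}{1-(1-p)e^{s}}\right)^k$, valid for $s$ small enough that $(1-p)e^s < 1$. Hence the MGF of $Z_{\Delta_\disc}$ at $s$ is obtained by replacing $s$ with $s/\Delta_\disc$, giving
\[
M_{Z_{\Delta_\disc}}(s) = \left(\frac{p_{\Delta_\disc}}{1 - (1-p_{\Delta_\disc})\, e^{s/\Delta_\disc}}\right)^{k}.
\]

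The target is the Gamma MGF $M_{\Gamma(k,\theta)}(s) = (1 - \theta s)^{-k}$ for $s < 1/\theta$. So it suffices to show that the base of the exponent converges: $\frac{p_{\Delta_\disc}}{1 - (1-p_{\Delta_\disc}) e^{s/\Delta_\disc}} \to (1-\theta s)^{-1}$ as $\Delta_\disc \to \infty$. The second step is therefore a routine asymptotic expansion. Substituting $1 - p_{\Delta_\disc} = e^{-1/(\theta\Delta_\disc)}$ and writing $n = \Delta_\disc$ for brevity, the denominator becomes $1 - e^{-1/(\theta n)} e^{s/n} = 1 - e^{(s - 1/\theta)/n}$. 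Using $e^{x} = 1 + x + O(x^2)$ for the small argument $x = (s-1/\theta)/n$, the denominator is $-\frac{s - 1/\theta}{n} + O(1/n^2) = \frac{1/\theta - s}{n}\bigl(1 + O(1/n)\bigr)$. Similarly the numerator $p_{\Delta_\disc} = 1 - e^{-1/(\theta n)} = \frac{1}{\theta n} + O(1/n^2) = \frac{1}{\theta n}\bigl(1 + O(1/n)\bigr)$. Taking the ratio, the factors of $n$ cancel and we obtain $\frac{1/\theta}{1/\theta - s}\bigl(1 + O(1/n)\bigr) = \frac{1}{1 - \theta s}\bigl(1 + O(1/n)\bigr) \to \frac{1}{1-\theta s}$.

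Raising to the $k$-th power (a continuous operation) then yields $M_{Z_{\Delta_\disc}}(s) \to (1-\theta s)^{-k} = M_{\Gamma(k,\theta)}(s)$ for every fixed $s < 1/\theta$, and in particular on a neighborhood of $0$. Since convergence of MGFs on an open interval around the origin implies convergence in distribution, the claim follows. The main obstacle, such as it is, is purely bookkeeping: one must verify that for each fixed $s$ in the relevant range the quantity $(1-p_{\Delta_\disc}) e^{s/\Delta_\disc}$ is eventually strictly less than $1$, so that the MGF is finite and the expansion is legitimate; this holds because $(1-p_{\Delta_\disc})e^{s/\Delta_\disc} = e^{(s-1/\theta)/\Delta_\disc} < 1$ precisely when $s < 1/\theta$. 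If one prefers to avoid MGF-domain issues entirely, the identical computation carried out with characteristic functions (replacing $s$ by $it$) sidesteps any finiteness concern and still invokes Lévy's continuity theorem to conclude.
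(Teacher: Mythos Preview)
Your proposal is correct and follows essentially the same approach as the paper: both establish convergence in distribution via L\'evy's continuity theorem by showing pointwise convergence of the transform of $Z_{\Delta_\disc}$ to that of $\Gamma(k,\theta)$. The only cosmetic differences are that the paper works directly with characteristic functions and evaluates the resulting $0/0$ limit by L'H\^opital's rule, whereas you use MGFs and a first-order Taylor expansion; your remark that the identical computation goes through for characteristic functions makes the two arguments effectively interchangeable.
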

\begin{proof}
By Lévy's continuity theorem, convergence in distribution
follows from pointwise convergence of the characteristic function. Denote $\varphi(\mathcal{D})$ the characteristic function of the distribution $\mathcal{D}$. We first note the following facts:
\begin{itemize}
\item $\varphi(\NB(r, p)) = \left(\frac{p}{1-e^{i t}(1-p)}\right)^r$
\item $\varphi(\NB(r, p) / \Delta_\disc) = \left(\frac{p}{1-e^{i t / \Delta_\disc}(1-p)}\right)^r$
\item $\varphi(\Gamma(k, \theta)) = (1 - \theta t)^{-k}$
\end{itemize}
Finally,
\begin{align*}
\lim_{\Delta_\disc \to \infty} \varphi(Z) &= \lim_{\Delta_\disc \to \infty} 
    \left(
    \frac{1 - e^{-\frac{1}{\theta \Delta_\disc}}}
        {1-e^{(i t - 1/\theta) / \Delta_\disc}}
    \right)^k\\
\text{L'Hôpital} &= \left(
        \lim_{\Delta_\disc \to \infty}
        \frac{\frac{\mathrm{d}}{\mathrm{d} \Delta_\disc} 1 - e^{-\frac{1}{\theta \Delta_\disc}}}
        {\frac{\mathrm{d}}{\mathrm{d} \Delta_\disc} 1-e^{(i t - 1/\theta) / \Delta_\disc}}
    \right)^k\\
&= \left(
 \lim_{\Delta_\disc \to \infty}
-\frac{e^{-\frac{1}{\theta \Delta_\disc}}}{\theta \Delta_\disc^2} \cdot
   \frac{\Delta_\disc^2}
   {e^{\frac{i t - 1/\theta}{\Delta_\disc}}(i t - 1/\theta)}
   \right)^k\\
&= \left(
 \lim_{\Delta_\disc \to \infty}
\frac{e^{-\frac{1}{\theta \Delta_\disc} - \frac{i t}{\Delta_\disc} +\frac{1}{\theta \Delta_\disc}}}{-\theta (i t - 1/\theta))}
   \right)^k\\
&= \left(
 \lim_{\Delta_\disc \to \infty}
\frac{e^{- \frac{i t}{\Delta_\disc}}}{1 - i t \theta}
   \right)^k\\
&= \left(
        \lim_{\Delta_\disc \to \infty}
        \frac{i e^{\frac{- i t \theta}{\Delta_\disc \theta}}}
             {i + t \theta}
    \right)^k\\
&= \left(
    \frac{i}{i + t \theta}
   \right)^k\\
&= \varphi(\Gamma(k, \theta)). &\qedhere
\end{align*}
\end{proof}

\begin{proposition}\label{prop:arete-conv}
Let $Z_{\Delta_\disc} \sim \Lap(\lambda) + \GDL(k, \frac{1}{\theta \Delta_\disc})/\Delta_\disc$.
Then $Z_{\Delta_\disc} \xrightarrow{dist} \Arete(k, \theta, \lambda)$ as $\Delta_\disc \to \infty$.
\end{proposition}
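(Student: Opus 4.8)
\textbf{Proof proposal for Proposition~\ref{prop:arete-conv}.}
The plan is to build directly on \Cref{lem:nb-to-gamma}, which already establishes the key convergence at the level of a single negative binomial share. The target distribution $\Arete(k,\theta,\lambda)$ is, by \Cref{def:arete}, the law of $Z_1 - Z_2 + Z_3$ with $Z_1, Z_2 \overset{\text{i.i.d.}}{\sim} \Gamma(k,\theta)$ and $Z_3 \sim \Lap(\lambda)$ independent. On the other side, by \Cref{def:gdl} the random variable $\GDL(k, \tfrac{1}{\theta \Delta_\disc})$ is the difference $W_1 - W_2$ of two i.i.d.\ $\NB(k, 1 - e^{-1/(\theta \Delta_\disc)})$ variables, so after dividing by $\Delta_\disc$ we have $\GDL(k, \tfrac{1}{\theta \Delta_\disc})/\Delta_\disc = W_1/\Delta_\disc - W_2/\Delta_\disc$, and $Z_{\Delta_\disc} = \Lap(\lambda) + W_1/\Delta_\disc - W_2/\Delta_\disc$ with all three terms independent.

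First I would invoke L\'evy's continuity theorem, exactly as in \Cref{lem:nb-to-gamma}, so that it suffices to prove pointwise convergence of the characteristic function $\varphi(Z_{\Delta_\disc})$ to $\varphi(\Arete(k,\theta,\lambda))$. The essential point is that the characteristic function factorizes over the three independent summands. The $\Lap(\lambda)$ term contributes a fixed factor $\varphi(\Lap(\lambda))(t)$ that does not depend on $\Delta_\disc$ at all, so it passes through the limit untouched. For the two scaled negative binomial terms, I would apply \Cref{lem:nb-to-gamma}: it shows $W_j/\Delta_\disc \xrightarrow{dist} \Gamma(k,\theta)$, equivalently $\varphi(W_j/\Delta_\disc)(t) \to \varphi(\Gamma(k,\theta))(t)$ pointwise. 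Since $W_2/\Delta_\disc$ enters with a negative sign, its characteristic function is the conjugate/reflected version $\varphi(W_2/\Delta_\disc)(-t)$, which converges to $\varphi(\Gamma(k,\theta))(-t)$.

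The computation then reduces to combining these limits:
\begin{align*}
\lim_{\Delta_\disc \to \infty} \varphi(Z_{\Delta_\disc})(t)
&= \varphi(\Lap(\lambda))(t) \cdot \lim_{\Delta_\disc \to \infty} \varphi(W_1/\Delta_\disc)(t) \cdot \lim_{\Delta_\disc \to \infty} \varphi(W_2/\Delta_\disc)(-t) \\
&= \varphi(\Lap(\lambda))(t) \cdot \varphi(\Gamma(k,\theta))(t) \cdot \varphi(\Gamma(k,\theta))(-t) \\
&= \varphi(\Arete(k,\theta,\lambda))(t),
\end{align*}
where the last equality is just the factorization of $\varphi(\Arete(k,\theta,\lambda))$ over its three independent constituents $Z_1, -Z_2, Z_3$. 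Because a product of convergent sequences converges to the product of the limits, and each individual factor converges pointwise in $t$, the overall product converges pointwise, which is all that L\'evy's theorem requires.

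I expect essentially no hard obstacle here: the heavy lifting — the L'H\^opital limit computation showing $\varphi(\NB(\cdot)/\Delta_\disc) \to \varphi(\Gamma(k,\theta))$ — has already been carried out in \Cref{lem:nb-to-gamma}, and the present proposition is a routine consequence of independence and the multiplicativity of characteristic functions under sums of independent variables. The only minor care needed is to handle the sign flip on the $-Z_2$ term correctly (reflecting $t \mapsto -t$ in the relevant factor) and to note that multiplying finitely many pointwise-convergent factors preserves convergence; neither is delicate. If one wanted to avoid even the modest bookkeeping, an alternative would be to observe that $\GDL(k,\tfrac{1}{\theta\Delta_\disc})/\Delta_\disc$ converges in distribution to $\Gamma(k,\theta) - \Gamma(k,\theta)$ (the difference of two independent gammas) as an immediate corollary of \Cref{lem:nb-to-gamma} applied to each share, and then add the independent, $\Delta_\disc$-independent $\Lap(\lambda)$ noise using the fact that convergence in distribution is preserved under adding an independent fixed summand (continuous mapping / Slutsky-type argument). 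Either route yields the claim.
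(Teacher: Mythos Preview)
Your proposal is correct and matches the paper's approach exactly: the paper's proof is the single line ``This follows from \Cref{def:gdl}, \Cref{def:arete}, and \Cref{lem:nb-to-gamma},'' and you have simply spelled out the implicit details (factorizing the characteristic function over independent summands, handling the sign flip, and invoking L\'evy's continuity theorem) that the paper leaves to the reader.
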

\begin{proof}
This follows from \Cref{def:gdl}, \Cref{def:arete}, and \Cref{lem:nb-to-gamma}.
\end{proof}

\begin{remark}
 The convergence result in \Cref{prop:arete-conv} shows that the Arete mechanism is quite similar to the GDL mechanism transformed with real support via the approach in \Cref{thm:discrete-to-cont}. The primary difference is how large $\Delta_\disc$ gets. Using \Cref{thm:discrete-to-cont}, we only set $\Delta_\disc = O(e^{\eps/3})$, allowing the resulting (discrete) distribution to have "holes" in its support. The purpose of the Laplace noise in that case is to smooth out the holes and ensure support on $\mathbb{R}$. On the other hand the Arete (via \Cref{prop:arete-conv}) requires $\Delta_\disc \to \infty$ (with no holes in its support), and the purpose of the Laplace noise is to smooth out the resulting \emph{singularity} at 0 for sufficiently small values of $k$.\footnote{See \cite[Page 3]{pagh2022infinitely} for further discussion on this point.} As such, the proof technique for \Cref{thm:discrete-to-cont} cannot be immediately used to help prove (or disprove) \Cref{conj:arete}. 
\end{remark}
\newcommand{\sdiff}{S_{\mathrm{diff}}}

\subsection{Parameterized Difference Set and The multi-scale discrete Laplace Mechanism}

Recall that, for privacy analysis, we usually only consider the sensitivity of the function $q$, which is defined as $\Delta(q) = \max_{x,x'} |q(x)-q(x')|$ where the maximum is over all pairs $x$ and $x'$ differing on one entry. In this section, we show that, if we parameterized the potential different $q(x) - q(x')$ values in a more fine-grained manner, we can achieve an improved error in certain cases. 

For a given query function $q: X^d \to \mathbb{R}$, we define the \emph{difference set} of $q$, denoted by $\sdiff(q)$, as the set of all possible values of $|q(x) - q(x')|$ for all pairs $x$ and $x'$ differing on one entry. If $\sdiff(q)$ is finite, we let the $\sdiff(q)$-multi-scale discrete Laplace Mechanism to be the mechanism that outputs $q(x) + \sum_{i \in \sdiff(q)} i \cdot X_i$ where $X_i \overset{\text{i.i.d.}}{\sim} \DLap(\eps)$ for all $i \in \sdiff(q)$.

\begin{theorem} \label{thm:msdlap-diffset}
For any query function $q: X^d \to \mathbb{R}$ such that $\sdiff(q)$ is finite, the $\sdiff(q)$-multi-scale discrete Laplace Mechanism is $\eps$-DP. Furthermore, for $\eps \geq 1$, its MSE is $O\left(e^{-\eps} \cdot \sum_{i \in \sdiff(q)} i^2\right)$.
\end{theorem}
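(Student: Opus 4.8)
The plan is to mirror the proof of \Cref{thm:multi-scale-dlap} almost verbatim, replacing the scale set $\{1, \dots, \Delta\}$ by $\sdiff(q)$. The one structural observation that makes this work is that the mechanism allocates an independent $\DLap(\eps)$ share to \emph{every} element of $\sdiff(q)$, and $\sdiff(q)$ is defined to contain exactly those values $|q(x) - q(x')|$ that a neighboring pair can produce. Hence every admissible shift of the output has a dedicated matching scale, which is precisely what the per-scale post-processing argument needs. Note that here the mechanism is tied to the specific query $q$ rather than to an abstract sensitivity bound, so $\eps$-DP is the direct statement $\dr{q(x) + Z}{q(x') + Z} \le \eps$ for all neighboring $x, x'$, where $Z = \sum_{i \in \sdiff(q)} i \cdot X_i$.

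For privacy, I would first use translation invariance of the max divergence to write $\dr{q(x) + Z}{q(x') + Z} = \dr{\xi + Z}{Z}$ with $\xi = q(x) - q(x')$. By the definition of the difference set $|\xi| \in \sdiff(q)$, and since $Z$ is symmetric about $0$ I may assume $\xi \in \sdiff(q)$ with $\xi \ge 0$ (the case $\xi = 0$ being trivial). The key step is to isolate the matching share: decompose $Z = \xi \cdot X_\xi + W$ where $W = \sum_{i \in \sdiff(q),\, i \neq \xi} i \cdot X_i$ is independent of $X_\xi$. Adding $W$ is a post-processing operation, so by \Cref{lem:post-processing},
\begin{align*}
\dr{\xi + Z}{Z} \le \dr{\xi + \xi \cdot X_\xi}{\xi \cdot X_\xi} = \dr{1 + X_\xi}{X_\xi} \le \eps,
\end{align*}
where the middle equality holds because scaling by the positive constant $\xi$ is an invertible deterministic map (preserving the divergence), and the final inequality is exactly the fact that the $\DLap(\eps)$-noise addition mechanism is $\eps$-DP for sensitivity $1$.

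For accuracy, I would compute the variance directly from independence of the shares, using $\Var(\DLap(\eps)) = \tfrac{1}{\cosh(\eps) - 1}$:
\begin{align*}
\Var\left(\sum_{i \in \sdiff(q)} i \cdot X_i\right) = \sum_{i \in \sdiff(q)} i^2 \cdot \Var(\DLap(\eps)) = \frac{1}{\cosh(\eps) - 1}\sum_{i \in \sdiff(q)} i^2,
\end{align*}
and then observe that for $\eps \ge 1$ one has $\tfrac{1}{\cosh(\eps) - 1} = O(e^{-\eps})$, yielding the claimed $O\!\left(e^{-\eps} \sum_{i \in \sdiff(q)} i^2\right)$ bound. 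I do not expect any genuine obstacle: the proof is a direct generalization, and the only point requiring care is the reduction to a single scale, which goes through precisely because $\sdiff(q)$ was defined to capture every possible difference, so each admissible shift $\xi$ can be absorbed by its own share $X_\xi$, with the symmetry of $\DLap$ handling negative shifts.
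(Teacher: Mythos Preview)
Your proposal is correct and follows essentially the same argument as the paper: isolate the matching scale $\xi \in \sdiff(q)$, apply post-processing to drop the remaining summands, and use the $\DLap(\eps)$ guarantee on the single share; the accuracy bound is likewise the same direct variance computation. If anything, you are slightly more explicit than the paper about why the reduction to one scale is valid (the decomposition $Z = \xi X_\xi + W$ and the invertibility of scaling), but the approach is identical.
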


\begin{proof}
\emph{(Privacy)} To show that this mechanism is $\eps$-DP, it suffices to show that $\dr{q(x) + \sum_{i \in \sdiff(q)} i \cdot X_i}{q(x') + \sum_{i \in \sdiff(q)} i \cdot X_i} \leq \eps$ for any pair $x, x' \in X^d$ that differs on one entry. Consider any such fixed pair of $x, x'$. Let $\xi = q(x) - q(x')$; due to symmetry of the noise around zero, we may assume that $\xi \geq 0$. If $\xi = 0$, the statement is clearly true. Otherwise, from definition of $\sdiff$, we have $\xi \in \sdiff$.

From \Cref{lem:post-processing}, we have
\begin{align*}
&\dr{q(x) + \sum_{i \in \sdiff(q)} i \cdot X_i}{q(x') + \sum_{i \in \sdiff(q)} i \cdot X_i} \\
&=
\dr{\xi + \sum_{i\in \sdiff(q)} i \cdot X_i}{\sum_{i\in \sdiff(q)} i \cdot X_i} \\
&\leq \dr{\xi + \xi \cdot X_{\xi}}{\xi \cdot X_{\xi}} \\
&= \dr{1 + X_{\xi}}{X_{\xi}} \leq \eps,
\end{align*}
where the last inequality follows from $X_{\xi} \sim \DLap(\eps)$. Thus, the mechanism is $\eps$-DP as desired.

\emph{(Accuracy)} The MSE of the mechanism is
\begin{align*}
\Var\left(\sum_{i \in \sdiff(q)} i \cdot X_i\right) 
&=  \sum_{i \in \sdiff(q)} i^2 \cdot \Var(X_i)\\ 
&= O\left(e^{-\eps} \cdot \sum_{i \in \sdiff(q)} i^2\right). & \qedhere
\end{align*}
\end{proof}

To see the advantage of the above mechanism, we note a few scenarios where
this mechanism has MSE $O(\Delta^2 \cdot e^{-\eps})$, but where approaches
which consider the sensitivity alone must have MSE at least $\Omega(\min\{\Delta^3 e^{-\eps}, \Delta^2 e^{-2\eps / 3}\})$ \cite{geng2014optimal, stair2015, stair2016adding}, which is asymptotically larger. In each example
we consider a query $q$ with sensitivity $\Delta$.

\begin{itemize}
\item $\sdiff(q) \subseteq [\lceil \Delta^{2/3} \rceil] \cup \{\Delta\}$, i.e. there is one large possible difference, but possibly many small ones far from $\Delta$.
\item $\sdiff(q) = \{n^m : m \le \log_n(\Delta) \in \mathbb{Z}_{\ge 0}\}$ for fixed $n, m > 0$ i.e. differences are structured to form exponential buckets.
\end{itemize}

Finally, we note that the setting where $|\sdiff(q)|$ is small (even when $\Delta(q)$ is large) can occur in practice. As an example, imagine a simple merchant that sells items whose prices are all in the set $S = \{5, 10, 30, 100\}$ and they want to privately sum a database of sales where each row is the sale price of a single item. If a neighboring dataset adds or removes a row, it is clear the sensitivity of this query is $\Delta=100$. For $\epsilon = 10$, the continuous staircase will have MSE\footnote{MSE results rounded to two significant figures.} 8.5, but the MSDLap mechanism with $\sdiff(q) = S$ will have MSE 1.0, resulting in nearly an order of magnitude improvement.

\end{document}